\newlength{\oldarraycolsep}
\renewcommand{\phi}{\varphi}
\newcommand{\FFF}{{\cal F}}
\newcommand{\LLL}{{\cal L}}
\newcommand{\OOO}{{\cal O}}
\newcommand{\PPP}{{\cal P}}
\newcommand{\SSS}{{\cal S}}
\newcommand{\TTT}{{\cal T}}
\newcommand{\WWW}{{\cal W}}
\newcommand{\N}{{\mathbb N}}
\newcommand{\sth}{\mathrel : }
\theoremstyle{plain}
\newtheorem{theorem}{Theorem}[section]
\newcommand{\newtheoremwithcrefformat}[2]{%
  \newtheorem{#1}[theorem]{\textsc{#2}}%
  \crefformat{#1}{##2\MakeUppercase#1~##1##3}%
  \Crefformat{#1}{##2\MakeUppercase#1~##1##3}%
}
  \newtheorem{lemma}[theorem]{Lemma}%
  \newtheorem{corollary}[theorem]{Corollary}%
\theoremstyle{nonumberplain}
\newenvironment{cenv}{\begin{list}{}{%
      \setlength{\labelwidth}{1.5em}%
      \setlength{\leftmargin}{\labelwidth}%
      \addtolength{\leftmargin}{\labelsep}%
      \setlength{\listparindent}{0em}%
      \setlength{\topsep}{10pt}%
      \setlength{\itemsep}{5pt}%
      \setlength{\parsep}{0pt}%
    }
  }{
  \end{list}
}
\newcounter{claimcounter}
\newcounter{conditioncounter}
\newenvironment{Claim}{
  
  \refstepcounter{claimcounter}
  \begin{cenv}
  \item[{Claim \arabic{claimcounter}.}]
  }{
  \end{cenv}
}
\newenvironment{ClaimProof}[1][]{\noindent{%
\ifthenelse{\equal{#1}{}}{{\itshape Proof.\ }}{{\itshape #1.\ }}%
}}{\hspace*{1em}\nobreak\hfill$\dashv$\endtrivlist\addvspace{2ex plus
0.5ex minus0.1ex}}
\newenvironment{proof}[1][]%%
{\setcounter{claimcounter}{0}\ifthenelse{\equal{#1}{}}{\noindent\textit{Proof.
    }}{\noindent\textit{#1. }}}%
{\hspace*{1pt}\hfill$\Box$\par\bigskip}
\def\cqedsymbol{\ifmmode$\lrcorner$\else{\unskip\nobreak\hfil
\penalty50\hskip1em\null\nobreak\hfil$\lrcorner$
\parfillskip=0pt\finalhyphendemerits=0\endgraf}\fi}
\newcommand{\dtw}{\textit{dtw}}
\newcommand{\sBound}{\textit{Bound}}
\newcommand{\lsplit}{\textit{split}}
\newcommand{\lBound}{\textit{bound}}
\newcommand{\dtwbound}{f_{\ref{thm:directed-grid}}(\wallbound)} 
\newcommand{\wallbound}{189} 
\newcommand{\Abs}[1]{| #1 |}
\newcommand{\problemdef}[3]{
  \begin{center}
    \begin{minipage}{0.95\textwidth}
      \noindent
         \vspace{-0.5em}\colorbox{gray!17!white}{\textsc{#1}}
      
      \vspace{4pt}
      \setlength{\tabcolsep}{3pt}
      \begin{tabularx}{\textwidth}{@{}lX@{}}
        \textbf{Input:} 		& #2 \\
        \textbf{Question:} 	& #3
      \end{tabularx}
    \end{minipage}
  \end{center}
 }
\title{The Directed Disjoint Paths Problem with Congestion}
  \author[2]{Matthias Bentert
    \thanks{\texttt{bentert@tu-berlin.de}}} \author[2]{Dario
    Cavallaro\thanks{\texttt{d.cavallaro@tu-berlin.de}}} \author[2]{Amelie Heindl
    \thanks{\texttt{a.heindl@tu-berlin.de}}} \author[3,4]{Ken-ichi
    Kawarabayashi \thanks{\texttt{k\_keniti@nii.ac.jp} }} \author[2]{Stephan Kreutzer
    \thanks{\texttt{stephan.kreutzer@tu-berlin.de}}}
  \author[2]{Johannes Schr\"oder
    \thanks{\texttt{j.schroeder.1@tu-berlin.de}}}
  \affil[2]{Technical
    University Berlin, Germany} \affil[3]{National Institute of
    Informatics, Japan} \affil[4]{The university of Tokyo, Japan}
  \date{}
\newcommand{\cidp}{\textsc{$k$-Disjoint Directed Paths with Congestion~$c$}}
\colorlet{color1}{red}
\colorlet{color2}{green}
\colorlet{color3}{blue}
\colorlet{color4}{magenta}
\colorlet{color5}{brown}
\colorlet{color6}{orange}
\colorlet{color7}{violet}
\colorlet{color8}{cyan}
\colorlet{color9}{purple}
\begin{document}
\maketitle
\begin{abstract}
The classic result by Fortune, Hopcroft, and Wyllie [TCS~'80] states that the directed disjoint paths problem is NP-complete even for two pairs of terminals. 
Extending this well-known result, we show that the directed disjoint paths problem is NP-complete for any constant congestion $c \geq 1$ and~$k \geq 3c-1$ pairs of terminals. This refutes a conjecture by Giannopoulou et al. [SODA~'22], which says that the directed disjoint paths problem with congestion two is polynomial-time solvable for any constant number~$k$ of terminal pairs.
We then consider the cases that are not covered by this hardness result. The first nontrivial case is $c=2$ and $k = 3$.
Our second main result is to show that this case is polynomial-time solvable.
\end{abstract}

\newpage

\section{Introduction}

The \textsc{Disjoint Paths} problem is defined as follows. The input consists
of a graph $G$ and $k$
pairs $(s_1, t_1), \dots,$ $(s_k, t_k)$ of vertices, called the
\emph{sources} and \emph{sinks}.
For any $c\geq 1$, a \emph{solution with congestion $c$}
is a set $P_1, \dots, P_k$ of paths such that $P_i$ links $s_i$ and
$t_i$ and every vertex $v\in V(G)$ is contained in at most $c$ of the
paths. For digraphs, the paths $P_i$ are
required to be directed from $s_i$ to $t_i$.
In case $c=1$ ($c=2$, respectively) we call $P_1, \dots,P_k$ an
\emph{integral} (\emph{half-integral}, respectively) solution.

The \textsc{Disjoint Paths} problem is one of the most fundamental problems in graph algorithms and combinatorial optimization. 
According to Menger's theorem, it can be solved in polynomial time on undirected and directed graphs if we only ask for a set of disjoint paths, each having one end in~$S=\{s_1, \dots, s_k\}$ and the other in~$T=\{t_1, \dots, t_k\}$. But the situation changes completely if we require that the paths connect each source $s_i$ to its corresponding target $t_i$; it is well-known that this problem is NP-complete for vertex-disjoint paths on directed and undirected graphs \cite{Karp75,FHW80}. 

For undirected graphs, Robertson and Seymour developed an algorithm for the~\textsc{$k$-Disjoint Paths} problem, which runs in time~$O(n^3)$ for any fixed number~$k$ of terminal pairs \cite{GMXIII}. In the terminology of parameterized complexity, they showed that the problem is fixed-parameter tractable parameterized by the number~$k$ of terminal pairs. The complexity has subsequently been improved to quadratic time in \cite{kkr2012} and very recently improved to almost linear time in \cite{2024minor}. Robertson and Seymour developed the algorithm as part of their celebrated series of papers on graph minors. The correctness proof is still long and difficult and relies on large parts of the graph minors series (even with the new unique linkage theorem in \cite{unique}). Still, the algorithm is much less complicated and essentially facilitates a reduction rule that reduces any input instance to an equivalent instance of bounded tree width, which can be solved quickly.

For directed graphs, the~\textsc{$k$-Disjoint Paths} problem is much more difficult. As shown by Fortune, Hopcroft, and Wyllie  \cite{FHW80} in 1980, the problem is NP-complete already for~$k=2$ terminal pairs. This implies that it is not fixed-parameter tractable (parameterized by $k$) and not even in the class XP (under the usual complexity theoretic assumptions that we tacitly assume throughout the introduction).

Therefore, 
to attack the \textsc{$k$-Disjoint Paths} problem for general digraphs 
we need to relax the notion of ``disjoint.'' Let us first look at the undirected case. The natural multicommodity flow relaxation of 
the disjoint paths problem provides strong information for designing approximation algorithms when congestion is at least two. 
In the multicommodity flow problem, one wants to assign non-negative real values to paths connecting given terminal pairs, such that for each terminal pair the sum of the values of the paths connecting its elements is at least 1 and for each vertex in the graph the sum of the values of the paths containing it is at most 1.
Finding disjoint paths is equivalent to solving the multicommodity flow problem with the restriction to only use the integers 0 and 1 as values.
The problem relaxation leads to the integrality gap, as using non-integer values in the approximation  permits results differing from the 0-1-solution (see \cite{frank1990packing} for more details).
Half-integral paths can be regarded as the counterpart to multicommodity flows with half-integer values.
By imposing that each 
vertex is used in at most two of the paths, one can change the global structure of the routing problem. 
For example, Kleinberg \cite{kleinberg} uses a grid minor as a ``crossbar'' for the half-integral disjoint paths problem, which leads to a faster algorithm (even faster algorithm in \cite{kawarabayashi2009nearly}) with a relatively short proof for the correctness of the algorithm, while Robertson and Seymour's algorithm for the disjoint paths problem \cite{GMXIII} needs a big clique minor as a crossbar, which in turn requires all of the graph minors series. 
So allowing half-integrality 
greatly reduces the computational complexity, as well as the technical difficulties, of the disjoint paths problem. This shares 
similarities with the fact that good characterizations are known for the multicommodity flow problem with half-integer flow values for which the 
corresponding disjoint paths problem is intractable 
(see \cite{frank1990packing, middendorf1993complexity} for more details).

Taking the state-of-the-art literature on undirected graphs into account, when we relax ``disjoint'' to ``half-integral,'' it is quite natural to expect that the same phenomena could apply to the directed case, i.e.,~that, the \textsc{Directed Disjoint Paths} problem becomes tractable once we allow half-integral paths. This is stated as an open problem in \cite[Problem 9.5.8]{bang2018classes}. Indeed, that was a common assumption in the field, with several papers proving special cases of the problem or closely related results. 

In \cite{GiannopoulouKKK2022}, the authors consider a related setting for digraphs and consider the following problem.
Given a digraph and $k$ pairs of terminals, either find $k$ half-integral disjoint paths, each connecting the respective pair of terminals or determine that there are no $k$ disjoint paths connecting these pairs of terminals.
This can be regarded as a relaxation of the \textsc{$k$-Half-Integral Disjoint Paths} problem, with the additional permission to not return a half-integral solution as long as no completely disjoint solution exists.
They obtain a polynomial time algorithm for this weaker setting by using a directed grid minor as a ``crossbar'' to route the paths.
Upon having achieved a tractability improvement by allowing congestion, they conjectured that a polynomial time algorithm also exists for the more general \textsc{$k$-Half-Integral Disjoint Paths} problem (for any fixed number $k$ of terminals)~\cite[Conjecture 1.5]{GiannopoulouKKK2022}.

In \cite{EdwarsMW2017}, this conjecture was confirmed for highly connected digraphs. More precisely, it was shown that there is an absolute constant $d$ such that for each $k \geq 2$, there exists a function~$L(k)$ such that the \textsc{$k$-Half-Integral Disjoint Paths} problem is solvable in time $O(|V(G)|^d)$ for a strongly $L(k)$-connected directed graph $G$.

  This result was later improved by Campos et al. in \cite{CamposCLS2023}, who reduced the bound on the strong connectivity required for half-integral linkages to exist. In the same paper, the authors also introduce new types of separators geared towards proving tractability results for half-integral disjoint paths problems in directed graphs. 
  If congestion $8$ is allowed, the bounds have further been improved by Masa{\v r}{\'i}k et al. \cite{MasarikPRS2022}, which then have been improved even further by Lopes et al. in \cite{lopes2024constantcongestionlinkagespolynomially}.

  The previous results settle the tractability of the half-integral and congestion $c$ disjoint paths problems on very highly connected digraphs. The other extreme case was settled by Amiri et al. \cite{AmiriKMR19} who showed that the \textsc{$k$-Directed Paths} problem can be solved in time $n^{O(d)}$ on acyclic digraphs if congestion $k-d$ is allowed. This aligns with the known algorithm for the \textsc{$k$-Disjoint Paths} problem on acyclic digraphs running in time $n^{O(k)}$. This algorithm for DAGs was generalized by Johnson et al. \cite{JohnsonRST2001}. They showed that for each fixed $k \geq 1$, the \textsc{$k$-Disjoint Paths} problem can be solved in polynomial time on any class of digraphs of bounded directed tree-width. (See \cref{sec:dtw} for a definition.)

  Low congestion routing in directed graphs has also been studied in other contexts, see e.g.,~\cite{ChekuriE2015, ChekuriEP2016} and the references therein.

  Despite this considerable effort to prove that the \textsc{$k$-Half-Integral Disjoint Paths} problem can be solved in polynomial time for any fixed number $k$ of terminal pairs, no such algorithms have been found to date. In fact, despite some attempts, a polynomial time algorithm is not even known for the simplest non-trivial case of $k=3$ terminal pairs (the problem is trivial for $k\leq 2$).
  
  In this paper, we settle the problem for almost all $k \in \N$. 

  Our first main contribution is to show that the \textsc{$k$-Half-Integral Disjoint Paths} problem is NP-complete for all $k \geq 5$. This result is somewhat surprising as it was widely believed in the community that the problem would be tractable (and as mentioned above, there are a lot of partial results).
  In fact we show the following stronger result, which also generalizes the above-mentioned result by Fortune, Hopcroft, and Wyllie  \cite{FHW80}.

\begin{restatable}{theorem}{nphardness}
    \label{thm:nphard}
    \cidp{} is NP-complete for any constant~$c \geq 1$ and any~$k \geq 3c-1$.
\end{restatable}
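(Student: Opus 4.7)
Plan: Membership in NP is immediate: a list of $k$ directed paths forms a polynomial-size certificate that can be checked in polynomial time for both path-validity and the congestion bound. For NP-hardness I propose a polynomial reduction from the $2$-pair directed disjoint paths problem (the FHW problem), which is NP-complete by Fortune, Hopcroft, and Wyllie. Given an FHW instance $(G,(s_1,t_1),(s_2,t_2))$, I would construct a digraph $G'$ with exactly $3c-1$ source-sink pairs so that the congestion-$c$ problem on $G'$ is feasible iff the FHW instance is. Extending to $k>3c-1$ is then handled by appending $k-(3c-1)$ trivial pairs, each connected by a single fresh directed edge between new vertices.

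The construction of $G'$ would use three classes of pairs. For each $i\in\{1,2\}$, I add $c$ "main" pairs $(s_i^{(j)},t_i^{(j)})$ anchored at $(s_i,t_i)$ via fresh source and sink vertices whose unique out- and in-edges lead into $s_i$ and out of $t_i$. For each $i\in\{1,2\}$ I also add $c-1$ "dummy" parallel routes $s_i\to d_i^{(j)}\to t_i$ via fresh intermediate vertices. Finally, I add $c-1$ "enforcer" pairs connected through an auxiliary chain gadget built from degree-one in/out aux vertices, designed so that each enforcer path is forced to sweep through every internal vertex of $G$ and every dummy $d_i^{(j)}$ exactly once. I may assume WLOG that $s_i$ has no in-edges and $t_i$ has no out-edges in $G$, which (by the anchoring) prevents pair-$i$ anchors and dummies from being reached by opposite-type main paths. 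In the forward direction, given disjoint $(P_1,P_2)$ in $G$, I route one main pair-$i$ path along $P_i$, the remaining $c-1$ main pair-$i$ paths through distinct dummies $d_i^{(j)}$, and each enforcer through the sweep; a direct count verifies congestion at most $c$ everywhere.

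For the reverse direction, in any feasible solution the $c-1$ enforcer paths consume $c-1$ of the $c$ units of capacity at every internal $G$-vertex and every dummy, leaving exactly one free unit. Consequently the $2c$ main paths together use each internal $G$-vertex at most once and each dummy at most once, which forces pair-$1$ and pair-$2$ main paths to be vertex-disjoint on the internal vertices of $G$. Since there are only $c-1$ dummies for each type but $c$ main paths per type, a pigeonhole argument forces at least one pair-$1$ main path and at least one pair-$2$ main path to traverse internal vertices of $G$; these two paths witness vertex-disjoint $s_i$-$t_i$ routes in $G$, i.e.\ an FHW solution. The main obstacle will be rigorously verifying the enforcer gadget: I must show that any feasible solution truly routes each enforcer along the intended sweep, visiting every target vertex exactly once and not shortcutting through $G$ in a way that skips some $v\in V(G)$ or some dummy $d_i^{(j)}$. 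This is precisely where the bound $k\geq 3c-1$ becomes tight: with fewer enforcers, too much slack is left at the internal vertices and dummies, allowing the $2c$ main paths to interleave across shared vertices of $G$ even when the underlying FHW instance has no solution.
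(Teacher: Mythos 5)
Your high-level plan takes a genuinely different route from the paper. The paper reduces from \textsc{3-Sat}, building $3m$ copies of a bespoke \emph{switch gadget} (\cref{fig:switch}, \cref{lem:switch}) so that a bundle of $c$ paths is forced to make a unanimous binary choice at each position of the formula; variable and clause gadgets then thread these choices together as in Fortune--Hopcroft--Wyllie, but with the shortcut problem caused by congestion resolved inside the switch. You instead propose to reduce directly from the $2$-pair directed disjoint paths problem, using $c-1$ ``enforcer'' pairs to soak up $c-1$ units of capacity at every internal vertex so that the remaining $2c$ ``main'' paths behave like disjoint paths. The counting framework you set up is sound as far as it goes: $2c + (c-1) = 3c-1$ matches the theorem, the pigeonhole step forcing at least one pair-$1$ and one pair-$2$ main path through $G$ is correct, and the forward-direction verification is routine.

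The gap is the enforcer gadget itself, and it is not a finishing detail but the entire technical content of the proof. You need each of the $c-1$ enforcers to visit \emph{every} internal vertex of $G$ (and every dummy) and, simultaneously, the auxiliary edges that coerce this sweep must not be usable by the main paths as shortcuts. These two requirements are in direct tension. Any realization of the sweep --- say splitting each $v$ into $v_{\text{in}}$, $v_{\text{out}}$ and chaining $v_{\text{out}}\to u_{\text{in}}$ over some enumeration of $V(G)$ --- necessarily introduces fresh arcs between vertices of $G$ that had no arc in $G$. Since each of those fresh intermediate vertices is hit by only $c-1$ enforcers, one unit of capacity remains for a main path to hop along the chain, producing an $(s_i,t_i)$-walk in $G'$ that does not project to any path in $G$. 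This is exactly the failure mode the authors name when they explain why the original FHW switch cannot be reused under congestion: with spare capacity, ``the first path can simply shortcut.'' You flag this obstacle yourself but do not resolve it, and I do not see an obvious repair that stays within $k = 3c-1$ terminal pairs. The paper's switch gadget circumvents the difficulty by a saturation argument internal to the gadget: it shows, step by step, that once the $c$ red and $2c-1$ blue paths enter, every candidate shortcut vertex becomes saturated, so the paths are funnelled onto one of exactly two symmetric routes. There is no analogue of that saturation argument in your sketch, and without it the reduction is not sound.

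As a minor remark, even if the enforcer gadget were built, you would still need to argue that the enforcers themselves are routable in the forward direction without exceeding congestion at the vertices shared with main paths, and that the anchoring at $s_i,t_i$ really does forbid opposite-type main paths from crossing; both are plausible but not written out. The essential missing piece, though, is the shortcut-free sweep, and that is precisely what the paper's \cref{lem:switch} supplies in its own reduction.
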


In particular, our result refutes the conjecture in \cite{GiannopoulouKKK2022}.
As mentioned above, it was already known that the complexity of the fully \textsc{Disjoint Paths} problem for digraphs is totally different from the complexity of the undirected case. Our result shows that this is also the case even for congestion $c \geq 2$.
Further, \cref{thm:nphard} settles an open question raised in \cite[Section 5]{lopes2022relaxation} regarding the computational complexity of the \textsc{Disjoint Enough Directed Paths} (short \textsc{DEDP}) problem.

However, this strong hardness result does not cover the case of congestion $c=2$ and $k=3$ terminal pairs. Indeed, this case is polynomial-time solvable. 
This is the second main result of this paper.
\begin{restatable}{theorem}{main}
    \label{thm:main}
    The \textsc{$3$-Half-Integral Directed Paths} problem is solvable in
  polynomial time. 
\end{restatable}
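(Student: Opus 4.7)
The plan is to follow the ``irrelevant vertex'' template pioneered by Robertson and Seymour for the undirected disjoint paths problem and adapted to digraphs by Johnson, Robertson, Seymour, and Thomas, combined with the Directed Grid Theorem of Kawarabayashi and Kreutzer. The overall strategy is a win-win dichotomy based on the directed tree-width $\dtw(G)$ of the input digraph~$G$.

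First, I would test whether $\dtw(G) \leq f(3)$ for an appropriate function~$f$. If so, I would extend the dynamic programming algorithm of Johnson et al.\ for the disjoint paths problem on digraphs of bounded directed tree-width to the half-integral setting. With only three terminal pairs and congestion two, the ``interface'' of a partial half-integral routing at each separator is a constant-size object: for each vertex of the separator one records which subset of the three paths passes through it (with multiplicity up to two) and in which direction. Hence the natural bottom-up DP along a directed tree decomposition runs in time $|V(G)|^{O(f(3))}$, disposing of the bounded directed tree-width case.

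If $\dtw(G)$ is large, the Directed Grid Theorem produces a directed grid minor~$W$ of prescribed size $r = r(3)$ inside~$G$, together with a system of paths attaching $W$ to the rest of the graph. The heart of the argument is then an \emph{irrelevant vertex lemma}: a vertex~$v$ deep in the interior of~$W$ has the property that $(G, s_1, t_1, s_2, t_2, s_3, t_3)$ admits a half-integral solution if and only if $(G-v, s_1, t_1, s_2, t_2, s_3, t_3)$ does. Removing $v$ and recursing drives the vertex count down and yields a polynomial-time algorithm overall.

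The main obstacle — and where the specificity of $k=3$ and congestion exactly $c=2$ is exploited — is the proof of this irrelevant vertex lemma. With congestion two, at every vertex of an alleged solution at most two of the three paths coincide, so the ``local interaction patterns'' among the three paths at any vertex are severely restricted. I would enumerate these finitely many patterns and, for each one, establish a routing lemma saying that a sufficiently large directed grid can realize the pattern while avoiding any prescribed interior vertex~$v$. Crucially, \cref{thm:nphard} already tells us that no such grid-routing lemma can hold uniformly for $k \geq 5$, so the argument is forced to use that three paths sharing a vertex at most twice leave a distinguished ``third strand'' available for rerouting through the grid. Executing this case analysis inside a directed grid, whose intrinsic asymmetry in edge directions makes rerouting delicate, is the technically demanding step; once the lemma is in hand, the full algorithm and its polynomial runtime follow by the standard recursive scheme.
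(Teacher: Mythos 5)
Your proposal diverges fundamentally from the paper at the large directed tree-width step, and the divergence is not a matter of taste: the irrelevant-vertex lemma you posit as the ``heart of the argument'' is precisely the approach the authors identify as failing. They point out (citing Theorem~6.2.1 of \cite{milanidigraph}) that already for $k=4$ there are feasible instances with arbitrarily large walls $W$ in which deleting \emph{any} vertex of $V(W)$ destroys all solutions, so no vertex of the wall is irrelevant. Nothing in your sketch isolates a structural feature of $k=3$ that escapes this phenomenon; the appeal to \cref{thm:nphard} rules out $k\ge 5$ but is silent on why $k\le 4$ should admit irrelevant vertices, and the cited $k=4$ construction suggests it does not. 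Concretely: after \cref{thm:routing-wall} fails to route through the wall you get a directed separation $(A,B)$ of order $\le 2$ with the terminals in $A$ and the bulk of the wall in $B$, and because directed separations only block one direction of crossing, a solution can send up to four maximal subpaths into $G[B]$ that, \emph{a priori}, cover all of $V(W)$. Your ``distinguished third strand'' heuristic does not address this, and you give no routing lemma realizing arbitrary boundary patterns inside a directed grid while avoiding a prescribed interior vertex. This is a genuine gap, not a detail to be filled in.

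The paper avoids the issue entirely by a different mechanism. When a $2$-separation $(A,B)$ shields the wall from the terminals, they do not delete anything inside $B$; instead they enumerate, for the at most four boundary subpaths a solution can send into $G[B]$, which boundary ``routing patterns'' are realizable, using the special-case algorithms of \cref{sec:reduction} (\cref{lem:two-source-dp}, \cref{lem:two-source-4flow}, \cref{cor:two-source-4flow}, built on the cut-sequence structure of \cref{lem:sequence-cut-vertices}). They contract $B\setminus A$ to a single ``big'' vertex annotated with this lookup table of feasible routings, iterate (uncrossing the resulting $2$-separations via submodularity, \cref{lem:uncross}), and finally run a DP on a digraph of bounded directed tree-width that respects the lookup tables (\cref{thm:bounded-dtw}). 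Your bounded-$\dtw$ base case is in the right spirit but must be generalized to carry these per-big-vertex feasibility constraints, which is exactly what \cref{thm:bounded-dtw} does. If you want to rescue your write-up, the place to invest effort is to replace the irrelevant-vertex lemma with this ``compress the far side of a $2$-separation into a table'' reduction and to handle the interaction of two crossing $2$-separations, which is the genuinely hard structural work.
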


Note that the cases $k \leq c$ are trivial (each pair can be solved independently), so this result settles the first non-trivial case with $c > 1$. 

\paragraph{Proof overview. } 
First, we give a high-level overview of the algorithm for solving the \textsc{$3$-Half-Integral Directed Paths} problem. 
We start with a standard approach to solving this kind of problems: if the directed tree-width of the input digraph $G$ is bounded by some fixed constant, then we can solve the problem directly by applying an  adaptation of the disjoint paths algorithm by Johnson et al. \cite{JohnsonRST2001}; see \cref{sec:dtw} for details. Otherwise, by the directed grid theorem \cite{KawarabayashiK2015,HatzelKMM2024a}, $G$ contains a large cylindrical wall that can be used for routing provided it is sufficiently connected to the terminals. See \cref{sec:dtw} for details. This approach has been followed in several other papers before.

What remains is to handle the case where $G$ contains a large enough wall $W$ that is not sufficiently linked to the terminals. This is by far the hardest case, as we explain now. Menger's theorem implies that if there are no three disjoint paths from $\bar s := (s_1, s_2, s_3)$ to $W$, then there is a separation $(A_1, B_1)$ in $G$ of order $\leq 2$ such that $A_1$ contains the sources $s_i$ whereas the majority of $W$ is contained in $B_1$. Similarly we get a separation $(A_2, B_2)$ of order $\leq 2$ with $t_i$ in $A_2$ and the majority of $W$ in $B_2$. But these are \emph{directed separations}. That is, while every path from some $s_i$ to $W$ must go through the at most two vertices in $A_1 \cap B_1$, we have no control over how the paths from $W$ to the targets $t_i$ cross from $B_1$ to $A_1$. This is completely different from the undirected case. If $G$ were undirected, we would get a single separation of order $\leq 5$ that simultaneously separates $W$ from the $s_i$ and the $t_i$. But here, we get two different separations $(A_1, B_1)$ and $(A_2, B_2)$, which can cross each other and interact in a complicated way. Controlling the interaction between the paths of a solution and crossing separations in $G$ is a challenging task, and as our hardness result for $k\geq 5$ proves, this is impossible already for five paths and thus separations of order~$\leq 4$.

But for separations of order at most two, this can be done, as we will show next. The key here is that we exploit a separation of order smaller than three by locally routing the paths which cross in its separator together, which allows us to simulate being in the case for 2 paths. Explicitly, when given a separation $(A, B)$ of order at most $2$ with no cross edge from $A$ to $B$ such that $s_i, t_i \in A$, for all $1 \leq i \leq 3$, and $B \setminus A$ non-empty, i.e., containing our wall $W$. Then every path between the terminals that contains a vertex of $B \setminus A$ must enter $B$ through the at most $2$  vertices in $A \cap B$ and then leave $B$ using some edge from $B$ to $A$. In \cref{sec:reduction}, we will show that we can compute all possible ways a solution can take inside $B$ in polynomial time. Unfortunately, it turns out that, given a solution~$\LLL=\{P_1,P_2,P_3\}$ to the instance, the interactions between~$\LLL$ and~$W$ are not as predictable as anticipated. The well-known irrelevant vertex technique due to Robertson and Seymour \cite{GMXXI}, which exploits the fact that one can find vertices in a large enough wall that are not relevant to the existence of a solution, cannot be easily transferred to our setting. In fact, for~$k=4$ one can construct feasible instances on graphs~$G$ containing arbitrarily large walls~$W$ such that deleting \emph{any} vertex in~$V(W)$ renders the instance infeasible, i.e., \emph{no} vertex of the wall is irrelevant \cite[Theorem 6.2.1]{milanidigraph}. In the case of~$k=3$ together with a~$2$-separation~$(A,B)$ separating the wall from the sources as above, a solution~$\LLL$ may induce~$4$ maximal sub-paths in~$G[B]$ starting in~$A\cap B$, which, a priori, may use all the vertices in~$V(W)$. It turns out however that we can solve this special case nonetheless---we do so in \cref{sec:reduction}---marking an important ingredient in the proof of the polynomial-time algorithm for \textsc{$3$-Half-Integral Directed Paths}, and providing some insight on a special case for~$k = 4$ that may be of relevance for future work. 

Once this is computed, we no longer need $B$ as we can always look up possible connections through $B$ in our table. Thus, we can now focus on the possible connections a solution can take inside $A$. If $G[A]$ has bounded directed tree-width, we can use the algorithm from \cref{sec:dtw}. Otherwise, we get another wall $W_2$ and repeat the process. 

The main challenge here is to handle separations that cross each other (see \cref{sec:uncross} for definitions). 
Fortunately, for separations of order at most~$2$ this turns out to be possible. In \cref{sec:constructing-dtw} we show that we can always uncross the two separations $(A_1, B_1)$ and $(A_2, B_2)$. Continuing in this way we split off more and more walls from the graph until we are left with a digraph of bounded directed tree-width. At this point, we apply our algorithm from \cref{sec:dtw} that solves instances of bounded directed tree-width with additional look-up tables for the separations we split off as discussed above. 

\medskip

We now give a brief overview of our NP-hardness result for all $k \geq 5$. Fortune et al. \cite{FHW80} proved that the $2$-disjoint paths problem is NP-complete on directed graphs for $k=2$ terminal pairs. In their proof, they construct a reduction from 3-SAT to the $2$-disjoint paths problem. Given a propositional formula in $3$-CNF, they construct a simple gadget for each variable. The first path of a solution must cross each gadget and, in this way, choose a variable assignment. To verify that this assignment satisfies the formula their proof uses a cleverly constructed gadget for each clause $C$ and each variable $X$ contained in $C$. The gadget can be crossed in two completely separate ways, but using the second path of the solution, they make sure that only one of the two options is possible for each gadget. Thus, the gadgets act as a switch, and choosing the wrong assignment for a variable $X$ disables routing through the clause $C$. This gadget is the core of their argument. 

We want to use a similar approach to show hardness for the half-integral paths problem on $k \geq 5$ terminal pairs. The main problem is that since we allow vertices to be used by two different paths, the switching gadget of \cite{FHW80} no longer works as the first path can simply shortcut without selecting any variable assignment or verifying any such choice. 
In our proof, we use the same general setup and also reduce from 3-SAT. The main difficulty is to design a switching gadget that allows the paths selecting variables to cross in one of two ways without making it possible for the path to take a shortcut. So, the core of the hardness result is a careful design of a switching gadget that allows us to reduce from 3-SAT.

On a high level, we want to ensure that a collection of~$c$ paths have two completely different options to route through the new switch gadget but all~$c$ paths should make the same decision.
Amusingly, we achieve this by first forcing them to each take a different path.
At the heart of our reduction lies the following structure.
Assume that there are three sources~$s_1,s_2,s_3$ and four sinks~$t_1,t'_1,t_2,t_3$.
Our goal will be to either route~$c$ paths from~$s_1$ to~$t_1$ or to route~$c$ paths from~$s_1$ to~$t'_1$ (this will encode our choice of a variable assignment).
Moreover, assume that~$2c-1$ paths have to be routed from~$s_2$ or~$s_3$ to~$t_2$ or~$t_3$ (how many paths connect e.g.,~$s_2$ and~$t_2$ does not matter).
Let there now be~$2c$ additional vertices~$u_1,u_2,\ldots,u_c,v_1,v_2,\ldots,v_c$ and edges~$(u_i,u_{i+1})$ and~$(v_i,v_{i+1})$ for each~$i \in [c-1]$.
We also add the edges~$(s_2,u_1)$, $(s_2,v_1)$, $(s_3,u_1)$, $(s_3,v_1)$, $(u_c,t_2)$, $(v_c,t_2)$, $(u_c,t_3)$, and~$(v_c,t_3)$.
Finally, for each~$i \in [c]$, we add the edges~$(s_1,u_i)$, $(s_1,v_i)$, $(u_i,t_1)$, and~$(v_i,t'_1)$.
See \cref{fig:tinyswitch} for an illustration.
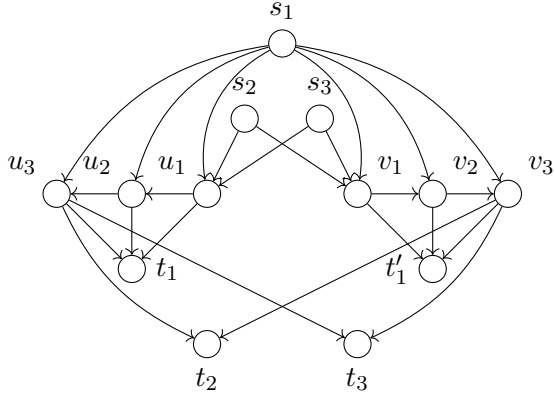
\begin{figure}[t!]
    \centering
    \begin{tikzpicture}
        \foreach \i in {1,2,3}{
            \node[circle,draw,label=above left:$u_{\i}$] at(-\i,0) (u\i) {};
            \node[circle,draw,label=above right:$v_{\i}$] at(\i,0) (v\i) {};
        }
        \node[circle,draw,label=$s_1$] at(0,2) (s1) {};
        \node[circle,draw,label=right:$t_1$] at(-2,-1) (t1) {};
        \node[circle,draw,label=left:$t'_1$] at(2,-1) (tt1) {};
        \node[circle,draw,label=$s_2$] at(-.5,1) (s2) {};
        \node[circle,draw,label=$s_3$] at(.5,1) (s3) {};
        \node[circle,draw,label=below:$t_2$] at (-1,-2) (t3) {};
        \node[circle,draw,label=below:$t_3$] at (1,-2) (t4) {};
        \foreach \i in {1,2,3}{
            \draw[->] (u\i) to (t1);
            \draw[->] (v\i) to (tt1);
            \draw[->,bend right=40-5*\i] (s1) to (u\i);
            \draw[->,bend left=40-5*\i] (s1) to (v\i);
        }
        \foreach \i in {2,3}{
            \pgfmathtruncatemacro\j{\i-1};
            \draw[->] (u\j) to (u\i);
            \draw[->] (v\j) to (v\i);
        }
        \draw[->] (s2) to (u1);
        \draw[->] (s2) to (v1);
        \draw[->] (s3) to (u1);
        \draw[->] (s3) to (v1);
        \draw[->,bend right=20] (u3) to (t3);
        \draw[->] (v3) to (t3);
        \draw[->] (u3) to (t4);
        \draw[->,bend left=20] (v3) to (t4);
    \end{tikzpicture}
    \caption{An illustration of the main idea to force~$c=3$ paths to take a unanimous decision.}
    \label{fig:tinyswitch}
\end{figure}
Note that from the~$2c-1$ paths starting from~$s_2$ and~$s_3$, either~$c$ paths route through all nodes~$u_i$ and~$c-1$ paths route through all nodes~$v_i$ or vice versa.
Assume that~$c$ paths route through~$u_i$ and~$c-1$ route through~$v_i$.
Then, all paths starting in~$s_1$ have to go to a distinct vertex~$v_i$ and from there they all continue to~$t'_1$.
Note that reducing the number of paths in either of the two sets by one breaks the construction.
If there are only~$2c-2$ paths starting in~$s_2$ or~$s_3$, then~$c-1$ paths can go to~$u_1$ and~$c-1$ paths can go towards~$v_1$.
Hence, some paths starting in~$s_1$ can go towards~$t_1$ and others can go towards~$t'_1$.
If the number of paths starting in~$s_1$ is at most~$c-1$, then even if all paths route to the same node (say~$t_1$), this does not block all other paths from using the node and hence does not provide the switching property we desire.
While the above intuitive idea is relatively simple, the actual gadget requires some more care.
In particular, we have to ensure that all paths starting in~$s_2$ or~$s_3$ have to go to~$t_2$ or~$t_3$ and not to~$t_1$ or~$t'_1$.
We present the details of the construction in \cref{sec:hardness}.

\paragraph{Open problems. }
From our results, the only open case for half-integral paths is the case with $k=4$ terminal pairs. It is conceivable that $k=4$ terminal pairs can also be decided in polynomial time; in principle, one could try the same approach outlined above. But computing the possible routings on the side of a separation not containing the terminals and uncrossing order three separations becomes considerably more complicated, and we leave this to future research.

On the hardness side, the remaining problem is whether or not $k \geq 3c-1$ is best possible. If the $k=4$ case can be solved in polynomial time, then this implies that at least  for $c=2$,  \cref{thm:nphard} is best possible. 

Another direction one can consider for the \textsc{$k$-Disjoint Paths} problem for digraphs is restricting the input to some graph family. This direction has already been considered; for example, when the input digraph is planar, the \textsc{$k$-Disjoint Paths} problem is tractable (even fixed-parameter tractable), see \cite{planardp, lex}. Our result, Theorem \ref{thm:nphard}, suggests this is the right direction, even for congestion $c$. So we expect that such results will come in the near future.

\section{Preliminaries}
\label{sec:prelims}
\paragraph{Graph Theory.} Throughout this paper we use well-established notation and results for undirected and directed graphs (or digraphs) recalling the most relevant definitions to this paper below; see \cite{bondy2008graph} and \cite{bang2018classes} respectively. To summarise, we denote undirected and directed graphs by~$G$, and write~$V(G)$ and~$E(G)$ for their vertex and edge-set respectively. We work with standard graph classes such as (directed) paths, walks, and cycles. For simplicity we may write paths~$P\subseteq G$ via~$P=(v_1,\ldots,v_k)$ for~$k = \Abs{V(P)}$ highlighting the associated order on its vertices, and similarly for walks and cycles. We may add the edges to the tuple if we want to highlight the exact edges used by a path, walk or cycle, i.e,~$P=(v_1,e_1,v_2,\ldots,v_{k-1},e_{k-1},v_k)$ where~$e_i=(v_i,v_{i+1})\in E(G)$ for~$1 \leq i \leq k-1$. We call a directed path starting in~$v_1$ and ending in~$v_k$ a~$(v_1,v_k)$-path and if~$v_1 \in S \subseteq V(G)$ and~$v_k \in T \subseteq V(G)$ we call it an~$(S,T)$-path. We write~$G-U$ to denote the graph induced by~$V(G) \setminus U$. Recall that a digraph~$G$ is \emph{strongly connected} if it is either a single vertex, or if for every pair of distinct vertices~$x,y \in V(G)$ there exists a closed walk in~$G$ visiting both vertices, and it is strongly~$k$-connected if deleting~$<k$ vertices in~$G$ leaves the graph strongly connected.

\begin{definition}[Disjoint paths]
    Let~$G$ be a digraph and~$P_1,P_2 \subseteq V(G)$ be two directed paths. We say that~$P_1,P_2$ are \emph{(vertex) disjoint} if~$V(P_1) \cap V(P_2) = \emptyset$. We say that they are \emph{internally disjoint} if for every internal vertex~$v \in V(P_i)$ it holds~$v \notin V(P_1) \cap V(P_2)$ for~$i =1,2$.
\end{definition}

We define directed separations as follows.

\begin{definition}[Directed Separations and Separators]\label{def:separation}
  Let $G$ be a digraph and let $X, Y \subseteq V(G)$. A \emph{cross edge} for
  the pair $(X, Y)$ is an edge $e\in E(G)$ with tail in $Y \setminus X$ and head in $X \setminus
  Y$ or tail in $X \setminus Y$ and head in $Y \setminus X$. We say that $e$ crosses
  \emph{from $X$ to $Y$} if its tail is in $X$, otherwise it crosses
  \emph{from $Y$ to $X$}. 

  A \emph{(directed) separation} in $G$ is a pair $S=(A, B)$ of sets $A, B \subseteq
  V(G)$ such that $A \cup B = V(G)$ and there are no cross edges from $A$
  to $B$ or no cross edges from $B$ to $A$ in $G$. We call $A$ and $B$ the \emph{sides} of $S$.

  We define $\partial^+(S) := A$ and $\partial^-(S) := B$  if there are no cross edges from $B$ to $A$
  and otherwise set $\partial^+(S) := B$ and $\partial^-(S) := A$.

  The \emph{order} of $S$, denoted by $|S|$, is defined as $|A \cap B|$. 
  We call $A \cap B$ the \emph{separator} of $S$. We call separators of size $1$ \emph{cut vertices}.

   Given two sets~$U,W \subseteq V(G)$ such that~$U \subseteq A$ and~$W \subseteq B$ we say that~$(A,B)$ \emph{separates~$U$ from~$W$}. If additionally~$U \subseteq A\setminus B$ and~$W \subseteq B \setminus A$ we say that~$(A,B)$ \emph{internally} separates~$U$ from~$W$.
\end{definition}

For separations $S = (A, B)$ for which there are no cross edges in
either direction the definition of $\partial^+(S)$ and $\partial^-(S)$ is ambiguous.
In these cases we choose any of the two possibilities but in a
consistent way, that is so that $\partial^+((A, B)) = \partial^+((B, A))$.

There is a well known duality between separators and disjoint paths due to Menger.

\begin{theorem}[Menger's Theorem]\label{thm:Menger}
    Let~$G$ be a digraph,~$S,T \subseteq V(G)$, and~$k \in \N$. Then, either there exist~$k$ disjoint~$(S,T)$-paths or there exists a separation~$(A,B)$ of order at most $k-1$ that separates~$S$ from~$T$.

    Similarly, either there exist~$k$ internally disjoint~$(S,T)$-paths or there exists a separation~$(A,B)$ of order~ at most $k-1$ that internally separates~$S$ from~$T$.
\end{theorem}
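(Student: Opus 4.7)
The plan is to deduce both statements from the max-flow min-cut theorem applied to a suitable auxiliary flow network; the argument is classical and I will emphasise only the steps where the paper's precise notion of directed separation must be extracted from a minimum cut.

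First, I would build an auxiliary digraph $N$ by the standard vertex-splitting construction: each vertex $v \in V(G) \setminus (S \cup T)$ is replaced by two copies $v^{-}, v^{+}$ joined by an arc of capacity one; every arc $(u,v) \in E(G)$ becomes an arc of capacity $\infty$ from $u^{+}$ to $v^{-}$; and a super-source $\sigma$ and super-sink $\tau$ are attached by $\infty$-capacity arcs to every vertex in $S$ and from every vertex in $T$ respectively. For the internally disjoint version, the vertices of $S \cup T$ are left undivided (equivalently, assigned a capacity-$\infty$ split-arc); for the fully disjoint version, every vertex, including those in $S \cup T$, receives a capacity-one split-arc. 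By integrality of the maximum flow and max-flow min-cut, the maximum integral $\sigma$-$\tau$ flow in $N$ equals the minimum $\sigma$-$\tau$ cut.

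If the maximum flow has value at least $k$, flow decomposition produces $k$ arc-disjoint $\sigma$-$\tau$ paths in $N$, whose projections to $G$ yield either $k$ internally disjoint $(S,T)$-paths (first version) or $k$ vertex-disjoint $(S,T)$-paths (second version), the capacity-one split-arcs being the mechanism that forbids reuse of internal (respectively, any) vertices. Otherwise the minimum cut has capacity at most $k-1$, and since no $\infty$-capacity arc can belong to a finite cut, the cut consists entirely of split-arcs $(v^{-}, v^{+})$; let $C$ denote the corresponding set of at most $k-1$ vertices of $G$. I would then define $A$ to be $C$ together with every $v \in V(G)$ whose incoming copy is reachable from $\sigma$ in $N$ after removal of the cut, and set $B := (V(G) \setminus A) \cup C$, so that $A \cap B = C$.

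The main delicate step---where I expect to spend the most effort---is verifying that $(A,B)$ is a directed separation in the sense of \cref{def:separation}: namely, that no arc of $G$ runs from $B \setminus A$ to $A \setminus B$. Any such arc would correspond to an $\infty$-capacity arc in $N$ going from a $\sigma$-unreachable vertex to a $\sigma$-reachable one after removal of the cut, which is impossible. The containments $S \subseteq A$ and $T \subseteq B$, and their strengthenings $S \subseteq A \setminus B$ and $T \subseteq B \setminus A$ in the internal version (which hold because the $S$- and $T$-vertices are not split and hence never enter $C$), follow directly from how $S$ and $T$ are attached to $\sigma$ and $\tau$. Together with the bound $|A \cap B| = |C| \leq k-1$ this delivers the required separation.
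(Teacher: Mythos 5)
The paper states Menger's theorem as a classical fact and does not supply a proof, so there is no paper argument to compare against; I will assess your proposal on its own.

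Your overall route (vertex-splitting, max-flow min-cut, integrality, reading the separation off the set of vertices whose ``in'' copy is reachable from $\sigma$ after deleting the cut) is the standard and perfectly serviceable one, and the peripheral details — which vertices get capacity-one split-arcs in each of the two versions, why the separator $C$ is contained in $A \cap B$, why $S \subseteq A$ and $T \subseteq B$, and why $S$, $T$ avoid $C$ in the internal version — are all correct.

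However, the step you yourself flag as the ``main delicate step'' is stated backwards, and the justification you give for it is false. You claim there is no arc of $G$ from $B \setminus A$ to $A \setminus B$ because such an arc would be an $\infty$-capacity arc ``going from a $\sigma$-unreachable vertex to a $\sigma$-reachable one after removal of the cut, which is impossible.'' That implication is wrong: arcs from the unreachable side into the reachable side are perfectly allowed in any cut/reachability argument, and indeed one can cook up small examples in which the separation produced by your construction does have arcs from $B \setminus A$ into $A \setminus B$. What the flow argument actually gives is the opposite direction: with $A$ the set of vertices whose incoming copy lies in the reachable set $R$, an arc $(u,v)$ of $G$ with $u \in A \setminus B$ and $v \in B \setminus A$ would correspond to an uncut $\infty$-capacity arc $(u^{+}, v^{-})$ from $R$ to $V(N) \setminus R$, contradicting the definition of $R$. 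So the correct conclusion is: no cross edges from $A$ to $B$ (tail in $A \setminus B$, head in $B \setminus A$). Fortuitously, the paper's \cref{def:separation} is satisfied by forbidding cross edges in either direction, so once the direction is fixed the argument closes; but as written both the claim and its stated justification are incorrect, and a reader applying $\partial^+$/$\partial^-$ to the resulting separation would get the wrong orientation.

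A smaller imprecision: the $k$ unit paths obtained from the integral flow decomposition are not arc-disjoint in $N$; they merely respect the capacity constraints. In particular they may (and typically do) share the $\infty$-capacity arcs incident to $\sigma$ and $\tau$ and the $\infty$-capacity arcs coming from arcs of $G$. What the decomposition guarantees is only that no capacity-one split-arc is used by two paths, which is exactly what you need, so the conclusion is fine, but the phrase ``arc-disjoint'' overstates what you have.
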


Finally, we define the objects of interest for this paper: \emph{half-integral paths}.

\begin{definition}[Disjoint paths up-to congestion]
Let~$G$ be a digraph and~$c,k \in \N$. Let $P_1,\ldots,P_k$ be paths in~$G$. We say that the paths have \emph{congestion~$c$} if for every~$v \in V(G)$ there are at most~$c$ paths among~$\{P_1,\ldots,P_k\}$ that visit~$v$. If~$c=2$ we say that the paths~$P_1,\ldots,P_k$ are \emph{half-integral}.
\end{definition}

We call a set~$\{P_1,\ldots,P_k\}$ of disjoint paths a \emph{linkage} and call~$k$ its order. Given two sets~$S,T \subseteq V(G)$, an \emph{$(S,T)$-linkage}~$\LLL$ is a collection of~$k$ internally disjoint paths~$\LLL=\{P_1,\ldots,P_k\}$ in~$G$ such that~$P_i$ is an~$(S,T)$-path for~$1 \leq i \leq k$. \emph{Half-integral linkages} and \emph{$k$-linkages with congestion~$c$} are defined analogously with the addition that we only require the set of paths to be half-integral or to have congestion~$c$ respectively. Given any type of linkage~$\LLL$ we call the collection of ends of the paths the \emph{ends of the linkage}.
The following is the problem of interest for the rest of this paper; let~$k,c \in N$.

\problemdef{$k$-Disjoint Directed Paths with Congestion~$c$}{A digraph~$G$, tuples~$\bar{s}=(s_1,\ldots,s_k)$,~$\bar{t}=(t_1,\ldots,t_k)$ with~$s_i,t_i \in V(G)$ for~$1 \leq i \leq k$.}{Does there exist a~$k$-linkage~$\LLL=\{P_1,\ldots,P_k\}$ with congestion~$c$ such that~$P_i$ is an~$(s_i,t_i)$-path?}

If~$c=2$ we refer to the problem as the \textsc{$k$-Half-Integral Directed Paths} problem.

We refer to the tuples~$\bar{s}=(s_1,\ldots,s_k)$,~$\bar{t}=(t_1,\ldots,t_k)$ of an instance of the~\textsc{$k$-Disjoint Directed Paths with Congestion $c$} problem as \emph{terminal pairs}. For the sake of readability we will write~$\bar{s} \subseteq V(G)$ to mean some tuple~$\bar{s}=(s_1,\ldots,s_\ell)$ of arbitrary but fixed size~$\ell \in \N$ (unless stated otherwise), where each~$s_i \in V(G)$ for~$1 \leq i \leq \ell$. Also given two tuples~$\bar{s},\bar{t} \subseteq V(G)$ we write~$\bar{s} \cup \bar{t} \subseteq V(G)$ to mean the obvious. 

We may assume an instance~$(G,\bar{s},\bar{t})$ of the \textsc{$k$-Disjoint Directed Paths with Congestion~$c$} problem to respect the following \emph{normal form} by adding one vertex to the graph for each terminal.

\begin{definition}[Normal form]\label{def:normalform}
  Let~$k,c \in \N$. An instance $(G, \bar{s}, \bar{t})$ of the
  \textsc{$k$-Disjoint Directed Paths with Congestion~$c$} problem is in \emph{normal form} if, every~$s \in \bar{s}$ has exactly one outgoing and no incoming edge and
  every~$t \in \bar{t}$ has exactly one incoming and no outgoing edge.
\end{definition}

\paragraph{Computational Complexity.} We use standard notation, definitions and results regarding computational complexity (parameterized and unparameterized) following \cite{cygan2015parameterized} unless stated otherwise. We assume the reader to be familiar with basic computational complexity concepts such as the classes~$P$ and~$NP$, as well as the concept of~$NP$-completeness. Most notably, a parameterized problem is given a parameter~$k \in \N$ as part of the input and when we say that an algorithm for a parameterized problem runs in \emph{$fpt$-time}, we mean that on input~$(I,k)$ where~$k$ is the parameter, the algorithm runs in time~$f(k)\Abs{I}^c$ for some constant~$c \in \N$ and some function~$f: \N \to \N$. In particular, the algorithm runs in polynomial time for any fixed value of~$k$; we call such an algorithm an \emph{fpt-algorithm}. We refer the reader to \cite{cygan2015parameterized} for details.

\section{Reductions and Special Cases}
\label{sec:reduction}

The goal of this section is to prove several special cases of the \textsc{$3$-Half-Integral Directed Paths} problem which we will later use to reduce the general problem to the case of bounded directed tree-width. 

To unify the presentation we will work with the following technical definition of a variant of the problem. Let $k, s, t, c\in \N$ be integers.
\problemdef{$(k, s, t, c)$-Disjoint Paths}{A directed graph $G$, distinct vertices $x_1, \dots, x_s,
  y_1, \dots, y_t \in V(G)$, and two surjective functions $\sigma \sth [k] \rightarrow \{
  x_1, \dots, x_s\}$ and $\tau \sth [k] \rightarrow \{y_1, \dots, y_t\}$.}{

  Does there exist a~$k$-linkage~$\LLL=\{P_1, \dots, P_k\}$ with congestion~$c$ in~$G$ such that for all $1 \leq i \leq k$ the path $P_i$ is a $(\sigma(i),\tau(i))$-path?}

We refer to the problem as~$(k, s, t, c)$-DP for short. Note that for~$s=t=k$ and~$\sigma$ and~$\tau$ being the identity the above problem is exactly the \textsc{$k$-Disjoint Directed Paths with congestion~$c$} problem.
\smallskip

We start with a discussion of two special cases of the problem which will be used extensively later on. To this extent we define the following.

\begin{definition}\label{def:sequence-cut-vertices}
  Let $G$ be a digraph and $s, t_1, t_2 \in V(G)$ such that 
  $s \not= t_1, t_2$ but $t_1$ and $t_2$ may be equal.
  A
  \emph{sequence~$\bar c$ of cut vertices between $s$ and $t_1, t_2$} is a
  sequence $\bar c \coloneqq c_0 , c_1, \dots, c_k$ of vertices such that~$c_0 = s$ and for all
  $0 \leq j \leq k-1$ there are internally vertex-disjoint paths
  $P_j^1, P_j^2$ from $c_j$ to $c_{j+1}$, every path from $c_j$ to
  $\{t_1, t_2\}$ contains $c_{j+1}$, and either $c_k = t_1 = t_2$ or
  there are two internally vertex disjoint paths $P_k^1, P_k^2$ from
  $c_k$ to $t_1$ and $t_2$, respectively. In case that $c_k=t_1=t_2$ we set $P_k^1=P_k^2=c_k$.

  We define~$P^i_G(\bar c) \coloneqq \bigcup_{0 \leq j \leq k}P_j^i$ for~$i=1,2$ after a fixed but arbitrary choice of  paths $P_j^i$
\end{definition}

Let $\bar c := c_0, \dots, c_k$ be a sequence of cut vertices between $s$ and
$t_1, t_2$. We abuse notation and write~$V(\bar{c}) \coloneqq \{c_0,\ldots,c_k\}$. Note that~$V(P^1_G(\bar c)) \cap V(P^2_G(\bar c)) = V(\bar c)$. 

For $0 \leq i < k$ we define $R_i(\bar c)$ as the set of all
vertices in $G$ that lie on a path from $c_i$ to $c_{i+1}$. We define
$R_k(\bar c)$ to be the set of vertices on a path from $c_k$ to $t_1$
or on a path from $c_k$ to $t_2$. It follows from the definition that
for $0 \leq i < j \leq k$ the sets $R_i(\bar c)$ and $R_j(\bar c)$ are
disjoint except possibly for a shared cut vertex $c_j$ in case
$j=i+1$. Furthermore, the paths $P_j^1, P_j^2$ are contained in
$R_j(\bar c)$, for all $0 \leq j \leq k$. 

The following lemma follows easily from a repeated application of Menger's \cref{thm:Menger}.

\begin{lemma}\label{lem:sequence-cut-vertices}
  Let $G$ be a digraph and let $s, t_1, t_2$ such that $s \not= t_1,
  t_2$. If $G$ contains paths from $s$ to $t_1$ and from $s$ to $t_2$
  then there is a unique sequence of cut vertices between
  $s$ and $t_1, t_2$ and this sequence can be computed in polynomial time.
\end{lemma}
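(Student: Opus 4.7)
The plan is to construct the sequence inductively, setting $c_0 := s$ and at each stage $c_j$ either certifying a termination condition or identifying the next cut vertex. Given $c_j$, I first check whether $c_j = t_1 = t_2$, or whether there exist two internally vertex-disjoint paths from $c_j$ to $t_1$ and $t_2$, respectively; both tests reduce to standard max-flow computations and hence can be performed in polynomial time using \cref{thm:Menger}. If either holds, set $k := j$ and terminate. Otherwise, again by Menger, the minimum separator between $c_j$ and $T := \{t_1,t_2\}$ has size exactly one, so the set $C_j$ of \emph{cut vertices} for $(c_j,T)$-paths is nonempty.

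The key structural observation is that $C_j$ is linearly ordered by the relation ``appears before on every $(c_j,T)$-path.'' For if $v, w \in C_j$, every $(c_j,T)$-path contains both, and should two such paths place $v, w$ in opposite orders, then concatenating the prefix up to $w$ of the path placing $w$ first with the suffix from $w$ to $T$ of the other path (which does not revisit $v$, being simple) yields a $(c_j,T)$-walk, and hence simple path, avoiding $v$, contradicting $v \in C_j$. Define $c_{j+1}$ as the minimum element of $C_j$ in this order. Then any cut vertex for $(c_j, c_{j+1})$-paths would automatically lie in $C_j$---since every $(c_j,T)$-path begins with a $(c_j,c_{j+1})$-subpath---and would strictly precede $c_{j+1}$, contradicting minimality; hence no such cut vertex exists and \cref{thm:Menger} provides the two internally disjoint $(c_j, c_{j+1})$-paths required by condition~(a).

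Uniqueness follows by induction on $j$. Suppose two valid sequences agree through index $j$. If one terminates at $c_j$ and the other does not, the non-terminating sequence violates condition~(a), since the termination tests above exactly negate~(a). Otherwise both continue and I claim $c_{j+1} = c'_{j+1}$: if not, assume without loss of generality that $c_{j+1}$ precedes $c'_{j+1}$ on every $(c_j,T)$-path. Every $(c_j, c'_{j+1})$-path must then pass through $c_{j+1}$, because concatenating such a path avoiding $c_{j+1}$ with the simple suffix from $c'_{j+1}$ to $T$ of a path visiting both (which omits $c_{j+1}$) would produce a $(c_j,T)$-walk missing $c_{j+1}$, contradicting $c_{j+1} \in C_j$. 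Hence $c_{j+1}$ separates $c_j$ from $c'_{j+1}$, which by \cref{thm:Menger} rules out the two internally disjoint $(c_j, c'_{j+1})$-paths demanded by condition~(a) for the other sequence.

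The main obstacle is establishing the structural lemma that $C_j$ is linearly ordered and that its minimum is precisely the vertex satisfying condition~(a); once that is in hand, both existence and uniqueness reduce to repeated invocations of \cref{thm:Menger}. Algorithmically, each stage runs in polynomial time---enumerate candidates for $C_j$ by testing, for each $v \in V(G)$, whether $c_j$ can still reach $T$ in $G - v$, and pick the minimum via a 2-connectivity check against $c_j$---and since the sequence has length at most $|V(G)|$, the overall running time is polynomial.
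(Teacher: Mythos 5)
Your proposal is correct and follows essentially the same strategy as the paper: iteratively apply Menger's theorem for $k=2$ from the current vertex $c_j$ to $T=\{t_1,t_2\}$, at each step appending the separator of order one that is closest to the source, until two internally disjoint paths (or $t_1=t_2$) are found. You supply more detail than the paper's proof, in particular by proving that the cut vertices for $(c_j,T)$-paths are linearly ordered and that the minimum of this order is exactly the vertex with no cut vertex between it and $c_j$ — the paper simply asserts that Menger's theorem can be leveraged to find the separator closest to $S$ — but this is an elaboration of the same idea rather than a different route.
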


\begin{proof}
    We devise a simple algorithm by initializing the cut sequence as $c= s$.
    We then apply the internally disjoint version of Menger's \cref{thm:Menger} for $k=2$ with $S = \{s\}$ and $T=\{t_1,t_2\}$. If we find a separator of size $1$, we add it to the cut sequence and repeat Menger's Theorem with the newly found separator as $S$. If we find $2$ paths, we exit the algorithm with the current cut sequence. Note that Menger's theorem can be leveraged to find the separator closest to either set of terminals and we always search for the separator closest to $S$. Through this we find every separator of order $1$ between $s$ and $t_1,t_2$ and thus construct the unique cut sequence.
\end{proof}

We are now ready to solve the first special case of $(k, s, t, c)$-DP.

\begin{lemma}\label{lem:two-source-dp}
  For $k=3, s=2, c=2$, and $t \in \{2, 3\}$, the $(k, s, t, c)$-DP problem can be
  solved in polynomial time.  
\end{lemma}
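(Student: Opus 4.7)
The strategy is to reduce the problem to a controlled reachability test after computing the cut-vertex sequence from the source that has two pre-images under~$\sigma$. By surjectivity of~$\sigma$, exactly one of~$x_1,x_2$ has two pre-images; up to relabeling, assume~$\sigma^{-1}(x_1)=\{1,2\}$ and~$\sigma^{-1}(x_2)=\{3\}$. Write~$a:=\tau(1)$,~$b:=\tau(2)$, and~$c:=\tau(3)$. For~$t=3$ the three targets are pairwise distinct, while for~$t=2$ surjectivity of~$\tau$ forces either~$a=b$ with~$c$ the other sink, or~$a\neq b$ with~$c\in\{a,b\}$.

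First, I would verify that~$G$ contains both an~$(x_1,a)$-path and an~$(x_1,b)$-path, returning NO if either is missing. Otherwise, I apply~\cref{lem:sequence-cut-vertices} to compute in polynomial time the unique cut-vertex sequence~$\bar c=(c_0=x_1,c_1,\dots,c_k)$ between~$x_1$ and~$(a,b)$, along with the canonical paths~$P_G^1(\bar c),P_G^2(\bar c)$. Set~$C:=V(\bar c)$ and define the \emph{forbidden set}~$F:=C\cup\{a\}$ if~$a=b$, and~$F:=C$ otherwise. Internal disjointness of the branch paths within each~$R_j(\bar c)$ gives~$V(P_G^1(\bar c))\cap V(P_G^2(\bar c))=F$ (in the case~$a=b$ with~$c_k\neq a$, the two last-branch paths additionally share the common endpoint~$a$).

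The central observation is that in any feasible solution~$\{P_1,P_2,P_3\}$, both~$P_1$ and~$P_2$ must visit every vertex of~$F$. Indeed, every~$c_j\in C$ lies on every~$(x_1,\{a,b\})$-path by the cut-vertex property, and if~$a=b$ then~$a$ is the common endpoint of~$P_1$ and~$P_2$. Hence each vertex of~$F$ already carries congestion~$2$ from~$P_1\cup P_2$, forcing~$V(P_3)\cap F=\emptyset$. If~$x_2\in F$ or~$c\in F$, we return NO. Otherwise, a feasible solution exists if and only if there is an~$(x_2,c)$-path~$P_3$ in~$G-F$, a polynomial-time reachability check. When such a~$P_3$ exists, the triple~$\{P_G^1(\bar c),P_G^2(\bar c),P_3\}$ has congestion at most~$2$ everywhere: vertices of~$F$ are used exactly twice by~$P_1\cup P_2$ and not at all by~$P_3$, while vertices outside~$F$ are used at most once by~$P_1\cup P_2$ and at most once more by~$P_3$.

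The main obstacle is pinning down~$F$ correctly across the subcases. The clarifying structural fact is that~$a$ (and symmetrically~$b$) can coincide with a cut vertex~$c_j$ only for~$j=k$, since any earlier coincidence~$a=c_j$ with~$j<k$ would contradict the requirement that every path from~$c_j$ to~$\{a,b\}$ contains~$c_{j+1}$. With this in hand, all subcases of~$t\in\{2,3\}$ and of~$c\in\{a,b\}$ reduce uniformly to the single~$(x_2,c)$-reachability test in~$G-F$, yielding a polynomial-time algorithm.
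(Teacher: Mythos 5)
Your proposal is correct and takes essentially the same approach as the paper's proof: computing the cut-vertex sequence via \cref{lem:sequence-cut-vertices} and reducing to a reachability test for $P_3$ in $G$ with those vertices deleted. The only difference is presentational: the paper splits into two cases (two internally disjoint $(x_1,\{y_1,y_2\})$-paths exist or not) while you treat both uniformly via the cut sequence, and your explicit inclusion of $a$ in the forbidden set when $a=b$ is harmless but redundant since $a=b$ forces $c_k=a\in C$ anyway.
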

\begin{proof}
Let $G$,~$x_1, x_2, y_1, y_2, y_3 \in V(G)$, and the functions $\sigma$ and $\tau$ be
  given. We require that $x_1 \not= x_2$ and that $y_1, y_2, y_3$ are not all equal, as in these cases the instance has no solution. In case $t=3$, the vertices $y_1, y_2, y_3$ are pairwise distinct, if $t=2$, then two of them are equal. Without loss of generality we assume that $\sigma(1) = \sigma(2) = x_1$ and $\sigma(3) = x_2$
  and $\tau(i) = y_i$, for all $1 \leq i \leq 3$.

  If for some $1 \leq i \leq 3$, there is no path in $G$ from $\sigma(i)$ to
  $\tau(i)$, then the problem does not have a solution and we can reject
  the input. Thus we may assume that $\tau(i)$ is reachable
  from $\sigma(i)$, for all $1 \leq i \leq 3$.
  
  \smallskip\noindent\textit{Case 1. } Suppose, there are two internally vertex-disjoint paths $P_1$ and $P_2$ such that $P_i$ links $x_1$ to $y_i$, for $i=1,2$. 
  Let $P_3$ be a path in $G - x_1$ from $x_2$ to $y_3$. If $y_1 = y_2$ then we additionally require that $P_3$ does not contain $y_1$. If no such
    path exists then again the problem has no solution and the
    algorithm stops and rejects the input. Otherwise, $\{P_1, P_2, P_3\}$
    forms a half-integral solution which the algorithm returns. 

    \smallskip\noindent\textit{Case 2. } Now suppose that there are no
    two internally vertex disjoint paths from $x_1$ to $\{y_1, y_2\}$.
    We apply~\cref{lem:sequence-cut-vertices} to obtain a sequence
    $\bar c := c_0, \dots, c_d$ of cut vertices between $x_1$ and
    $\{y_1, y_2\}$ and the corresponding paths $P_1 := P_G^1(\bar c)$ and $P_2 := P_G^2(\bar c)$ from $x_1$ to $y_1$ and $y_2$, respectively. Observe that if $y_1 = y_2$ then $c_d = y_1$.

If there is a path $P_3$ from $x_2$ to $y_3$ in $G - \{ c_0, \dots,
    c_d\}$, then $\{P_1, P_2, P_3\}$ forms a half-integral solution that the algorithm returns. To see this note that by definition of cut sequences $V(P_1) \cap V(P_2) = \{ c_0, \dots, c_d\}$ and as~$P_3$ is disjoint from $\{ c_0, \dots, c_d\}$, no vertex in $G$ can
    appear in all three paths.

    Otherwise, if there is no such path $P_3$, then the instance
    has no solution. For, by definition of a sequence of cut vertices, in any solution witnessed by some linkage $\{P_1', P_2', P_3'\}$ the paths $P_1', P_2'$
    linking $x_1$ to $y_1$ and $y_2$ respectively must both contain all vertices
    in $\{ c_0, \dots, c_d\}$. Thus $P_3'$ cannot contain any vertex in
    $\{ c_0, \dots, c_d\}$, contradicting our assumption.

    It is easily seen that the algorithm can be implemented in
    polynomial time, completing the proof. 
\end{proof}

We next consider the case of four paths starting from only two sources.

\begin{lemma}\label{lem:two-source-4flow}
    For $k=4, s=2, c=2$, and $t \in \{2, 3, 4\}$, the $(k, s, t, c)$-DP problem can be
  solved in polynomial time.  
\end{lemma}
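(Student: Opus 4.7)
The plan is to extend the cut-sequence argument of \cref{lem:two-source-dp} to handle both sources simultaneously. First, since $\sigma$ is surjective onto a $2$-element set with $k=4$ and the congestion bound $c=2$, we must have $|\sigma^{-1}(x_i)|=2$ for each~$i$; relabel so $\sigma(1)=\sigma(2)=x_1$ and $\sigma(3)=\sigma(4)=x_2$, and similarly each sink is used at most twice. After rejecting trivial infeasible cases (such as $x_1=x_2$, a terminal used too often, or unreachability), I pass to a normal form in which sources have no in-edges and sinks no out-edges, so that every internal vertex of every path is a non-terminal. A direct congestion count then shows that in any solution $x_1\notin V(P_3\cup P_4)$ and $x_2\notin V(P_1\cup P_2)$; more generally, any vertex shared internally by $P_1$ and $P_2$ has congestion $2$ from $\{P_1,P_2\}$ alone and must be avoided by $\{P_3,P_4\}$, and symmetrically.

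The algorithm is a fixed-point iteration on two \emph{forbidden sets} $F_A,F_B$, initialised by $F_A:=\{x_2\}$ and $F_B:=\{x_1\}$. In each round I apply \cref{lem:sequence-cut-vertices} to compute the unique cut sequence $\bar c^1$ of $x_1$ to $\{\tau(1),\tau(2)\}$ in $G\setminus F_A$ and the cut sequence $\bar c^2$ of $x_2$ to $\{\tau(3),\tau(4)\}$ in $G\setminus F_B$, rejecting if either source cannot reach its targets. Every cut vertex in $\bar c^1$ lies on both $P_1$ and $P_2$ in any solution (since it separates $x_1$ from its targets in $G\setminus F_A$), and so has congestion $2$ from $\{P_1,P_2\}$ alone and must be avoided by $P_3,P_4$; symmetrically for $\bar c^2$. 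I reject if $(V(\bar c^1)\setminus\{x_1\})\cap(V(\bar c^2)\setminus\{x_2\})\neq\emptyset$, since such a vertex would lie in all four paths. Otherwise I update $F_A:=\{x_2\}\cup(V(\bar c^2)\setminus\{x_2\})$ and $F_B:=\{x_1\}\cup(V(\bar c^1)\setminus\{x_1\})$. Since each round only enlarges these sets, the iteration stabilises after $O(|V(G)|)$ rounds.

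Once the iteration stabilises, I take $(P_1,P_2):=(P^1_G(\bar c^1),P^2_G(\bar c^1))$ and $(P_3,P_4):=(P^1_G(\bar c^2),P^2_G(\bar c^2))$, using the canonical internally vertex-disjoint paths associated to the cut sequences in their respective restricted graphs. In the degenerate case that a cut sequence is trivial (consisting only of its source), the two paths of the corresponding pair are obtained directly by Menger's theorem applied in the restricted graph, using a super-sink augmentation to force distinct endpoints, and an internal vertex-splitting trick when the two targets coincide. A direct congestion count then yields validity: each cut vertex of $\bar c^i$ carries congestion $2$ from its own pair and $0$ from the other by construction; every other vertex is hit by at most one of $\{P_1,P_2\}$ and at most one of $\{P_3,P_4\}$ thanks to internal disjointness between consecutive cut vertices; terminal vertices are controlled by the normal form.

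The main obstacle is justifying correctness of the iterative rejections, i.e.~showing that whenever the iteration aborts, no solution exists. This follows by induction on the number of iterations: the vertices added to $F_A$ (resp.~$F_B$) in each round are truly unavoidable by $\{P_1,P_2\}$ (resp.~$\{P_3,P_4\}$) in any hypothetical solution, so each newly forbidden vertex is a genuine obstruction, and an intersection of the two cut-vertex sets at any round exhibits a vertex forced into all four paths. All subroutines --- cut-sequence computation via \cref{lem:sequence-cut-vertices} and Menger-based extraction of two internally disjoint paths --- run in polynomial time, and there are at most $O(|V(G)|)$ iterations, giving an overall polynomial-time algorithm.
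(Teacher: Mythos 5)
Your proposal is correct and follows essentially the same iterative cut-sequence approach as the paper's own proof. The paper computes the two cut sequences in alternation (first $\bar c_i$ in $G-V(\bar d_{i-1})$, then $\bar d_i$ in $G-V(\bar c_i)$, terminating as soon as one of them stops growing), whereas you compute both cut sequences in each round from the previous round's forbidden sets and stop only when both have stabilised; the monotonicity argument and the congestion count at the fixed point are the same. Your explicit intersection check and the super-sink/vertex-splitting handling of the trivial cut sequence are technically redundant (a shared cut vertex would also surface as a reachability failure in the next round, and the definition of cut sequences already yields two internally disjoint paths when the sequence is just the source), but they are harmless.
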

\begin{proof}
 Let $(G, x_1, x_2, y_1,y_2,y_3, y_4, \sigma, \tau)$ be given. Again, we first prove the statement for $t=4$ and the cases $t=3$ and $t=2$ will follow directly, as the algorithm still works if pairs of vertices from $\{y_1,y_2,y_3, y_4\}$ are not distinct.
  If more than two of the vertices $y_1,y_2,y_3,y_4$ are equal, the input instance has no solution and can be rejected.

  Without loss of generality~we assume that $\sigma(1) = \sigma(2)
  = x_1$ and $\sigma(3) = \sigma(4) = x_2$ and $\tau(i)=y_i$, for all $1\leq i\leq 4$. 
    We proceed with a description of an algorithm that iteratively and alternately constructs sequences $\bar c_i$ and $\bar d_i$ respectively such that either $V(\bar c_{i-1}) \subset V(\bar{c_i})$ is a strict subset (and equivalently for~$\bar{d_i}$) or a decision to the instance is found.  We start with sequences~$\bar c_0=x_1$ and~$\bar d_0=x_2$ (which are not necessarily cut sequences at this point). The iterative definition of our cut sequences is now as follows; the first two iterations mark the base case.
  \begin{itemize}
      \item[$\bar c_1:$]  By construction, if $P_1', \dots, P_4'$ is a solution to the
  instance such that $P_i'$ starts in $\sigma(i)$ and ends in $\tau(i)$, then~$x_2\in V(P_3')\cap V(P_4')$, and thus~$P_1',P_2'$ cannot use it. Whence, if there is no path in $G^c_1 \coloneqq G - V(\bar d_0)$ from $x_1$ to $\tau(1)$ or from $x_1$
     to $\tau(2)$, then the input instance has no solution and the algorithm
   stops here. Otherwise we set $\bar c_1$ as the sequence of cut
   vertices from $x_1$ to $\{ y_1, y_2 \}$ in~$G^c_1$. Thus, in general, if there is a solution to the problem, then~$V(\bar c_1) \subseteq V(P_1') \cap V(P_2')$. Also, recall that~$V(P^1_{G^c_1}(\bar c_1)) \cap V( P^2_{G^c_1}(\bar c_1 )) = V(\bar c_1)$, and by construction~$P^1_{G^c_1}(\bar c_1) \cap V( \bar d_0) = \emptyset$ as well as~$P^2_{G^c_1}(\bar c_1) \cap V( \bar d_0) = \emptyset$.
   
   Note further that if there are two
  internally disjoint paths from $x_1$ to $\{ y_1, y_2 \}$ in~$G^c_1$, then either $\bar
  c_1 = x_1$ if~$y_1 \neq y_2$ or~$\bar c_1 = x_1,y_1$ otherwise. (The former is the only case where the cut sequence does not increase, for it was no cut sequence to start with).
  
     \item[$\bar d_1:$]  By construction, if $P_1', \dots, P_4'$ is a solution to the input
  instance such that $P_i'$ starts in $\sigma(i)$ and ends in $\tau(i)$, then~$V(\bar c_1) \subseteq V(P_1') \cap V(P_2')$ as seen in the previous step.
  Therefore, if in $G^d_1 \coloneqq G - V(\bar c_1)$ there is no path from $x_2$ to
  $\tau(3)$ or from $x_2$ to $\tau(4)$, the input instance has no
  solution and the algorithm stops here. If there exist two internally disjoint paths~$P_3,P_4$ from~$x_2$ to~$\{y_3,y_4\}$ in~$G^d_1$ then there is a solution to the instance formed by these paths together with~$P^1_{G^c_1}(\bar c_1)$ and~$P^2_{G^c_1}(\bar c_1)$---neither of these paths uses~$x_2$---for no vertex is used by three paths.
  Otherwise, if no two such paths exist, there is a sequence $\bar d_1$ of cut vertices in $G^d_1$ from $x_2$ to $\{ y_3, y_4 \}$ such that~$V(\bar{d_0}) \subset V(\bar{d_1})$ is a strict subset. Note that the paths~$P_3',P_4'$ of the solution satisfy~$V(\bar{d_1}) \subseteq V(P_3') \cap V(P_4')$. Also, recall that~$V(P^1_{G^d_1}(\bar d_1)) \cap V( P^2_{G^d_1}(\bar d_1 )) = V(\bar d_1)$, and by construction~$P^1_{G^d_1}(\bar d_1) \cap V( \bar c_1) = \emptyset$ as well as~$P^2_{G^d_1}(\bar d_1) \cap V( \bar c_1) = \emptyset$.
  \end{itemize}

Suppose now that the sequences~$\bar{c_0},\ldots,\bar c_{i}$ and~$\bar{d_0},\ldots,\bar d_{i}$ have inductively been constructed for some~$ i \geq 1$ maintaining the following two invariants.

    \begin{itemize}
        \item[$(\star,c)$] $V(\bar c_1) \subset \ldots \subset V(\bar c_i)$ where~$\bar c_i$ is a cut sequence in~$G^c_i\coloneqq G-V(\bar d_{i-1})$. It holds~$V(P^1_{G^c_i}(\bar c_i)) \cap V(P^2_{G^c_i}(\bar c_i)) = V(\bar c_i)$ and~$V(P^1_{G^c_i}(\bar c_i)) \cap V(\bar d_{i-1})= \emptyset$ as well as~$V(P^2_{G^c_i}(\bar c_i)) \cap V(\bar d_{i-1})= \emptyset$. Further, if $P_1', \ldots, P_4'$ is a solution to the input instance such that $P_i'$ starts in $\sigma(i)$ and ends in $\tau(i)$, then~$V(\bar c_i) \subseteq V(P_1') \cap V(P_2')$.
        \item [$(\star,d)$] $V(\bar d_1) \subset \ldots \subset V(\bar d_i)$, where~$\bar d_i$ is a cut sequence in~$G^d_i\coloneqq G- V(\bar c_i)$. It holds~$V(P^1_{G^d_i}(\bar d_i)) \cap V(P^2_{G^d_i}(\bar d_i)) = V(\bar d_i)$ and~$V(P^1_{G^d_i}(\bar d_i)) \cap V(\bar c_{i})= \emptyset$ as well as~$V(P^2_{G^d_i}(\bar d_i)) \cap V(\bar c_i)= \emptyset$. Further, if $P_1', \ldots, P_4'$ is a solution to the input instance such that $P_i'$ starts at $\sigma(i)$ and ends in $\tau(i)$, then~$V(\bar d_i) \subseteq V(P_3')\cap V(P_4')$.
    \end{itemize}

As discussed above,~$\bar c_1$ and~$\bar d_1$ satisfy~$(\star,c)$ and~$(\star,d)$ respectively. Our algorithm continues inductively as follows.

\begin{itemize}
    \item[$\bar c_{i+1}:$]  By~$(\star,d)$, if $P_1', \dots, P_4'$ is a solution to the input
  instance such that $P_i'$ starts in $\sigma(i)$ and ends in $\tau(i)$, then~$V(\bar d_i)\subseteq V(P_3') \cap V(P_4')$, and thus~$P_1',P_2'$ cannot use it. Whence, if there is no path in $G^c_{i+1}\coloneqq G - V(\bar d_i)$ from $x_1$ to $\tau(1)$ or from $x_1$
     to $\tau(2)$---in particular if~$V(G)\setminus V(\bar d_i) = \emptyset$---then the input instance has no solution and the algorithm stops here. Otherwise, we set $\bar c_{i+1}$ as the sequence of cut vertices from $x_1$ to $\{ y_1, y_2 \}$ in~$G^c_{i+1}$. 
   
   If~$\bar c_{i+1} = \bar c_{i}$, then this means that~$P^j_{G^c_{i+1}}(\bar c_{i+1}) = P^j_{G^c_{i}}(\bar c_i)$ for~$j = 1,2$. But since the former are defined on~$G-V(\bar d_i)$ this means that~$P^1_{G^c_i}(\bar c_i)$ and~$P^2_{G^c_i}(\bar c_i)$ do not visit~$V(\bar d_i)$ and further, by~$(\star,c)$, they satisfy~$V(P^1_{G^c_i}(\bar c_i)) \cap V(P^2_{G^c_i}(\bar c_i)) = V(\bar c_i)$. But by~$(\star,d)$ the paths~$P^1_{G^d_i}(\bar d_i)$ and~$P^2_{G^d_i}(\bar d_i)$ are disjoint from~$V(\bar c_i)$  and satisfy $V(P^1_{G^d_i}(\bar d_i) \cap V(P^2_{G^d_i}(\bar d_i)) = V(\bar d_i)$. Altogether then~$\{P^1_{G^c_{i+1}}(\bar c_{i+1}),P^2_{G^c_{i+1}}(\bar c_{i+1}),P^1_{G^d_i}(\bar d_i),P^2_{G^d_i}(\bar d_i)\}$ is a solution to the instance and the algorithm stops here. 
   
   Thus, we may assume that~$V(\bar{c_i}) \subset V(\bar c_{i+1})$ is a strict subset; by construction using the standard arguments as above~$(\star,c)$ is still valid.

   \item[$\bar d_{i+1}:$] The argument for~$\bar d_{i+1}$ is analogous to the argument for~$\bar c_{i+1}$, ending either with the algorithm returning an outcome to the instance or a cut sequence~$\bar d_{i+1}$ satisfying~$(\star,d)$.
\end{itemize}

Finally, since there are only finitely many vertices in~$G$ and in each iteration the algorithm either returns or increases the size of~$V(\bar c_i)$ and~$V(\bar d_i)$, we inevitably end up with~$G - V(\bar c_i)$ or~$G- V(\bar d_i)$ being empty, in which case the algorithm also returns correctly as discussed above. It is easily seen that each of the iterations can be implemented in polynomial time.

\end{proof}

By symmetry, we immediately get the next corollary.

\begin{corollary}\label{cor:two-source-4flow}
  \begin{enumerate}  
  \item For $k=3, t=2, c=2$, and $s \in \{2, 3\}$, the $(k, s, t, c)$-DP problem can be
  solved in polynomial time.  
\item     For $k=4, t=2, c=2$, and $s \in \{2, 3, 4\}$, the $(k, s, t, c)$-DP problem can be
  solved in polynomial time.  
  \end{enumerate}
\end{corollary}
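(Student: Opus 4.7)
The plan is to derive both items of the corollary from \cref{lem:two-source-dp} and \cref{lem:two-source-4flow} via a straightforward edge-reversal symmetry. Given an instance $(G, x_1, \dots, x_s, y_1, y_2, \sigma, \tau)$ of $(k, s, t{=}2, c)$-DP, I would construct the \emph{reverse digraph} $G^R$, which has the same vertex set as $G$ and an edge $(v, u)$ for every edge $(u, v) \in E(G)$. Then I would consider the instance $(G^R, y_1, y_2, x_1, \dots, x_s, \tau, \sigma)$, in which the sources are the two $y_j$'s and the targets are the $x_i$'s, so that the roles of $s$ and $t$ are swapped.

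The key observation is that a sequence $(v_0, \ldots, v_\ell)$ is a directed path from $\sigma(i)$ to $\tau(i)$ in $G$ if and only if the reversed sequence $(v_\ell, \ldots, v_0)$ is a directed path from $\tau(i)$ to $\sigma(i)$ in $G^R$. Since vertex-sets of paths are unaffected by this correspondence, a collection of paths $\LLL = \{P_1, \ldots, P_k\}$ witnesses a solution in $G$ if and only if the reversed collection $\LLL^R = \{P_1^R, \ldots, P_k^R\}$ witnesses a solution (with the same congestion) in the reversed instance. Moreover, $G^R$ can be constructed in linear time.

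For item (1) the reversed instance falls under \cref{lem:two-source-dp} (we have $k=3$, exactly $s'=2$ sources, $t' = s \in \{2,3\}$ sinks, and $c=2$), and for item (2) it falls under \cref{lem:two-source-4flow} (we have $k=4$, exactly $s'=2$ sources, $t' = s \in \{2,3,4\}$ sinks, and $c=2$). In each case we run the corresponding polynomial-time algorithm on the reversed instance and return its answer, reversing any solution it produces so as to obtain a solution to the original instance. Since the entire reduction---reversing the graph and then reversing the paths of a returned solution---is polynomial, this yields the desired polynomial-time algorithms.

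There is no real obstacle here; the only thing to be careful about is that the normalization conventions of \cref{def:normalform} are preserved under reversal (each source in $G$ had exactly one outgoing and no incoming edge, which becomes exactly one incoming and no outgoing edge in $G^R$, i.e., the correct shape for a sink, and vice versa), so the instance produced on $G^R$ is still in normal form and the hypotheses of \cref{lem:two-source-dp} and \cref{lem:two-source-4flow} apply verbatim.
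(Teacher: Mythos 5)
Your proof is correct and matches the paper's approach: the paper's own justification is simply ``by symmetry,'' meaning exactly the edge-reversal symmetry you spell out. Your additional check that the normal-form conventions survive reversal is a harmless extra precaution, though the hypotheses of \cref{lem:two-source-dp} and \cref{lem:two-source-4flow} do not actually require normal form.
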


\section{Directed Tree-Width}
\label{sec:dtw}

In this section, we recall the definition of directed tree-width and
give an algorithm solving the \textsc{$k$-Half-Integral Directed Paths} problem on
instances of bounded directed tree-width. See
\cite{KreutzerO2014,KreutzerK2018} for surveys of directed width parameters.

An \emph{arborescence} is a directed graph~$T$ containing exactly one source vertex~$r \in V(T)$---that is, the in-degree of~$r$ is~$0$---such that the underlying undirected graph of~$T$ is a tree; we call~$r$ the \emph{root of~$T$}. Note that the root of an arborescence is uniquely defined by~$T$. We denote by $\preceq_T$ the \emph{ancestor relation}, where $u \preceq_T v$ if the (unique) directed path from the root $r$ of~$T$ to $v \in V(T)$ contains $u \in V(T)$. By $T_v$ we denote the subtree of $T$ with root $v$ and vertex set $V(T_v) = \{ u \in V(T) \sth v \preceq_T u \}$.

\begin{definition}[Directed Tree-Width]
  A \emph{directed tree-decomposition} of a digraph $G$ is a triple
  $\TTT := (T, \beta, \gamma)$, where $T$ is an arborescence, $\beta \sth V(T) \rightarrow \PPP(V(G))$ and
  $\gamma \sth E(T) \rightarrow \PPP(V(G))$ are functions such that
  \begin{enumerate}
  \item for every $v \in V(G)$ there is exactly one $t \in V(T)$ with $v \in
    \beta(t)$ and
  \item for every $e = (u, v) \in E(T)$ there is no  directed walk
    in $G - \gamma(e)$ which starts and ends at a vertex in $\beta(T_v) := \bigcup \{ \beta(v') \sth v' \in V(T_v) \}$ and
    contains a vertex of $V(G) \setminus \beta(T_v)$. 
  \end{enumerate}
  We refer to the sets $\beta(t)$ as \emph{bags} and to the sets
$\gamma(e)$ as \emph{guards}.
  
  For $t \in V(T)$ we define $\Gamma(t) := \beta(t) \cup \{ \gamma(e) \sth e \sim_T t \}$, where
  $e \sim_T t$ if and only if $e$ is incident to $t$. The \emph{width} $w(\TTT)$ is
  defined as $\max \{ |\Gamma(t)| \sth t \in V(T) \}$.  The \emph{directed tree-width} of $G$, denoted by
  $\dtw(G)$, is the minimum width of any directed tree-decomposition
  of $G$. 
\end{definition}

Johnson et al. \cite{JohnsonRST2001} (implicitly) describe an algorithm for
computing approximate directed tree-decompositions of a digraph $G$ of
width at most $3 \dtw(G)+1$. This can be turned into an
fpt-algorithm, see \cite{KreutzerO2014,KreutzerK2018,CamposLMS2019}.

\begin{theorem}[\cite{CamposLMS2019}, Theorem 3.7]\label{thm:comp-dtw}
  There is an fpt-algorithm which, given a digraph $G$ and an integer $k \in \N$ as
  input, either computes a directed tree-decomposition of $G$ of width
  at most $3k-2$, or certifies that $\dtw(G) > k$. 
\end{theorem}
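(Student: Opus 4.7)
The plan is to adapt the recursive separator-based approach of Johnson, Robertson, Seymour, and Thomas to obtain fpt running time. Recall that a set $W \subseteq V(G)$ is \emph{$Z$-normal} (for $Z \subseteq V(G)$) if no directed walk in $G - Z$ starts in $W$, exits $W$, and returns to $W$; this is exactly the property built into condition~(2) of the directed tree-decomposition, so every subtree of a valid decomposition corresponds to a $Z$-normal set for $Z$ the guard on the edge to its parent. I would design a recursive procedure that takes as input a $Z$-normal set $W$ together with a guard $Z$ of size at most $2k-2$, and returns either a directed tree-decomposition of $G[W \cup Z]$ whose root bag contains $Z$ and which has width at most $3k-2$, or a haven of order $k+1$ in $G$. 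The latter certifies $\dtw(G) > k$ by the classical haven--tree-width duality.

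At each recursive call, the procedure searches for a vertex set $S \subseteq W$ with $|S| \leq k$ whose removal breaks $G[W]$ into strongly connected pieces each of size at most a constant fraction of $|W|$. Such a balanced separator is sought by iterative Menger-type cuts: for the largest candidate component, one tests for a $k$-vertex separator between suitably chosen source and sink sets until the component can be split or the full system of flows exhibits no small cut. If $S$ is found, the algorithm sets the current bag to $Z \cup S$ (of size at most $3k-2$), distributes the vertices of $Z \cup S$ as guards along the edges to the children, and recurses on each resulting $Z'$-normal piece with $|Z'| \leq 2k-2$. If no such $S$ exists, a pigeonhole argument over the failed separator computations shows that assigning to every $X$ with $|X| \leq k$ the largest strongly connected component of $G - X$ yields a consistent haven of order $k+1$, which is returned as the certificate.

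The main obstacle for attaining fpt running time, as opposed to the $n^{O(k)}$ bound coming directly from the original JRST algorithm, is the balanced-separator subroutine: naive enumeration over vertex subsets of size $k$ is too expensive. Campos et al.~\cite{CamposLMS2019} sidestep this by invoking polynomial-time approximation algorithms for directed balanced vertex separators, which output a separator of size $O(k)$ in $f(k) \cdot n^{O(1)}$ time; combined with the $O(\log n)$ recursion depth forced by the balancing property, this gives the claimed fpt bound. The delicate point will be ensuring that the approximation slack fits within the $3k-2$ width bound, which I would handle by tightening the haven-extraction argument on the failure side so that even ruling out an $O(k)$-approximate separator is enough to produce a haven of order $k+1$; this, rather than the recursion itself, is where essentially all of the technical work lies.
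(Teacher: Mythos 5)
The paper does not prove this statement; it is cited verbatim from Campos, Lopes, Maia, and Sau~\cite{CamposLMS2019}, so there is no proof here to compare against. That said, your sketch can still be assessed on its own merits.

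Your high-level frame---recurse on a $Z$-normal piece $W$ with a guard $Z$ of size at most $2k-2$, set the bag to $Z \cup S$ for a separator $S$ of size at most $k$ to stay within width $3k-2$, and return a haven of order $k+1$ on failure---is the right skeleton, and it is the Johnson--Robertson--Seymour--Thomas template. But the mechanism you identify as the source of the fpt bound is not what makes this work, and as described it would break the width guarantee. You propose invoking an approximation algorithm for directed balanced vertex separators that returns a separator of size $O(k)$. A separator of size $ck$ for some $c>1$ immediately blows the bag bound past $3k-2$: the bag $Z \cup S$ would have size up to $2k-2 + ck$, and the guards you pass to the children would also grow, so the ``approximation slack'' cannot be absorbed without worsening the stated approximation ratio of $3$. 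You gesture at ``tightening the haven-extraction argument'' to compensate, but no such tightening is available if the separator subroutine itself is lossy: the width bound is charged directly by $|Z|+|S|$.

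The actual route to fpt time is more elementary and does not need any external approximation subroutine. One balances on the guard set $Z$ (not on $|W|$, and the recursion depth is controlled by strict vertex decrease, not an $O(\log n)$ balancing argument): for each of the at most $2^{|Z|} \leq 2^{2k-2}$ bipartitions $(Z_1,Z_2)$ of $Z$, run a max-flow / Menger computation in polynomial time to test for a separator of order at most $k$ that splits $Z_1$ from $Z_2$ in the required sense. If some bipartition admits such a separator $S$ with $|S| \leq k$, recurse with exact size control on the bag. If every bipartition fails, the collection of witnessing flows yields a bramble or haven of order $k+1$. This gives running time $2^{O(k)} \cdot n^{O(1)}$ with the exact width bound $3k-2$. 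So the ingredient you flagged as ``where essentially all of the technical work lies'' is in fact unnecessary, and substituting it in would cost you the theorem as stated.
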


The algorithm of the previous theorem actually returns a \emph{nice} directed
tree-decomposition. 

\begin{definition}[Nice Directed Tree-Decomposition]
Let~$G$ be a digraph. A directed tree-decomposition $\TTT := (T, \beta, \gamma)$ of $G$ is \emph{nice}, if
\begin{enumerate}
\item for every edge $e = (u, v) \in
E(T)$ the set $\beta(T_v)$ is a strongly
connected component of $G - \gamma(e)$ and
\item for every vertex $t\in V(T)$ with children $c_1\ldots, c_\ell$ the set $\bigcup_{1\leq i\leq \ell}\beta(c_i)\cap\bigcup_{e\sim_T t}\gamma(e)$ is empty.
\end{enumerate}
\end{definition} 

We will use the niceness of the respective tree-decomposition below.

 \begin{definition}[Elementary cylindrical grid]\label{def:cyl-grid}
  An \emph{elementary cylindrical grid} of order~$k$, for some~$k\geq 1$, is a
  digraph~$G_k$ consisting of~$k$ directed cycles~$W_1, \dots, W_k$
  of length~$2k$,
  pairwise vertex disjoint, together with a set of~$2k$ pairwise
  vertex disjoint paths~$H_1, \dots, H_{2k}$ of length~$k-1$ such that \parsep-10pt
  \begin{itemize}
  \item each path~$H_i$ has exactly one vertex in common with each
    cycle~$W_j$ and has one end in~$V(W_1)$ and the other in $
    V(W_k)$,
  \item the paths~$H_1, \dots, H_{2k}$ appear on each~$W_i$ in this
    order, and
  \item for odd~$i$ the cycles~$W_1, \dots, W_k$ occur on all~$H_i$
    in this order and for even~$i$ they occur in reverse order~$W_k,
    \dots, W_1$.
  \end{itemize}
\end{definition}

See Figure~\ref{fig:grid} for an illustration of~$G_4$.

\begin{definition}\label{rem:grid-in-wall}
  The \emph{elementary cylindrical wall}~$\WWW_k$ of order~$k$ is the
  digraph obtained from the elementary cylindrical grid~$G_k$ by replacing every
  vertex~$v$
  of degree~$4$ in~$G_k$ by two new vertices~$v_s, v_t$ connected by
  an edge~$(v_s, v_t)$ such that~$v_s$ has the same in-neighbours as~$v$ and~$v_t$ has the same out-neighbours as~$v$.

  A \emph{cylindrical wall}~$\WWW$ of order~$k$, or \emph{$k$-wall}, is a subdivision of~$\WWW_k$.
\end{definition}

\begin{figure}
  \begin{center}
    \includegraphics[scale=1]{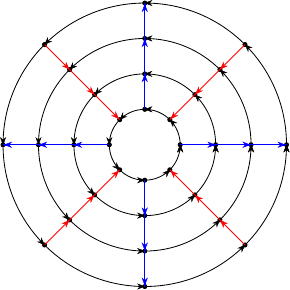}
	\hspace*{1cm}\includegraphics[scale=1]{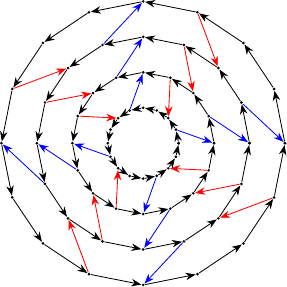}
    \addtolength{\textfloatsep}{-200pt}
    \caption{Cylindrical grid~$G_4$ on the left-hand and a cylindrical wall of order~$4$ on the right-hand side. }\label{fig:grid}\label{fig:wall}
  \end{center}
\end{figure}

In this paper we will only work with cylindrical walls. We will therefore often just say wall to denote a cylindrical wall.
By subwall we denote a subgraph of a wall, which also is a wall.
Cylindrical walls are a natural obstruction to directed tree-width as shown in the next theorem.

\begin{theorem}[\cite{KawarabayashiK2015,HatzelKMM2024a}]\label{thm:directed-grid}
  There is a function $f_{\ref{thm:directed-grid}} \sth \N \rightarrow \N$ such that for all $k \geq 0$ every digraph of
  directed tree-width at least $f(k)$ contains a cylindrical wall of order $k$.
\end{theorem}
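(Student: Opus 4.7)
The plan is to follow the classical strategy for directed excluded grid-type theorems, which combines a duality between directed tree-width and ``concentrated'' substructures with an iterative construction of the wall. The starting point is the following duality (Johnson--Robertson--Seymour--Thomas): if $\dtw(G) \geq f(k)$ for an appropriate super-polynomial $f$, then $G$ admits a \emph{haven} of order $h(k)$, i.e., a function assigning to each small vertex set $X \subseteq V(G)$ a strongly connected component $\beta(X)$ of $G - X$ in a consistent way. Equivalently, one obtains a large strongly connected set that is ``well-linked'' in the sense that any two subsets of approximately equal size are joined by many internally disjoint directed paths. The first step is thus to reduce from ``high directed tree-width'' to the existence of such a well-linked set $S$ of size polynomial in $k$.

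The main construction then proceeds iteratively, building the $k$ concentric cycles $W_1, \dots, W_k$ and the $2k$ transversal paths $H_1, \dots, H_{2k}$ of a cylindrical wall from the outside in. Concretely, I would first use the well-linked property of $S$ to find one long closed directed walk $C$ passing through many vertices of $S$ (the prospective $W_1$). Next, one extracts $2k$ directed ``radial'' paths leaving $C$ and reaching deep into the remaining well-linked set, which will serve as initial segments of the vertical paths $H_1, \dots, H_{2k}$. Using another application of Menger's theorem inside the subgraph cut off by $C$, one finds a second closed walk $W_2$ concentric with $W_1$ that each radial path crosses in the correct cyclic order. Iterating this refinement step $k$ times and carefully truncating, extending and re-routing the transversal paths between consecutive cycles yields the desired cylindrical wall (after a standard cleanup to turn the closed walks into cycles and to subdivide where necessary to produce a topological copy of $\WWW_k$).

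The hard part is the directional bookkeeping in the refinement step. In the undirected grid theorem of Robertson--Seymour it suffices to reroute along unoriented paths, but here every closed walk $W_i$ must be genuinely directed and every transversal path $H_j$ must be a directed path, with the two systems crossing each other in the alternating pattern required by Definition~\ref{def:cyl-grid}. Guaranteeing this forces one to work with ``linkages with specified cyclic order'' and to control, at every iteration, the direction in which each transversal path crosses each cycle. The correct technical notion to handle this is that of a \emph{cylindrical linkage} or of an ``alternating path system'' inside a tangle-like object derived from the haven; the routing arguments needed to preserve this structure across a refinement step are where essentially all the technical work in \cite{KawarabayashiK2015,HatzelKMM2024a} is concentrated. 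Once this invariant is maintained through all $k$ rounds, the resulting structure is, by construction, a subdivision of $\WWW_k$, yielding the desired cylindrical wall and hence the function $f_{\ref{thm:directed-grid}}$.
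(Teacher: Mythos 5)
The paper does not prove \cref{thm:directed-grid}; it is the Directed Grid Theorem, a deep result imported wholesale from \cite{KawarabayashiK2015,HatzelKMM2024a} and used as a black box. There is therefore no in-paper argument for your sketch to be compared against.

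Taken on its own terms, your outline gets the starting point right: high directed tree-width does yield a haven/bramble, and from that one extracts a large well-linked set. But the iterative ``build $W_1$, extract $2k$ radial paths, find $W_2$ concentric with $W_1$, repeat'' scheme is not the construction used in either cited paper, and it is not clear it can be pushed through. In the directed setting, a closed walk you split off need not behave like an ``outer cycle'' with a well-defined inside, the radial paths you extract can re-enter and exit the region cut off by $W_1$ in the wrong direction, and there is no analogue of the planar/topological bookkeeping that makes the undirected concentric-cycle argument go. The actual proofs do not refine a nested family of cycles; they instead pass through a chain of intermediate combinatorial objects (well-linked path systems, then ``webs,'' then ``fences,'' then a cylindrical grid), with Ramsey-type clean-up and rerouting lemmas at each stage specifically designed to control the direction in which the two path families cross. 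Your sketch names exactly this as ``the hard part'' and then declares it handled (``once this invariant is maintained through all $k$ rounds\ldots''), which is precisely the content of the theorem and not something that follows from the outline you gave. As written, this is a roadmap pointing at the difficulty, not a proof; the entire technical load of \cite{KawarabayashiK2015,HatzelKMM2024a} is deferred.
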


We also need the following theorem which captures the cases in which a cylindrical wall can be used for routing a half-integral linkage. 

\begin{theorem}[\cite{GiannopoulouKKK2022}]\label{thm:routing-wall}
  Let $G$ be a digraph, $k\ge 3$ be an integer, and $S := \{s_1, \dots, s_k\}$, $T := \{t_1, \dots, t_k\} \subseteq V(G)$ be sets of size $k$ 
  such that $G$ contains a cylindrical wall $W$ of order $m=k(6k^2+2k+3)$.
  Then in time $\OOO(n^c)$, for some constant $c$ independent of $G, S$, and $T$, one can output either
    \begin{enumerate}[nosep]
  \item a separation $(A_1,B_1)$ of order less than 
    $k$ such that $A_1$ contains $S$ and $B_1$ contains a subwall of $W$ of order at least $m-2k$, 
  \item a separation $(A_2,B_2)$ of order less than $k$ 
  such that $B_2$ contains $T$ and $A_2$ contains a subwall of $W$ of order at least $m-2k$, or    
  \item a set of paths $P_1, \dots, P_k$ in $G$ such that
  $P_i$ links $s_i$ to $t_i$ for $i\in [k]$, and
  each vertex in $G$ is used by at most two of
  these paths.
  \end{enumerate}
\end{theorem}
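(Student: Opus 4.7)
The plan is to apply Menger's theorem twice, and when both applications succeed to exploit the cylindrical wall $W$ as a crossbar to realise an arbitrary matching with congestion~$2$.

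First I would test, via a standard max-flow computation, whether there are $k$ pairwise vertex-disjoint directed paths from $S$ to $V(W)$ in $G$. If not, Menger's \cref{thm:Menger} delivers in polynomial time a directed separation $(A_1,B_1)$ of order at most $k-1$ with $S \subseteq A_1$ and $V(W) \cap (B_1 \setminus A_1) \neq \emptyset$. Since the separator has fewer than $k$ vertices, it intersects at most $k-1$ concentric cycles of $W$ and at most $k-1$ of its radial paths, so a short pigeonhole argument on $\WWW_m$ shows that $B_1$ still contains a cylindrical subwall of $W$ of order at least $m - 2k$. Outputting $(A_1,B_1)$ gives option~(1), and the symmetric test from $V(W)$ to $T$ handles option~(2).

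If both tests succeed, I obtain linkages $Q_1, \ldots, Q_k$ from $S$ to $V(W)$ and $R_1, \ldots, R_k$ from $V(W)$ to $T$. After truncating each $Q_i$ at its first vertex in $V(W)$ and each $R_i$ at its last vertex in $V(W)$, I have $k$ \emph{entry} vertices and $k$ \emph{exit} vertices on $W$. It then suffices to route, inside $W$ alone, paths realising the permutation $\pi$ of $[k]$ that matches entry $i$ to the exit lying on the unique $R_{\pi(i)}$ whose $T$-endpoint is $t_i$; concatenating the $Q_i$, the wall routing, and the $R_{\pi(i)}$ then produces option~(3).

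The main obstacle is this routing through the wall. The idea is to split the $m$ concentric cycles of $W$ into three bands: an outer band of $O(k)$ cycles used to slide the entry vertices along cycles onto $k$ chosen radial paths in a prescribed cyclic order, a symmetric inner band of $O(k)$ cycles doing the same for the exit vertices, and a middle band of roughly $6k^2$ cycles to realise $\pi$ by a sorting network of adjacent transpositions. Each adjacent transposition uses two neighbouring radial paths and one dedicated cycle from the middle band, with the two strands meeting in exactly one vertex of that cycle --- incurring congestion~$2$ at precisely that vertex and nowhere else. Since any permutation of $[k]$ decomposes into at most $\binom{k}{2}$ adjacent transpositions, the middle band suffices, and the chosen order $m = k(6k^2+2k+3)$ leaves enough room for the three bands together with the attachment arms of the $Q_i$ and $R_j$. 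The most delicate part is the careful layer-by-layer bookkeeping needed to guarantee that no vertex of $G$ lies on three of the concatenated paths, which is ultimately a routing argument local to $\WWW_m$ together with a disjointness check for the parts of the $Q_i$ and $R_j$ that lie outside $W$.
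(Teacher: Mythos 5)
This theorem is not proved in the present paper; it is imported verbatim from \cite{GiannopoulouKKK2022}, so there is no in-paper proof to compare against. Your two Menger applications are sound, and in fact slightly over-engineered: a Menger separation between $S$ and $V(W)$ places all of $V(W)$ inside $B_1$, so $B_1$ already contains $W$ itself and the ``$m-2k$'' pigeonhole count is unnecessary. The congestion bookkeeping outside $W$ also comes for free, since the $Q_i$ are pairwise disjoint and the $R_j$ are pairwise disjoint, so every vertex off the wall lies on at most one $Q_i$ and at most one $R_j$. The genuine content of the theorem is the crossbar routing inside $W$ for outcome~(3), and that is where your sketch has real gaps.

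First, the entry points $e_i$ and exit points $f_j$ are arbitrary vertices of $W$, scattered across all $m$ concentric cycles rather than attached to the two boundary cycles of the cylinder. Your ``outer band / inner band'' plan presupposes that one can locally slide them onto $k$ designated radial paths using only $O(k)$ boundary cycles; routing $2k$ scattered wall vertices to a canonical frame with congestion~$2$ is itself a nontrivial lemma your outline leaves unproved. Second, a directed cylindrical wall is far more rigid than its undirected counterpart: each concentric cycle is an oriented cycle, and the $2m$ radial paths alternate between inbound and outbound. An adjacent-transposition gadget in an undirected grid exploits both horizontal directions freely; here you have one direction around each cycle and only half the radial paths available in each radial direction, so the sorting-network argument requires a substantially different gadget together with an orientation-aware analysis that your sketch does not supply. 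Third, the arithmetic does not match: $m = k(6k^2+2k+3) = 6k^3 + O(k^2)$, while your three bands consume only $O(k)+O(k^2)+O(k)$ concentric cycles. The missing factor of $k$ is strong evidence that the actual proof pays for steps --- most likely the phase of bringing all $2k$ endpoints to a common frame with bounded congestion --- that your outline has abstracted away. As it stands, your proposal correctly delivers the two separation outcomes but not the routing outcome.
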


  We close the section by adapting the disjoint paths algorithm for
  bounded directed tree-width by Johnson et al. \cite{JohnsonRST2001}
  (see also \cite{KreutzerO2014}) to our setting. To this extent let~$k,c,b \geq 1$.

  \problemdef{$k$-Disjoint Directed Paths with Congestion $c$ and Bound $b$}
  { A multi digraph $G$, a set $U \subseteq V(G)$
    of \emph{big vertices},  and distinct vertices $s_1, \dots, s_k,
    t_1, \dots, t_k \in V(G) \setminus U$. 
    Furthermore, for every $u \in U$ a set $F(u)$ of \emph{feasible routings} $\{(e_1, f_1), \dots,
    (e_s, f_s)\}$, where $s \leq b$, such that $e_1, \dots, e_s$ are (not
    necessarily distinct) edges with head $u$ and $f_1, \dots, f_s$
    are (not necessarily distinct) edges with tail $u$.}
  {Decide whether there exist~$(s_i,t_i)$-paths $P_i$
    in $G$ for every~$1 \leq i \leq k$ such that every vertex in $V(G) \setminus U$ is contained in at most $c$ of these paths. Furthermore, for  
    $u \in U$ let
    $R(u) := \{ (e, f) \sth e = (v, u), f = (u, w) \in E(G)$ and there is $1
    \leq i \leq k$ such that $P_i$ contains the subpath $(v, u, w) \}$.
    If $R(u) \not= \emptyset$ then $R(u) \in F(u)$.}

    \begin{theorem}\label{thm:bounded-dtw}
    There is an algorithm running in time $|G|^{O(b\cdot(k+\dtw(G)))}$ that solves
    the $k$-Disjoint Directed Paths Problem with Congestion $2$ and Bound $b$. 
  \end{theorem}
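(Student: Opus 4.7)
The plan is to extend the Johnson--Robertson--Seymour--Thomas dynamic-programming algorithm~\cite{JohnsonRST2001} for disjoint paths on digraphs of bounded directed tree-width so that it simultaneously handles congestion~$c=2$ and the feasibility lists $F(u)$ at the big vertices. First, I invoke \cref{thm:comp-dtw} to compute, in fpt-time, a nice directed tree-decomposition $\TTT = (T, \beta, \gamma)$ of $G$ of width $O(\dtw(G))$. The niceness condition that $\beta(T_v)$ is a strongly connected component of $G - \gamma(e)$ for every edge $e = (u, v) \in E(T)$ is what makes the DP sound: once a vertex leaves $\Gamma$ on the way down the tree it cannot be reached again by any directed walk of a candidate solution without crossing a guard, so the entire interaction between a subtree and the rest of the graph is captured by $\Gamma$.

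At each node $t \in V(T)$ I associate a DP state that records (i) for each $i \in [k]$, the ordered sequence of vertices of $\Gamma(t)$ visited by the candidate path $P_i$ inside $\beta(T_t)$---its \emph{trace}---together with markers indicating whether each trace-endpoint is still open (to be continued above $t$), already connected to $s_i$, or already connected to $t_i$; (ii) for every non-big vertex $v \in \Gamma(t)$, the number of traces using $v$, which must be at most $c = 2$; and (iii) for every big vertex $u \in \Gamma(t)$, the partial set of committed $(e,f)$ edge-pairs already realised at $u$ together with a guess of the element of $F(u)$ into which this set will eventually grow. Since $|\Gamma(t)| = O(\dtw(G))$, every element of $F(u)$ has size at most $b$, and there are $k$ traces, the number of states at $t$ is at most $|G|^{O(b(k+\dtw(G)))}$. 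The DP proceeds bottom-up on $T$: at leaves I enumerate all feasible states realisable inside the single bag together with its guards; at an internal node $t$ I process the children one at a time, combining the running partial state at $t$ with each child-state by checking that traces agree on $\Gamma(t) \cap \Gamma(t_i)$, that non-big-vertex congestion stays within $c=2$, and that the merged committed edge-pairs at every big vertex are still contained in the guessed element of $F(u)$. When a big vertex $u$ leaves $\Gamma$ for good---which happens at the unique $t_u$ with $u \in \beta(t_u)$, by the uniqueness of bag-membership---the routing at $u$ is finalised against $F(u)$. I accept at the root iff some state yields, for every $i$, a trace that is a complete $(s_i,t_i)$-path.

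The main technical obstacle I expect is the consistency bookkeeping for big vertices that appear in several guards incident to a single node~$t$: such a vertex may accumulate contributions from multiple children and from the current bag, and all of them must glue into a single element of $F(u)$. This is precisely why the DP state carries, for each active big vertex, a guess of the target element of $F(u)$; keeping this information inflates the state-count by a factor of $|G|^{O(b\cdot|\Gamma(t)|)}$ but is what makes the merge step a local, verifiable check. Processing one child at a node $t$ then costs $|G|^{O(b(k+\dtw(G)))}$, and with the standard sequential-combination trick for nice decompositions the per-node cost stays at the same magnitude; summing over the $O(n)$ nodes of $T$ gives the claimed $|G|^{O(b(k+\dtw(G)))}$ running time. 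Correctness follows by induction on $T$ using the strong-connectivity guarantee from niceness: any solution restricted to $\beta(T_t) \cup \Gamma(t)$ induces a valid DP state at $t$, and conversely any valid DP state at the root can be turned into an actual solution by tracing back through the DP table.
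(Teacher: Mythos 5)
Your overall plan matches the paper's: compute a nice directed tree-decomposition via \cref{thm:comp-dtw}, then run a bottom-up dynamic programme that tracks how candidate paths interact with $\Gamma(t)$ and how big vertices are routed against their $F(u)$-lists. However, there is a genuine gap in the running-time analysis, and it corresponds precisely to the paper's key Lemma~\ref{lem:split-bounded}.

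You write the state for a node $t$ as, per path $P_i$, the ordered sequence of $\Gamma(t)$-vertices visited inside $\beta(T_t)$, and assert that the number of states is $|G|^{O(b(k+\dtw(G)))}$. That bound does not follow from the state description you give. Each individual trace can have length up to $|\Gamma(t)| = O(\dtw(G))$, and a naive count of $k$ independent ordered subsequences of $\Gamma(t)$ gives $|G|^{O(k\cdot\dtw(G))}$ states, which can be substantially larger than $|G|^{O(b(k+\dtw(G)))}$ when $b$ is small relative to $\min(k,\dtw(G))$. To recover the claimed bound you need the observation that the congestion bound caps the \emph{total} interaction with $\Gamma(t)$: a vertex of $\Gamma(t)$ appears in at most $b$ paths, and more to the point, any admissible linkage $\LLL$ re-enters $\beta(T_t)$ through $\gamma(e)$ at most $b\cdot|\gamma(e)|$ times, so the total number of maximal sub-segments of $\LLL$ inside $\beta(T_t)$ is at most $k + b\cdot\dtw(G)$. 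This is exactly what the paper proves as Lemma~\ref{lem:split-bounded} ($w$-boundedness), and it is what makes $\sBound(\beta(T_t))$ — and hence any reasonable DP state — representable in $|G|^{O(k + b\cdot\dtw(G))}$. Without this argument the DP is not obviously polynomial for fixed parameters. Relatedly, the paper's state is deliberately leaner than yours: rather than per-path ordered vertex traces, it records only the multiset $\lBound(\LLL, \beta(T_t))$ of entry/exit \emph{edge} pairs (with no path identity), and the combination step (Lemma~\ref{lem:comp-w-bounded}) is a non-trivial ``compatibility'' argument over cross edges between the two sides being merged. Your one-line ``check that traces agree on $\Gamma(t)\cap\Gamma(t_i)$'' hides most of the real work there. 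Finally, a smaller inaccuracy: what makes the DP sound is the basic tree-decomposition guard property, not niceness per se; niceness is used to order the children so that the prefix unions $\bigcup_{j\le i}\beta(T_{c_j})$ stay $w$-bounded and can be processed sequentially.
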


  To prove the theorem we first need some preparation. We fix a digraph $G$, a set $U\subseteq V(G)$ of big vertices, a set $F(u)$ of feasible routings for each~$u \in U$, and terminals~$s_1,\ldots,s_k,t_1,\ldots,t_k \in V(G) \setminus U$ for the exposition in the rest of this section.

  \paragraph{Boundaries.} Let $G$ be a digraph and let $S \subseteq
  V(G)$. Let $P$ be a path in $G$ whose endpoints are not in $S$.
  We define $\lsplit_G(P, S)$ as the set of triples $(e, P', f)$ such
  that $P'$ is a maximal subpath of $P$ entirely  contained in $G[S]$, $e$ is
  the edge of $P$ whose head is the first vertex of $P'$, and $f$ is
  the edge of $P$ whose tail is the last vertex of $P'$.
  For a set $\LLL$ of paths with no endpoint in $S$ we define
  $\lsplit_G(\LLL, S) := \bigcup \{ \lsplit_G(P, S) \sth P \in \LLL \}$.
     Note that if $\LLL$ is only half-integral, then 
 there can be paths $P \not= P' \in \LLL$ such that $\lsplit_G(P, S) = \lsplit_G(P', S)$. We need to keep track of such duplicates and therefore agree that $\lsplit_G(\LLL, S)$ is a multiset.

  The \emph{paths} of $\lsplit_G(\LLL, S)$ are the paths $P'$ in $G$ such
  that $(e, P', f) \in \lsplit_G(\LLL, S)$, for some edges $e, f \in
  E(G)$. In particular~$P'$ is a subpath of some path in~$\LLL$. For simplicity we will abuse notation and write~$\{P_1,\ldots,P_\ell\} = \lsplit_G(\LLL,S)$ to mean the set of paths of~$\lsplit_G(\LLL, S)$ whenever it may not cause confusion. We define the \emph{boundary}  of $\lsplit_G(\LLL, S)$ as the
  multiset $\lBound(\LLL, S) := \{(e, f) \sth$
  there is $P' \subseteq G$ with $(e,  P', f) \in \lsplit_G(\LLL, S) \}$. 
  
  \paragraph{Admissible linkages.} Let $\LLL$ be a set of paths in
  $G$. $\LLL$ is \emph{admissible} if every vertex of $V(G) \setminus U$ appears
  in at most $2$ paths in $\LLL$ and for every $u \in U$ if the set $\LLL(u) := \{ (e, f)
  \sth e = (w, u), f = (u, v)$ and $(w,u,v)$ is a subpath of some path in
  $\LLL$ $\}$ is non-empty then $\LLL(u)$ is a feasible routing in $F(u)$.
  
  Observe that if $\LLL$ is admissible, then every vertex $v \in V(G)$
  occurs in at most $\ell \leq b$ paths contained in $\lsplit_G(\LLL, S)$
  and these must come from sets $\lsplit_G(P_1, S), \dots, \lsplit_G(P_\ell, S)$
  such that $P_i \not= P_j$ whenever $i \not= j$. Here and below we always assume that $b \geq 2$, i.e.,~that big vertices can be used at least as often as the vertices in $V(G) \setminus U$.
  
\paragraph{$w$-Boundedness.} For $w \geq 0$ we say that a set $S \subseteq V(G)$ is \emph{$w$-bounded} if
  for every admissible linkage
  $\LLL$ of order at most $k$ whose endpoints are not in $S$, the set
  $\lsplit_G(\LLL, S)$ contains at most $k+bw$ paths.

  We define $\sBound(S) := \{ \lBound(\LLL, S) \sth \LLL$ is an admissible linkage 
  with no endpoint in $S \}$.
  Note that if $S$ is $w$-bounded then each $B \in \sBound(S)$ contains at
  most $k+bw$ pairs of edges and thus 
  $\sBound(S)$ can contain at most ${ |E(G)| \choose 2(k+bw) }$ sets.

\medskip
  The following is an adaptation of standard results on directed
  tree-decompositions (see e.g.,\cite[Lemma 1.12.8,
  1.12.9]{KreutzerO2014}) that factors in the presence of big vertices.
  
  Let $\TTT := (T, \beta, \gamma)$ be a directed tree-decomposition of $G$ of
  width $w \in \N$. As explained above, the directed tree-decomposition computed by
  the algorithm in \cref{thm:comp-dtw} is nice. This implies that we can assume that the
  children $c_1, \dots, c_\ell$ of a node $t \in V(T)$ are ordered such
  that if $\ell \geq i > j \geq 1$ then $G$ has no edge from $\beta(T_{c_i})$ to
  $\beta(T_{c_j})$. 

  \begin{lemma}\label{lem:split-bounded}
  For every $t \in V(T)$ the set $\beta(T_t)$ is $w$-bounded. Furthermore,
  if $t \in V(T)$ and $s_1, \dots, s_\ell$ are the children of $t$ ordered
  as above then for all $1 \leq j \leq \ell$, $\bigcup\{ \beta(T_{s_i}) \sth i \leq  j\}$ is
  $w$-bounded. 
\end{lemma}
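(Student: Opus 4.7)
The plan is to associate, with each set $S$ in the statement, a small \emph{guard set} $X \subseteq V(G)$ with the property that every walk in $G$ from $S$ to $S$ whose internal vertices lie in $V(G)\setminus S$ contains a vertex of $X$. From such an $X$ of size at most $w$, $w$-boundedness follows by a standard counting argument. Fix an admissible linkage $\LLL$ of order at most $k$ with no endpoints in $S$. For each $P\in\LLL$, the maximal subpaths of $P$ in $S$ are separated by \emph{gaps}, i.e., sub-walks of $P$ through $V(G)\setminus S$ joining two consecutive $S$-subpaths. Each gap meets $X$ by assumption, and since $P$ is simple, distinct gaps of $P$ meet $X$ at distinct vertices. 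Hence the total number of gaps over all paths of $\LLL$ is bounded by $\sum_{x\in X}|\{P\in\LLL:x\in V(P)\}|$, which is at most $b|X|$ by admissibility: non-big vertices of $X$ lie on at most $c\le b$ paths, while a big vertex $u$ lies on at most $|\LLL(u)|\le b$ paths, because $\LLL(u)\in F(u)$ and each feasible routing in $F(u)$ has at most $b$ pairs. Each path $P\in\LLL$ with at least one subpath in $S$ contributes exactly one initial subpath plus one additional subpath per gap of $P$, so $|\lsplit_G(\LLL,S)|\le k + b|X|\le k+bw$.

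For $S=\beta(T_t)$ the guard set is simply $X:=\gamma(e_0)$, where $e_0$ denotes the edge in $T$ from the parent of $t$ to $t$ (with $\gamma(e_0):=\emptyset$ when $t$ is the root of $T$, in which case $S=V(G)$ and there are no gaps at all). The required property is exactly the defining condition of a directed tree-decomposition applied to $e_0$, and clearly $|X|\le|\Gamma(t)|\le w$.

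For $S_j:=\bigcup_{i\le j}\beta(T_{s_i})$ set
\[
X_j \;:=\; \beta(t)\,\cup\,\gamma(e_0)\,\cup\,\bigcup_{i\le j}\gamma(e_i),
\]
where $e_i$ is the edge from $t$ to $s_i$ in $T$. Since $X_j\subseteq\Gamma(t)$, we have $|X_j|\le w$. To verify the guard-set property, consider a walk $W$ from $u\in\beta(T_{s_i})$ to $v\in\beta(T_{s_{i'}})$ (with $i,i'\le j$) whose internal vertices lie in $V(G)\setminus S_j = \beta(t)\,\cup\,\bigcup_{i''>j}\beta(T_{s_{i''}})\,\cup\,(V(G)\setminus\beta(T_t))$. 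If $i=i'$ then $W$ leaves and returns to $\beta(T_{s_i})$, so the tree-decomposition property at $e_i$ forces $W$ to meet $\gamma(e_i)\subseteq X_j$. The main obstacle is the case $i\ne i'$. Here, suppose for contradiction that $W$ avoids $X_j$, so in particular $W$ avoids $\beta(t)\cup\gamma(e_0)$. If some internal vertex of $W$ lies outside $\beta(T_t)$, then the sub-walk of $W$ from its exit out of $\beta(T_t)$ to its first return is a walk from $\beta(T_t)$ to $\beta(T_t)$ through $V(G)\setminus\beta(T_t)$, and the property of $e_0$ forces it to contain a vertex of $\gamma(e_0)$, a contradiction. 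Hence all internal vertices of $W$ lie in $\bigcup_{i''>j}\beta(T_{s_{i''}})$, so the final edge of $W$ runs from a vertex of some $\beta(T_{s_{i''}})$ with $i''>j$ into $v\in\beta(T_{s_{i'}})$ with $i'\le j<i''$, which is exactly the kind of edge that the niceness ordering of the children of $t$ forbids.

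Combining the guard set of size at most $w$ with the counting argument yields $|\lsplit_G(\LLL,S)|\le k+bw$ for each of the sets $S$ in the statement, so both $\beta(T_t)$ and each $S_j$ are $w$-bounded. The technical heart of the proof is isolating the guard set $X_j$ and handling the cross-subtree case $i\ne i'$, where we combine the tree-decomposition property at the parent edge $e_0$ with the edge-direction constraint supplied by the niceness ordering of the children of $t$; the remaining book-keeping is purely formal.
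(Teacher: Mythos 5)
Your proof is correct and follows essentially the same strategy as the paper's: each gap between consecutive $S$-subpaths of a solution path must meet $\gamma(e_0)$ (for $\beta(T_t)$) or $\Gamma(t)$ (for the partial unions of children's subtrees), and combining this with the multiplicity bound of $b$ on each guard vertex gives $k+bw$. Your presentation is a bit cleaner—you isolate the guard-set property as a self-contained claim and avoid the paper's unnecessary induction over $j$—and you consistently use the direction of the niceness ordering as stated (no edges from a higher-indexed to a lower-indexed child subtree) by examining the last edge of the gap, whereas the paper's proof text examines the first edge and implicitly assumes the opposite orientation; both variants are sound once the orientation convention is fixed.
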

\begin{proof}
  Let $t \in V(T)$. We show that $\beta(T_t)$ is $w$-bounded. This is clear
  if $t$ is the root for then the split is empty. Thus we may assume that $t$ is not the root of
  $T$. Let $e = (p, t) \in E(G)$ be the incoming edge of $t$. Suppose
  $\beta(T_t)$ is not $w$-bounded and let $\LLL$ be an admissible linkage
  witnessing that $|\lsplit_G(\LLL, \beta(T_t))| > k + bw$. Consider a path $L
  \in \LLL$ and let $\{ P_1, \dots, P_r \} = \lsplit_G(L, \beta(T_t))$. Then $L$
  must contain at least $r - 1$ vertices from $\gamma(e)$. For, if $P_1,
  \dots, P_r$ appear on $L$ in this order, then for $1 \leq i < r$, the
  subpath $I_i$ of $L$ starting at the end of $P_i$ and ending at the start
  of $P_{i+1}$ forms a path from $\beta(T_t)$ to $\beta(T_t)$ containing a
  vertex from $G - \beta(T_t)$. Thus $I_i$ must contain a vertex of $\gamma(e)$.
  As every vertex of $\gamma(e)$ can only be used at most $b$ times, this implies that
  $|\lsplit_G(\LLL, \beta(T_t))| \leq k+bw$; a contradiction.

  We prove the second part by induction on $1 \leq i \leq \ell$. For $i = 1$ this follows from Part 1. Now suppose that $X := \bigcup \{ \beta(T_{s_j}) \sth 1 \leq j < i\}$ is $w$-bounded and let $Y := \beta(T_{s_i})$. Recall that there are no edges from $X$  to $Y$ in $G$. Suppose that $X \cup Y$ was not $w$-bounded. Let $\LLL$ be an admissible linkage witnessing this, i.e.,~$|\lBound(\LLL, X \cup Y)| > k+bw$. Consider a path $L \in \LLL$ and its sequence $L_1, \dots, L_p$ of subpaths which are paths of $\lsplit_G(L, X \cup Y)$, ordered in the order in which they appear on $L$. For $1 \leq j < p$ let $I_j$ be the subpath of $L$ starting at the end of $L_j$ and ending at the start of $L_{j+1}$. Then each $I_j$ must contain a vertex of $\Gamma(t)$. For, if $I_j$ contains a vertex in $V(G) \setminus \beta(T_t)$, then it must also contain a vertex of $\gamma(e)$, where $e$ is the edge of $T$ with head $t$ ($t$ cannot be the root in this case). But if $I_j$ does not contain a vertex of $V(G) \setminus \beta(T_t)$, then its internal vertices must be from $\beta(t) \cup \bigcup \{ \beta(T_{s_r}) \sth i < r \leq \ell \}$. But there are no edges from $X \cup Y$ to $\bigcup \{ \beta(T_{s_r}) \sth i < r \leq \ell \}$. Thus, $I_j$ must contain a vertex of $\beta(t)$.

  As every vertex of $\Gamma(t)$ can only be used at most $b$ times, the claim follows. 
\end{proof}

The next lemma is the technical core of our algorithm.

\begin{lemma}\label{lem:comp-w-bounded}
  Let $A, B \subseteq V(G)$ be disjoint such that $A$, $B$, and $A \cup B$ are $w$-bounded. Then
  $\sBound(A \cup B)$ can be computed in polynomial time from the sets
  $\sBound(A)$ and $\sBound(B)$. 
\end{lemma}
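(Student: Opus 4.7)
The plan is to give an explicit combinatorial recipe for $\sBound(A \cup B)$ in terms of $\sBound(A)$ and $\sBound(B)$. The key structural observation is that, because $A$ and $B$ are disjoint, every maximal subpath of an admissible linkage $\LLL$ inside $G[A \cup B]$ decomposes uniquely into a chain of maximal subpaths in $G[A]$ and $G[B]$, alternating between the two sides and joined by single edges crossing between $A$ and $B$. The individual sub-blocks of this decomposition are exactly the elements of $\lsplit_G(\LLL, A)$ and $\lsplit_G(\LLL, B)$, so knowing $X := \lBound(\LLL, A)$ and $Y := \lBound(\LLL, B)$ together with the pairing pattern of cross edges determines $\lBound(\LLL, A \cup B)$ completely.

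\textbf{Algorithm.} For every pair $(X, Y) \in \sBound(A) \times \sBound(B)$, iterate over all \emph{valid chainings}: assignments specifying, for each occurrence of a cross edge $e$ as the leaving edge of an $X$-token (respectively, a $Y$-token), which matching token in $Y$ (respectively, $X$) uses $e$ as its entering edge. A chaining is valid if every such cross-edge occurrence is paired correctly and the resulting directed structure decomposes into simple chains which collectively use each token of $X$ and $Y$ exactly once. For each valid chaining, extract the multiset of \emph{outer} edges of the chains---the entering edges of their first tokens and the leaving edges of their last tokens, each of which has exactly one endpoint outside $A \cup B$---and add it to $\sBound(A \cup B)$.

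\textbf{Correctness and main obstacle.} Completeness is straightforward: any $Z = \lBound(\LLL, A \cup B) \in \sBound(A \cup B)$ is witnessed by some admissible $\LLL$, which canonically induces $X := \lBound(\LLL, A) \in \sBound(A)$, $Y := \lBound(\LLL, B) \in \sBound(B)$ and a valid chaining recovering $Z$. The main obstacle, and the delicate part of the argument, is soundness: given $(X, Y, c)$, we must exhibit an admissible linkage in $G$ whose boundary with $A \cup B$ equals the computed output. Taking witnessing linkages $\LLL_X, \LLL_Y$ for $X$ and $Y$, we glue their $A$- and $B$-restrictions along the cross edges prescribed by $c$; disjointness of $A$ and $B$ makes admissibility inside $A \cup B$ automatic, since the congestion bound of $2$ on $V(G) \setminus U$ and the feasibility of routings in $F(u)$ for $u \in U$ were already respected by $\LLL_X$ and $\LLL_Y$ separately. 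Extending consistently outside $A \cup B$ needs care, as $\LLL_X$ and $\LLL_Y$ may route incompatibly there; this is handled by reusing the external pieces of $\LLL_X$ (which already meet the $X$-boundary) and arguing that the new $B$-insertion, governed by $Y$, does not invalidate the required boundary conditions thanks to the token-level rigidity recorded by the chaining.

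\textbf{Polynomial time.} By the $w$-boundedness hypotheses, each element of $\sBound(A)$ and $\sBound(B)$ contains at most $k + bw$ tokens, and therefore $|\sBound(A)|, |\sBound(B)| \leq |E(G)|^{O(k + bw)}$. The number of valid chainings of any fixed $(X, Y)$-pair is bounded solely in terms of $k + bw$, and each chaining together with its induced outer boundary can be computed and verified in polynomial time. Hence the whole procedure runs in time polynomial in $|G|$.
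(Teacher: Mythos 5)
Your proposal is essentially the same route the paper takes: iterate over pairs of boundaries in $\sBound(A)\times\sBound(B)$, pair the cross-edge tokens between $A$ and $B$, chain the $A$-blocks and $B$-blocks alternately along those cross edges, and read off the outer entering/leaving edge of each chain to get a candidate element of $\sBound(A\cup B)$. Completeness and soundness are argued the same way, and the polynomial-time bound comes from the $w$-boundedness exactly as you say.

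There is, however, one place where you have talked yourself into trouble unnecessarily, and it is worth flagging because you describe it as ``the delicate part of the argument.'' You worry that the glued linkage has to be ``extended consistently outside $A\cup B$'' because $\LLL_X$ and $\LLL_Y$ ``may route incompatibly there,'' and you propose reusing the external pieces of $\LLL_X$ and arguing they remain valid. No such step is needed, and the fix you sketch does not obviously work. Recall that $\sBound(S)$ quantifies over \emph{all} admissible linkages with no endpoint in $S$; an admissible linkage is just a collection of paths in $G$ meeting the congestion/feasibility constraints, with no requirement that its paths begin or end at the instance terminals. When you glue the $A$-restriction of $\LLL_X$ to the $B$-restriction of $\LLL_Y$ along the chaining, the resulting chains already have both endpoints one step outside $A\cup B$ (the tail of the first entering edge and the head of the last leaving edge), so they immediately form an admissible linkage witnessing the computed boundary. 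The external segments of $\LLL_X$ and $\LLL_Y$ are simply discarded. In short: soundness only needs the gluing argument you already gave, plus the trivial observation that the endpoints land just outside $A\cup B$; the ``extension'' subproblem you introduced is vacuous, and trying to solve it is a detour that could derail the write-up.

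One further nit: after gluing, a single chain may pass through $A$ (or $B$) more than once, so a priori it is a walk rather than a simple path, and two of its $A$-blocks could share an internal vertex. This is harmless (shortcutting such a walk to a path only decreases vertex multiplicities and does not change the boundary pair $(e,f)$), but if you present the argument you should say so explicitly rather than leave the reader to notice it.
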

\begin{proof}
  To compute $\sBound(A \cup B)$ we need to decide for every possible multiset $X$
  of at most $k+bw$ pairs of edges whether there is an admissible
  linkage $\LLL$ such that $\lBound(\LLL, A \cup B) = X$.
  As $A$ and $A \cup B$ are $w$-bounded, for  every admissible linkage $\LLL$ we have
  $|\lBound(\LLL, A)|, |\lBound(\LLL, A \cup B)| \leq k+bw$. Furthermore,
  $|\lBound(\LLL, B)| \leq k+bw$ as $B$ is $w$-bounded. Hence, to compute $\lBound(\LLL, A \cup B)$ it suffices to consider
  linkages in $G[A], G[A \cup B],$ and $G[B]$ with at most $k+bw$ paths. 
  
  Let $X$ be a multiset of at most $k+bw$ pairs $(e, f)$ of edges such that
  $e$ has its head in $A \cup B$ and its tail outside of $A \cup B$ and $f$
  has its tail in $A \cup B$ and its head outside of $A \cup B$.

  Let $Y_A, Y_B$ be multisets of pairs of edges such that
  \begin{itemize}
  \item if $(e, f) \in Y_A$ then $e \in (V(G) \setminus A) \times A$ and $f \in A \times (V(G) \setminus A)$, and
  \item if $(e, f) \in Y_B$ then $e \in (V(G) \setminus B) \times B$ and $f \in B \times (V(G) \setminus B)$.
  \end{itemize}
  For any multiset $U$ of pairs of edges we define $E(U)$ as the multiset $\{ e, f \sth (e,
  f) \in U \}$. 
  An edge $e$ in $E(Y_A) \cup E(Y_B)$ is called a
  \emph{cross edge} if $e$ has one end in $A$ and the other in $B$.
  An edge of $E(Y_A) \cup E(Y_B)$
  that is not a cross edge is a \emph{boundary edge}. 

  Let $Y_A = \{ (e_1^A,
    f_1^A), \dots, (e_m^A, f_m^A) \}$ and let $Y_B = \{ (e_1^B,
    f_1^B), \ldots, (e_n^A, e_n^B) \}$. Let $C_A$ be the multiset of cross edges in $E(Y_A)$ and $C_B$ be
    the multiset of cross edges in $E(Y_B)$.

    We call $Y_A$ and $Y_B$ \emph{compatible} if the following
    conditions are met.
    \begin{itemize}
    \item There is a bijection $\chi$ between $C_A$ and $C_B$ such that
      for each $f_i^A \in C_A$ there is $1 \leq j \leq n$ such that
      $\chi(f_i^A) = e_j^B$ and $f_i^A = e_j^B$. Likewise, for each $f_i^B \in C_B$ there is $1 \leq j \leq m$ such that $\chi(f_i^B) = e_j^A$ and $f_i^B = e_j^A$.
    \item Let $e \in \{ e_1^A, \dots, e_m^A, e_1^B, \dots, e_n^B \}$ be an
      edge that is not a cross edge. The \emph{path of $e$ in
        $Y_A, Y_B$} is the maximal sequence
      $((e_1, f_1), \dots, (e_p, f_p))$ such that $e_1 = e$, $f_p$ is
      not a cross edge, and for all $1<j\leq p$, $e_j$ is a cross edge and
      $e_j = \chi(f_{j-1})$. Furthermore, if $e = e_i^A$, for some
      $1 \leq i \leq m$, then for all $1 \leq q \leq p$, if $q$ is odd then the
      pair $(e_{q}, f_{q}) \in Y_A$, and if $q$ is even then the pair
      $(e_{q}, f_{q}) \in Y_B$. Otherwise, if $e = e_i^B$, for some
      $1 \leq i \leq n$, then for all $1 \leq q \leq p$, if $q$ is odd then 
      $(e_{q}, f_{q}) \in Y_B$, and if $q$ is even then 
      $(e_{q}, f_{q}) \in Y_A$.

      Then for every $e \in \{ e_1^A, \dots, e_m^A, e_1^B, \dots, e_n^B \}$
      that is not a cross edge, the path of
      $e$ in $Y_A, Y_B$ exists. Let $f(e) \in \{ f_1^A, \dots,
      f_m^A, f_1^B, \dots, f_n^B \}$ be the last edge on the path. Then
      for every $f \in \{ f_1^A, \dots,
      f_m^A, f_1^B, \dots, f_n^B \}$ that is not a cross edge there is
      an $e = e(f) \in \{ e_1^A, \dots, e_m^A, e_1^B, \dots, e_n^B \}$
      that is not a cross edge such that $f = f(e)$.
    \end{itemize}
    Note here that~$\chi$ is a bijection between multisets. Thus the map from $\{e_1^A,\ldots,e_n^A,e_1^B,\ldots,e_m^B\}$ to $\{f_1^A,\ldots,f_n^A,f_1^B,\ldots,f_m^B\}$ which maps~$e$ to~$f(e)$ is a bijection between multisets. 
    
    If $Y_A, Y_B$ are compatible we call the set $\{ (e, f(e)) \sth e  \in \{
    e_1^A, \dots, e_m^A, e_1^B, \dots, e_n^B \} \text{ is not a cross edge} \}$ the \emph{boundary} of the pair $Y_A, Y_B$ and denote it by
    $\lBound(Y_A, Y_B)$.  Let $X = \{ (e_1, f_1), \ldots, (e_\ell, f_\ell) \}$. We call $(X, Y_A, Y_B)$
    \emph{compatible} if $X = \lBound(Y_A, Y_B)$.
  \begin{Claim}\label{lem:comp-w-bounded:1}
    \begin{enumerate}
    \item Let $\LLL$ be an admissible linkage with ends outside of
      $A \cup B$. Then
      $(\lBound(\LLL, A \cup B), \lBound(\LLL, A), \lBound(\LLL, B))$ is
      compatible.
      \item Conversely, let $(X, Y_A, Y_B)$ be compatible and
        let $\LLL_A$ and $\LLL_B$ be admissible linkages with all internal
        vertices of $\LLL_A$ in $A$ and all internal vertices of $\LLL_B$ in
        $B$ such that
        $Y_A = \{ (e, f) \sth $ there is a path in $\LLL_A$ starting with $e$
        and ending with $f \}$ and $Y_B  = \{ (e, f) \sth $ there is a path in $\LLL_B$ starting with $e$
        and ending with $f \}$.

        Then the digraph obtained from the union of $\LLL_A$ and $\LLL_B$
        contains an admissible linkage $\LLL$ such that $\lBound(\LLL, A
        \cup B) = X$.
    \end{enumerate}
  \end{Claim}
  \begin{ClaimProof}
  \begin{enumerate}
      \item Let $\LLL$ be an admissible linkage with
    ends outside of $A \cup B$. Let $X = \lBound(\LLL, A \cup B)$, $Y_A =
    \lBound(\LLL, A)$, and $Y_B = \lBound(\LLL, B)$.
    Then $X$ contains the pairs $(e, f)$ of edges with exactly one end
    outside of $A \cup B$ such that there is a subpath $L'$ of a path $L \in
    \LLL$ that starts in $e$ and ends in $f$.
    By symmetry we may assume that $e$ has its head in $A$. Let $e_1 =
    e$ and let $e_2, \dots, e_\ell$ be the cross edges appearing on $L'$
    in this order. Let $e_{\ell + 1} = f$. Then for all $1 \leq i \leq \ell$ if
    $i$ is odd then 
    $(e_{i}, e_{i+1}) \in Y_A$ and if $i$ is even then $(e_{i}, e_{i+1}) \in Y_B$.
    We define $\chi_L(e_{i+1}) =
    e_{i}$, for all $1 \leq i \leq \ell-1$ with $i$ odd.

    Let $\chi$ be the union of $\chi_L$ taken over all paths in $\LLL$. Then
    the path of $e$ in $Y_A, Y_B$ with respect to $\chi$ is defined
    and starts in $e$ and ends in $f$.
    Thus $\chi$ witnesses that
    $(X, Y_A, Y_B)$ is compatible.
      \item Let $(X, Y_A, Y_B)$ be compatible witnessed by
    the bijection $\chi$. Consider $(e, f) \in X$. By definition of
    compatible triples, $(e, f) \in \lBound(Y_A, Y_B)$ and the path
    $P_{e,f}$ of $e$ in $Y_A, Y_B$ exists and ends in $f$. Let
    $P_{e,f} = \big((e_1, f_1), (e_2, f_2), \ldots, (e_\ell, f_{\ell})\big)$, with $e = e_1$ and $f = f_\ell$. Recall that $e_i = f_{i-1}$ for all $1 < i \leq \ell$. By 
    symmetry we assume that the head of $e$ is in $A$. Then,
    for all $1 \leq i \leq \ell$, if $i$ is odd then $(e_i, f_{i}) \in Y_A$ and
    if $i$ is even then  $(e_{i}, f_{i}) \in Y_B$. For each $1 \leq i \leq \ell$ with
    $i$ odd let $L_i$ be the path in $\LLL_A$ starting at $e_i$ and
    ending at $f_i = e_{i+1}$. Similarly, for all $1 \leq i \leq \ell$ with $i$ even
    let $L_i$ be the path in $\LLL_B$ starting at $e_i$ and
    ending at $f_i = e_{i+1}$; these exist by assumption. Then $L_{e,f} := L_1 + L_2 +
    \dots + L_\ell$ is a path in $G$ which starts at $e$, ends at $f$, and
    has all internal vertices in $A \cup B$.
    We claim that  $\LLL = \{ L_{e,f} \sth (e, f) \in X\}$ is the required
    admissible linkage. What is left to show is that no non-big vertex---vertices in~$V(G) \setminus U$---is used more than twice and that for every big vertex the feasibility
    condition is satisfied. But this follows immediately from the
    condition that $\LLL_A$ and $\LLL_B$ are admissible and that if $u \in A \cap U$
    is a big vertex in $A$ then if some path in $\LLL$ contains a subpath
    $(w, u, v)$ then this is already a subpath of some path in $\LLL_A$.
    Thus the feasibility condition on $u$ is satisfied as it is
    satisfied in $\LLL_A$. The analogous argument holds for $u \in B \cap U$.
    Thus $\LLL$ is an admissible linkage as required.
  \end{enumerate}
  This concludes the proof of the claim.
  \end{ClaimProof}
 To compute $\sBound(A \cup B)$ from $\sBound(A)$ and $\sBound(B)$ we
 proceed as follows.
 Let $X := \{ \lBound(Y_A, Y_B) \sth Y_A \in \sBound(A), Y_B \in \sBound(B)$
 and $Y_A$ and $Y_B$ are compatible$\}$. \cref{lem:comp-w-bounded:1} implies that  $X = \sBound(A \cup B)$.
 It is easily seen that $X$ can be computed in polynomial time given
 $\sBound(A)$ and $\sBound(B)$.
\end{proof}

We are now ready to prove \cref{thm:bounded-dtw}.

\smallskip

\begin{proof}[Proof of \cref{thm:bounded-dtw}]
  We first apply \cref{thm:comp-dtw} to compute a nice directed tree-decomposition $(T, \beta, \gamma)$ of $G$ of width $w \leq 3\cdot\dtw(G)$. Recall that we assume the instance $(G, s_1, \ldots, s_k, t_1, \ldots t_k)$ to be in normal form. This implies that we can always ensure that $\beta(r) = \{ s_1, \ldots, s_k, t_1, \ldots, t_k \}$ for the root $r$ of $T$.  
  
  By induction on the directed tree-decomposition $(T, \beta, \gamma)$ we
  compute for each $t \in V(T)$ other than the root $r$ of $T$ the set
  $\sBound(\beta(T_t))$. Clearly, $G$ contains an admissible linkage
  linking $s_i$ to $t_i$, for $1 \leq i \leq k$, if and only if
  $\{ ( e_i, f_i) \sth 1 \leq i \leq k \} \in \sBound(\beta(T_{t'}))$, where
  $t'$ is the unique successor of the root $r$ of $T$, $e_i$ is the
  unique outgoing edge of $s_i$, and $f_i$ is the unique incoming edge
  of $t_i$. For convenience we write $\sBound(s)$ for
  $\sBound(\beta(T_s))$, for $s \in V(T)$.

  If $t$ is a leaf of $T$, then $\beta(t)$ has size bounded by $w$ and
  thus we can compute $\sBound(t)$ by brute-force. 
  Now suppose $t \in V(T)$ has children $c_1, \dots, c_\ell$,
  $\ell \geq 1$, ordered such that for $i > j$ there is no edge from
  $\beta(T_{c_j})$ to $\beta(T_{c_i})$.  Assume that
  $\sBound(c_i)$ has already been computed for all $i$. If $\ell > 1$ we
  first compute for all $1 \leq j \leq \ell$ the set
  $\sBound(\bigcup \{ \beta(T_{c_i}) \sth 1 \leq i \leq j\}$ by applying
  \cref{lem:comp-w-bounded} repeatedly. 

  We then compute $\sBound(\beta(t))$. As any set $\beta(t)$ of size $\leq w$ is necessarily $w$-bounded, another application
  of \cref{lem:comp-w-bounded} yields the set $\sBound(\beta(T_t))$.
\end{proof}

\section{Uncrossing $2$-Separations}
\label{sec:uncross}

The following observation follows immediately from the \cref{def:separation} of directed separations.
Let $G$ be a digraph and $W = (C_1, \dots, C_k, P_1, \dots, P_{2k})
\subseteq G$ be a cylindrical wall of order $k$. 
If $S = (A, B)$ is a separation of $G$ of order $< k$ then exactly
one of the two sides $X \in \{ A, B\}$ of $S$ contains a cycle $C_i$, for some $1 \leq i
\leq k$. We say that \emph{$X$ contains the majority of $W$}.

\begin{definition}[Quadrants and Crosses]
  Let $S_1 = (A_1, B_1)$ and $S_2 = (A_2, B_2)$ be separations in $G$
  such that $\partial^+(S_i) = A_i$, for $i=1,2$. For $i=1,2$ let $X_i = A_i \cap B_i$ be the separators of $S_1, S_2$, respectively. 

  The \emph{quadrants} of the pair $(S_1, S_2)$ are the sets $Q_T := A_1 \cap
  A_2$, $Q_L := A_1 \cap B_2$, $Q_R := B_1 \cap A_2$, and $Q_B := B_1 \cap
  B_2$. We call $Q_T$ the \emph{top} and $Q_B$ the \emph{bottom quadrant}.
  $Q_L$ and $Q_R$ are called the \emph{middle quadrants}.
  
  The \emph{corners} of $(S_1, S_2)$ are the sets $Q_i \cap (X_1 \cup X_2)$,
  for $i \in \{ T, B, L, R\}$. The corner of $Q_T$ is called the
  \emph{top corner} and the corner of $Q_B$ is the \emph{bottom
    corner}.

  We say that  $(S_1, S_2)$ \emph{crosses} if $Q \setminus (X_1 \cup X_2) \not=
  \emptyset$ for every quadrant $Q$  of $(S_1, S_2)$. Otherwise $S_1$ and
  $S_2$ are uncrossed.
\end{definition}

The next lemma states the well-known submodularity property of directed separations. See e.g.,~\cite{GiannopoulouKKK2022}.

\begin{lemma}[Submodularity]\label{lem:submodularity}
  For $i=1,2$ let $S_i = (A_i, B_i)$ be separations in a digraph $G$
  such that $\partial^+(S_i) = A_i$. Then $X_T = (A_1 \cap A_2, B_1 \cup B_2)$ and $X_B
  = (A_1 \cup A_2, B_1 \cap B_2)$ are separations in $G$ and $|X_T| + |X_B| \leq
  |S_1| + |S_2|$.
\end{lemma}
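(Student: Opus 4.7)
The plan is to verify the two claims separately: first that $X_T$ and $X_B$ are directed separations of $G$, and then the size inequality. The hypothesis $\partial^+(S_i)=A_i$ for $i=1,2$ translates (via \cref{def:separation}) into the concrete statement that $G$ has no edge with tail in $B_i\setminus A_i$ and head in $A_i\setminus B_i$. This is the only input I will use about edge directions.

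For the separation property of $X_T=(A_1\cap A_2,\, B_1\cup B_2)$, the covering condition $(A_1\cap A_2)\cup(B_1\cup B_2)=V(G)$ is immediate from $A_i\cup B_i=V(G)$. The goal is to show there is no edge $e=(u,v)$ with $u\in(B_1\cup B_2)\setminus(A_1\cap A_2)$ and $v\in(A_1\cap A_2)\setminus(B_1\cup B_2)$. Such $v$ lies in $(A_1\setminus B_1)\cap(A_2\setminus B_2)$. Assume without loss of generality that $u\in B_1$. If $u\notin A_1$, then $e$ has tail in $B_1\setminus A_1$ and head in $A_1\setminus B_1$, contradicting $\partial^+(S_1)=A_1$. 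So $u\in A_1\cap B_1$; but $u\notin A_1\cap A_2$ forces $u\notin A_2$, hence $u\in B_2\setminus A_2$, and together with $v\in A_2\setminus B_2$ this contradicts $\partial^+(S_2)=A_2$. An entirely analogous argument handles $X_B=(A_1\cup A_2,\, B_1\cap B_2)$: any potential bad edge would have tail $u\in(B_1\setminus A_1)\cap(B_2\setminus A_2)$ and head $v\in A_1\cup A_2$ with $v\notin B_1\cap B_2$; whichever $A_i$ contains $v$, case analysis on whether $v\in B_i$ routes the edge into either a cross edge from $B_1$ to $A_1$ or from $B_2$ to $A_2$, contradicting the hypothesis in both cases.

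For the size inequality, the calculation is purely set-theoretic. Using $V=A_2\cup B_2$ together with inclusion–exclusion, decompose
\[
A_1\cap B_1 = (A_1\cap B_1\cap A_2)\cup(A_1\cap B_1\cap B_2),
\]
and symmetrically for $A_2\cap B_2$, which gives
\[
|S_1|+|S_2| = |A_1{\cap} A_2{\cap} B_1| + |A_1{\cap} A_2{\cap} B_2| + |A_1{\cap} B_1{\cap} B_2| + |A_2{\cap} B_1{\cap} B_2| - 2|A_1{\cap} A_2{\cap} B_1{\cap} B_2|.
\]
The same decomposition applied to $|X_T|=|(A_1\cap A_2)\cap(B_1\cup B_2)|$ and $|X_B|=|(A_1\cup A_2)\cap(B_1\cap B_2)|$ yields exactly the same expression, so in fact $|X_T|+|X_B|=|S_1|+|S_2|$, which is stronger than what is claimed.

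The only subtle step is the case analysis for the separation property: one has to be careful not to confuse membership in the separator $A_i\cap B_i$ (which is allowed on either side of a cross edge) with membership in $A_i\setminus B_i$ or $B_i\setminus A_i$. The counting inequality is routine once the separations are established.
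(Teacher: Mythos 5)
Your proof is correct and complete; the paper itself states this lemma as ``well-known'' and only cites a reference, so there is no in-paper proof to compare against. Both verifications of the no-cross-edge condition are sound (the WLOG $u\in B_1$ is justified by the symmetry of the hypotheses in $S_1,S_2$, and the case split on whether $u\in A_1$, resp.\ $v\in B_i$, cleanly routes every putative bad edge to a forbidden cross edge of $S_1$ or $S_2$). The counting step is also right, and your observation that it is in fact an equality is accurate: one sees it even more quickly from $|A\cap B|=|A|+|B|-|V|$ (valid since $A\cup B=V$), which gives $|X_T|+|X_B|=|A_1\cap A_2|+|B_1\cup B_2|+|A_1\cup A_2|+|B_1\cap B_2|-2|V|=|A_1|+|A_2|+|B_1|+|B_2|-2|V|=|S_1|+|S_2|$ via inclusion--exclusion on each coordinate; the inequality form in the statement is simply the standard phrasing of submodularity and is not meant to suggest a strict drop can occur here.
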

Using submodularity we get the following.

\begin{lemma}\label{lem:uncross}
  Let $G$ be a digraph, $\bar s, \bar t$ be tuples of vertices of $V(G)$ such that there is no separation $(A, B)$ of order $1$ in $G$ with $\bar s, \bar t \subseteq A$ and $B \setminus A \not= \emptyset$.
  
  For $i=1,2$ let $S_i := (A_i, B_i)$ be $2$-separations in $G$ and let $W_i$
  be cylindrical walls of order at least $4$ such that $\bar s, \bar t \subseteq
  A_1 \cap A_2$ and the majority of $W_i$ is contained in $B_i$.
  Furthermore, at least one of $S_1, S_2$ separates the majority of $W_1$ from the majority of $W_2$.
  
  Then either there is a separation $(A, B)$ of order $2$ with~$\bar{s},\bar{t} \subseteq A$ and $B_1 \setminus A_1 \cup B_2 \setminus A_2 \subseteq B$, or there are separations $S_i' := (A_i', B_i')$ of order $2$, for $i=1,2$, such that for all $1 \leq i \leq 2$, $\bar s, \bar t \subseteq A_i'$,
  $B_i'$ contains the
  majority of $W_i$, and $B_1' \setminus A_1' \cap B_2' \setminus A_2' = \emptyset$. Furthermore, $S_1' = S_1$ or $S_2' = S_2$. 
\end{lemma}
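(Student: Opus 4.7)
The approach is to apply the submodularity inequality (\cref{lem:submodularity}) to the pair $(S_1, S_2)$, which yields two directed separations
\[
X_T := (A_1 \cap A_2,\, B_1 \cup B_2)
\quad\text{and}\quad
X_B := (A_1 \cup A_2,\, B_1 \cap B_2),
\]
with $|X_T| + |X_B| \leq |S_1| + |S_2| = 4$. Since $\bar s \cup \bar t \subseteq A_1 \cap A_2$, the terminals lie on the ``top'' side of both $X_T$ and $X_B$. (If the orientations required by \cref{lem:submodularity} do not match those of $S_1, S_2$, one simply swaps the roles of the ``top'' and ``bottom'' submodular separations; the order inequality is unaffected.)

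I would then split on $|X_T|$. When $|X_T| \leq 2$, the separation $(A, B) := X_T$ is itself the desired $2$-separation of the first alternative, since $\bar s \cup \bar t \subseteq A_1 \cap A_2 = A$ and $(B_1 \setminus A_1) \cup (B_2 \setminus A_2) \subseteq B_1 \cup B_2 = B$. When $|X_T| \geq 3$, submodularity forces $|X_B| \leq 1$, and I claim that the given separations already work: setting $S_1' := S_1$ and $S_2' := S_2$, the only non-trivial condition to verify is
\[
(B_1 \setminus A_1) \cap (B_2 \setminus A_2) \;=\; (B_1 \cap B_2) \setminus (A_1 \cup A_2) \;=\; \emptyset.
\]
If this intersection were non-empty, one could pick any vertex in it, enlarge the separator of $X_B$ by it in the $|X_B|=0$ sub-case, and obtain a directed $1$-separation with $\bar s, \bar t$ on one side and non-empty other side, contradicting the hypothesis on $G$. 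The ``furthermore'' clause is trivially met by $S_1' = S_1$.

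The main obstacle I foresee is the boundary sub-case $|X_B| = 0$ with $|B_1 \cap B_2| = 1$, where enlarging the separator by the lone vertex $v$ yields a $1$-separation with empty other side, hence no contradiction. Here one exploits the orientation of $X_B$: the vertex $v$ has either in-degree or out-degree zero with respect to edges with the other endpoint in $A_1 \cup A_2$, so $v$ lies on no directed cycle and in particular on no wall's majority. One can therefore absorb $v$ into an $A$-side by setting, for example, $S_1' := (A_1 \cup \{v\},\, B_1 \setminus \{v\})$ and $S_2' := S_2$. A direct verification (using the orientations of $S_1$ and $X_B$ to rule out cross edges through $v$) shows $S_1'$ remains a $2$-separation, still contains the majority of $W_1$ in $B_1'$, achieves the disjointness condition, and meets the ``furthermore'' clause via $S_2' = S_2$.
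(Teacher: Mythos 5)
Your proposal handles only part of the lemma and misses its hardest case. The submodularity lemma (\cref{lem:submodularity}) applies to the pair $(S_1, S_2)$ to produce $(A_1 \cap A_2, B_1 \cup B_2)$ and $(A_1 \cup A_2, B_1 \cap B_2)$ only when $\partial^+(S_1) = A_1$ and $\partial^+(S_2) = A_2$ (or both flipped). Your parenthetical remark that when the orientations do not match one can ``simply swap the roles of the top and bottom submodular separations'' is not justified, and in fact false: when $\partial^+(S_1) = A_1$ but $\partial^+(S_2) = B_2$, the pair $(A_1 \cap A_2, B_1 \cup B_2)$ is not generally a separation at all. The correct submodular pair in that case combines the two $\partial^+$ sides, i.e.\ $(A_1 \cap B_2, A_2 \cup B_1)$ and $(A_1 \cup B_2, A_2 \cap B_1)$ — and now the terminal set $A_1 \cap A_2$ sits in a \emph{middle} quadrant, not on the small side of either $X_T$ or $X_B$. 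This is exactly the situation the lemma's extra hypothesis (that one of $S_1, S_2$ separates the majority of $W_1$ from that of $W_2$) is designed to handle: it lets one decide which wall majority lies in the top or bottom quadrant and replace only one of the two separations by a submodular one accordingly. Your proposal never invokes this wall hypothesis, which is a strong signal that an entire case was silently dropped.

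A smaller problem lies in your treatment of the $|X_T| \geq 3$ case. First, the paper does not conclude $S_1' := S_1$, $S_2' := S_2$ there; it derives a contradiction to the no-small-separation hypothesis and hence concludes $|X_T| \leq 2$ after all. Second, your proposed repair in the $|X_B|=0$, singleton-$B_1\cap B_2$ sub-case does not work as stated: putting $v$ into $A_1$ to form $S_1' := (A_1 \cup \{v\}, B_1 \setminus \{v\})$ can fail to be a separation. The orientation of $X_B$ forbids edges \emph{out of} $v$ into $A_1 \cup A_2$, but edges \emph{into} $v$ from $(B_1 \setminus A_1)\setminus\{v\} \subseteq A_2$ are unrestricted, and any such edge would be a cross edge of $S_1'$ from the $B$-side to the $A$-side, while cross edges from $A_1 \setminus B_1$ to $(B_1 \setminus A_1)\setminus\{v\}$ may exist in the other direction. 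So $S_1'$ need not be a separation of the required kind. The cleaner route is the one the paper takes: read the hypothesis as excluding separations of order at most~$1$ with $\bar s, \bar t$ on one side and a non-empty other side (which is how it is used throughout), so that $|X_B| \leq 1$ together with $(B_1\setminus A_1)\cap(B_2\setminus A_2)\neq\emptyset$ is already a contradiction.
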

\begin{proof}
  If  $B_1 \subseteq B_2$ or $B_2\subseteq B_1$ we are done, thus assume otherwise. Similarly we may assume that $B_1 \setminus A_1 \cap B_2 \setminus A_2 \not= \emptyset$, for else we are done.
  
  We first consider the case where $\partial^+(S_i) = A_i$, for
  all $1 \leq i \leq 2$. In this
  case $\bar s$ and $\bar t$ 
  are in the top quadrant of the pair $\big((A_1, B_1), (A_2, B_2)\big)$. By \cref{lem:submodularity}, $X_T := (A_1 \cap A_2, B_1 \cup
  B_2)$ and $X_B := (A_1 \cup A_2, B_1 \cap B_2)$ are both separations and
  $|X_T| + |X_B| \leq 4$. 
  
  If $|X_T| \leq 2$ then we are done as $X_T$ satisfies the requirements of the lemma. Otherwise, if $|X_T|
  \geq 3$, then $|X_B| < 2$. But $B_1 \setminus A_1 \cap B_2 \setminus A_2 \not= \emptyset$. 
  Hence $X_B = (A_1 \cup A_2, B_1 \cap B_2)$ is a separation of order at most $1$ with  $\bar s, \bar t \subseteq A_1 \cup A_2$ and $B_1 \cap B_2 \setminus (A_1 \cup A_2) \not= \emptyset$, which contradicts the assumption of the lemma.  This implies that
  $|X_B| \geq 2$ and therefore $X_T$ satisfies the requirements of the
  lemma. The case where $\partial^+(S_i) = B_i$ for $i=1, 2$ is symmetric, with the r\^ole of bottom and top exchanged.

  Now suppose that, for some $i \in \{1, 2\}$, there are no cross edges from $B_i \setminus A_i$ to $A_i \setminus B_i$ and no cross edges from $A_{3-i} \setminus B_{3-i}$ to $B_{3-i} \setminus A_{3-i}$. By symmetry we may assume that $i=1$. Let $X_T := (A_1 \cap B_2, A_2 \cup B_1)$ and $X_B := (A_1 \cup B_2, A_2 \cap B_1)$. By
  construction, $\partial^+(X_T) = A_1 \cap  B_2$ and $\partial^+(X_B) = A_1 \cup B_2$.
  By \cref{lem:submodularity}, $|X_T| + |X_B| \leq 4$. 
  If $(A_1 \cap B_2) \setminus A_2 = \emptyset$, that is, if there is no vertex in the top quadrant of $S_1, S_2$ that is not already contained in the separators, then we can simply take $X_1$ as solution as then $S_1$ satisfies the requirements of the lemma. Similarly, if $(B_1 \cap A_2) \setminus A_1 = \emptyset$ then $S_2$ satisfies the requirement of the lemma. 

  Thus we may now assume that the top and the bottom quadrant of the pair $S_1, S_2$ are both non-empty. As $G$ has no separation of order at most $1$
  separating $\bar  s, \bar t$ from some non-empty part of the graph, we can infer that $|X_T|
  = |X_B| = 2$. 
    Furthermore, as at least one of $S_1$ and $S_2$ separates the majority of $W_1$ from the majority of $W_2$, either the majority of $W_2$ is in the top quadrant or the majority of $W_1$ is in the bottom quadrant. 
    In the first case we set $(A_1', B_1') = (A_1, B_1)$ and $(A_2', B_2') = (B_1 \cup A_2, A_1 \cap B_2)$. In the second case we set $(A_1', B_1') = (A_1 \cup B_2, B_1 \cap A_2)$ and $(A_2', B_2') = (A_2, B_2)$.
\end{proof}

\section{Constructing a Directed Tree-Decomposition Along Two-Cuts}
\label{sec:constructing-dtw}

In this section we prove our main algorithmic result.

\main*

Towards this aim we will construct a directed
tree-decomposition along $2$-separations between the terminals and
large enough walls. We complement the definition of arborescence by that of \emph{inbranchings}; an \emph{inbranching} is a directed graph~$T$ such that~$T$ admits exactly one sink vertex~$r \in V(T)$---that is~$r$ has out-degree~$0$---such that the underlying undirected graph of~$T$ is a tree.

Throughout the remainder of this section, let~$(G, s_1, s_2, s_3, t_1, t_2, t_3)$ be an instance of the
\textsc{$3$-Half-Integral Directed Paths} problem in normal form, and set $\bar s := (s_1, s_2, s_3)$ and~$ \bar t := (t_1, t_3, t_3)$.

\paragraph{Eliminating $1$-separations.} As a first step we delete all vertices in $V$
that are not on a path from some $s_i$ to some $t_j$. Clearly, any such
vertex cannot be part of any solution, so they can simply be removed.
If we remove a terminal at this step then we can stop immediately and
reject the instance. 
From now on we will therefore assume that every vertex in $V(G)$ lies
on a path from some $s_i$ to some $t_j$. 

As a second step we eliminate $1$-separations in~$G$. Suppose there is a
separation $(A, B)$ of $G$ of order $1$ with $\bar s, \bar t \subseteq A$ and
$B \setminus A \not=\emptyset$. Let $v$ be the unique vertex in $A \cap B$.
Then we contract $B$ into the vertex $v$, that is, we replace every
edge $(u, w)$ with $w \in B$ and $u \in A \setminus \{ v \}$ by the edge $(u, v)$
and every edge $(w, u)$ with $w \in B$ and $u \in A \setminus \{ v \}$ by the edge
$(v, u)$. (Note that there can only be at most one type of edge.)
Let $G'$ be the resulting instance. It is easily seen that $(G', \bar
s, \bar t)$ has a solution if, and only if, $(G, \bar s, \bar t)$ has
a solution. Clearly, we can compute $1$-separations as above in polynomial time.
Thus, from now on we assume that there are no such separations of order
$1$. 

\paragraph{$\SSS$-reductions.} For $i \geq 0$ we construct sets $\SSS_i$ of $2$-separations $S = (A, B)$
in $G$ with the following properties.
\begin{enumerate}
\item If  $(A, B) \in \SSS_i$ then  $\bar s, \bar t \subseteq A$ and there is
  a wall $W$ of order at least $\wallbound$ whose majority is
  contained in $B$.
\item For distinct separations $(A, B), (A', B') \in \SSS_i$ we have $B \cap
  B' \subseteq A \cap A'$. 
\end{enumerate}

Let $\SSS$ be a set of separations satisfying the properties above.
We define the \emph{$\SSS$-reduction $G_\SSS$ of $G$} as the multidigraph obtained from
$G$ by contracting for each $S_j := (A_j, B_j) \in \SSS$ the set $B \setminus A$ of vertices
into a new vertex $c_j$. Let~$\ell \coloneqq \Abs{\SSS}$.
Formally, we define a sequence $G_j$ of multidigraphs, for $0 \leq j \leq \ell$, as
follows.
We set $G_0 := G$. Now suppose $G_j$ has already been defined for $j <
|\SSS|$. Let $B := B_{j+1} \setminus A_{j+1}$. Then $G_{j+1}$ is the multidigraph with vertex set $(V(G_j) \setminus
B)\, \dot\cup\, \{ c_{j+1}\}$, where $c_{j+1}$ is a fresh vertex. The
edges of $G_{j+1}$ are the edges of $G_j$ that have no endpoint in $B$ and for all $u \in A_{j+1}$
all edges $(u, c_{j+1})$ such that there is some $v \in B$ with $(u, v)
\in E(G_j)$, and all edges $(c_{j+1}, u)$ such that there is some $v \in
B$ with $(v, u) \in E(G_j)$. 

Finally, we define $G_{\SSS} := G_\ell$.
Note that $G_\SSS$ is a multidigraph as there may be several edges from a
vertex  $v \in A_j$ to $B_j \setminus A_j$ or vice versa which will result in
parallel edges from $v$ to $c_j$. We could easily reduce this to
simple graphs without changing the problem by subdividing each of
these edges once. But this would come at the expense of notational
overhead. 

The next claim establishes the relevant properties of $G_\SSS$ we need in
the remainder of the proof.

\begin{lemma}\label{lem:s-reduction}
  For $1 \leq j \leq \ell$, let $\{ u_j, u_j' \} = A_j \cap B_j$.
  \begin{enumerate}
  \item Either $c_j$ has outgoing edges to $u_j$ and $u_j'$ and no other
    outgoing edges or $c_j$ has incoming edges from $u_j$ and $u'_j$ and
    no other incoming edges.
  \item Let $W$ be a wall of order $\geq 3$ in $G_\SSS$. Then there is a
    wall $W' \subseteq G$ of the same order as $W$ and $W'$ is a subdivision
    of $W$. 
  \end{enumerate}
\end{lemma}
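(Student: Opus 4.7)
The approach is to combine the definition of a directed separation with the standing assumption (established just before the $\SSS$-construction) that $G$ has no $1$-separation $(A,B)$ with $\bar s, \bar t \subseteq A$ and $B \setminus A \neq \emptyset$. Since $(A_j,B_j)$ is directed, either there are no cross edges from $A_j$ to $B_j$ or no cross edges from $B_j$ to $A_j$. In the first case every edge of $G$ entering $B_j \setminus A_j$ must have its tail in $A_j \cap B_j = \{u_j,u'_j\}$, which by construction of $G_\SSS$ means all incoming edges of $c_j$ come from $\{u_j,u'_j\}$; the second case is symmetric and yields the outgoing alternative. To see that \emph{both} $u_j$ and $u'_j$ actually appear as endpoints, I would argue by contradiction: WLOG in the outgoing case, suppose no edge of $G$ leaves $B_j \setminus A_j$ into $u'_j$. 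Then $(A_j, B_j \setminus \{u'_j\})$ is a directed separation of $G$ of order $1$ with $\bar s, \bar t \subseteq A_j$ and $(B_j \setminus \{u'_j\}) \setminus A_j = B_j \setminus A_j \neq \emptyset$, contradicting the standing assumption.

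\textbf{Part 2.} Here I plan to construct $W'$ by expanding each $c_j \in V(W)$ back into a substructure inside $G[B_j \setminus A_j]$. Since walls have maximum degree $3$, each such $c_j$ has at most three incident edges in $W$, and by construction of $G_\SSS$ each of these edges is a unique underlying edge of $G$ whose non-$c_j$ endpoint is a specific port $b \in B_j \setminus A_j$. If $c_j$ is a subdivision vertex of $W$ (degree $2$), it is to be replaced by a directed path in $G[B_j \setminus A_j]$ between its two ports; if $c_j$ is a branch vertex of $W$ (degree $3$, corresponding to a branch vertex of $\WWW_k$), it is replaced by an internally vertex-disjoint tree-like structure in $G[B_j \setminus A_j]$ meeting its three ports in the directions prescribed by Part 1. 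Each such replacement only further subdivides the edges of $\WWW_k$ realised by $W$, so the result is still a subdivision of $\WWW_k$ lying inside $G$, i.e., a wall $W' \subseteq G$ of the same order as $W$ and a subdivision of $W$.

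\textbf{Main obstacle.} The delicate step is guaranteeing that the required substructure always exists inside $G[B_j \setminus A_j]$ and can be chosen vertex-disjointly from the rest of $W$. For this I would exploit the defining property that every $(A_j,B_j) \in \SSS$ carries a cylindrical wall of order at least $\wallbound$ whose majority lies in $B_j \setminus A_j$. Walls of this order are highly strongly connected and supply many internally disjoint directed paths between any small set of vertices, which, combined with Part~1's restriction fixing the direction of $c_j$'s ports, lets me link the at most three ports inside $G[B_j \setminus A_j]$ in the required way while keeping the expansions for different $c_j$ pairwise disjoint and disjoint from $V(W) \setminus \{c_j\}$.
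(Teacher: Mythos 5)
Your Part~1 argument is correct and essentially matches the paper's: the directedness of $(A_j,B_j)$ determines which of the in/out alternatives holds, and the standing absence of $1$-separations with $\bar s,\bar t$ on one side and a nonempty remainder forces both $u_j$ and $u'_j$ to actually occur as neighbours of $c_j$. (The paper derives a slightly stronger consequence of this assumption, namely two internally disjoint paths from every $w\in B_j\setminus A_j$ to $\{u_j,u'_j\}$, because it needs that for Part~2.)

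For Part~2 your high-level plan---expand each $c_j\in V(W)$ into a short path or small branching inside $G[B_j\setminus A_j]$, using the pairwise disjointness of the sets $B_j\setminus A_j$ (property 2 of $\SSS$) to keep different expansions apart---is sound and is essentially what the paper does (the paper peels off one $c_j$ at a time by induction on $j$, but that difference is cosmetic). However, your justification for \emph{why} the needed substructures exist is wrong. You invoke the cylindrical wall of order $\geq\wallbound$ whose majority sits in $B_j$ and claim that ``walls of this order are highly strongly connected and supply many internally disjoint directed paths between any small set of vertices.'' This is false: a cylindrical wall of any order has maximum degree $3$, so its strong connectivity is a small constant independent of the order (and a proper subdivision, having degree-$2$ vertices, is at most $2$-strongly-connected). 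Worse, the ``ports'' you need to link---the endpoints in $B_j\setminus A_j$ of the $G$-edges that became the $W$-edges incident to $c_j$---need not lie on or anywhere near that wall, so even a well-connected wall could not be used to join them. The correct ingredient, which the paper uses and which you already invoked for Part~1, is the absence of $1$-separations: in the outgoing alternative this gives, for every $w\in B_j\setminus A_j$, two internally disjoint directed paths in $G[B_j]$ from $w$ to $u_j$ and to $u'_j$. That is exactly what you need to re-expand a degree-$2$ vertex $c_j$ or a branch vertex of $W$-in-degree $1$; note you do \emph{not} need to hit the out-side ports $w_2,w_3$ at all, since routing freely from the single in-port $w$ to $u_j$ and $u'_j$ already yields a subdivision, so insisting on meeting all three ports is an unnecessary constraint. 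The one genuinely delicate case is a branch vertex with $W$-in-degree $2$ and out-degree $1$, which needs an inbranching argument (paths from the two in-ports merging at some $w$ and then proceeding to the unique out-neighbour $u\in\{u_j,u'_j\}$); your sketch's appeal to a ``tree-like structure'' glosses over exactly this step, and the wall-connectivity claim cannot supply it.
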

\begin{proof}
  We prove by induction on $j$ that the $\{ S_1, \dots, S_j\}$-reduction
  $G_j$ of $G$ satisfies the conditions of the claim.

  For $j=1$, $S_1 = (A_1, B_1)$ is a separation of order $2$. Let $u, u'$
  be the two vertices in $A_1 \cap B_1$. Without loss of generality we
  assume that there are no cross edges from $B_1 \setminus A_1$ to $A_1 \setminus
  B_1$ (the other case is symmetric). 
  By construction, there is a wall $W_1$ of order at least
  $\wallbound$ whose majority is in $B_1$. Thus $B_1 \setminus A_1$ is not
  empty.
  $G_1$ is obtained from $G$ by deleting all vertices
  in $B_1 \setminus A_1$, adding a new vertex $c_1$ and redirecting all edges
  with exactly one endpoint in $B_1 \setminus A_1$ to $c_1$. 
  As $G$ has no separation $(X, Y)$  of order $1$ with $\bar s,
  \bar t \subseteq X$ and $Y \setminus X \not= \emptyset$, there must be 
  edges from $B_1 \setminus A_1$ to $u$ and also to $u'$ 
  and from every vertex $v \in B_1 \setminus A_1$ there are two
  internally disjoint paths from $v$ to $u$ and $u'$, respectively. 
  Thus,  $G_1$ contains edges $(c_1, u)$ and $(c_1, u')$.
  Furthermore, $c_1$ has no other outgoing edge.
  This proves the first part of the statement.
  
  To prove the second part, let $W$ be a wall of order $\geq 3$ in $G_1$. If
  $c_1 \not\in V(W)$, then $W \subseteq G$ and there is nothing to show.
  Otherwise, $c_1 \in V(W)$. As $W$ has maximum degree $3$, this means
  that either $c_1$ only has $1$ outgoing edge in $W$ or only has one
  incoming edge in $W$. Suppose first that $c_1$ only has one incoming
  edge $(v, c_1)$ in $W$. Then it either has one or two outgoing edges
  with endpoints in $\{u, u'\}$. By construction, there is $w \in B_1
  \setminus A_1$ with $(v, w) \in E(G)$. As noted above, there are two
  internally disjoint paths $P_u, P_{u'}$ in $G[B_1]$ starting at $w$ and ending at
  $u$ and $u'$, respectively. Thus, if we replace the edge $(v, c_1)$
  in $W$ by $(v, w)$ and the edges $(c_1, u)$ and $(c_1, u')$ by $P_u$
  and $P_{u'}$ (or by only one of the paths if $c_1$ has outdegree $1$
  in $W$) then this yields a subdivision of $W$ in $G$.

  A similar argument can be used in the case where $c_1$ has indegree two in
  $W$. We give more details of this case in the induction step. 
  This completes the base step for $j=1$.

  Now let $j>1$ and let $S_j = (A_j, B_j)$. By construction, $G_j$ is
  obtained from $G_{j-1}$ by deleting all vertices in $B_j \setminus A_j$ and
  adding a fresh vertex $c_j$ with all edges as defined above. 
  Again we assume without loss of generality that in $G$ there are no
  cross edges from $B_j \setminus A_j$ to $A_j \setminus B_j$. By construction, in
  $G_{j-1}$ a vertex $u$ in $B_j \setminus A_j$ only has an edge to some vertex
  $c_i$,  $i < j$, if before there was an edge $(u, w)$ to a vertex
  in $w \in B_i \setminus A_i$. By the second condition of $\SSS$ above,
  $B_j \cap (B_i \setminus A_i) = \emptyset$. As by assumption there are no edges from
  $B_j \setminus A_j$ to $A_j \setminus B_j$, $v$ cannot have an edge to
  $c_i$ in $G_{j-1}$. It follows that in $G_j$ the fresh vertex $c_j$
  has outdegree $2$ with the vertices in $B_j \cap A_j$ being its only
  outneighbours.

  Now let $W$ be a wall of order $\geq 3$ in $G_j$. Again, if $W$ does
  not contain $c_j$ then $W \subseteq G_{j-1}$ and, by induction hypotheses,
  there is a subdivision $W' \subseteq G$ of $W$ in $G$.
  Otherwise we argue as in the case $j=1$. That is, $c_j$ can have
  at most three incident edges in $W$, at least one incoming and at least
  one outcoming edge. Without loss of generality we now assume that
  $c_j$ has two incoming edges $(w_1, c_j), (w_2, c_j)$ and that $u \in
  B_j \cap A_j$ is its sole outneighbour.
  Then in $G_{j-1}$ there are vertices $v_1, v_2 \in B_j \setminus A_j$ with
  $(w_1, v_1), (w_2, v_2) \in E(G_{j-1})$. 
  Let $P_1$ be a path in
  $G_{j-1}[B_j \setminus A_j]$ from $v_1$ to $v_2$ which is disjoint from $u$.
  Let $P_2$ be a path from $v_2$ to $u$. Then $P_1 \cup P_2$ contains an
  inbranching with root $u$ that contains $v_1$ and $v_2$. That is,
  there is a vertex $w \in B_j \setminus A_j$ and internally disjoint
  paths $I_1, I_2, O$ in $G_{j-1}[B_j \setminus A_j]$ such that $I_1$ and
  $I_2$ link $v_1$ and $v_2$ to $w$ and $O$ links $w$ to $u$. We now
  replace in $W$ the vertex $c_j$ by $w$, the edge $(w_1, c_j)$ by the concatenation of
  $(w_1,v_1)$ and $I_1$, the edge $(w_2, c_j)$ by the concatenation of $(w_2,v_2)$ and $I_2$, and the edge $(c_j, u)$ by $O$
  to get a subdivision $W'$ of $W$ in $G_{j-1}$. By induction hypothesis
  there is a subdivision $W''$ of $W'$ in $G$, which proves the second
  part of the statement. 
\end{proof}

\paragraph{Reducing along~$2$-separations.}  We will now reduce the input instance to an equivalent instance of bounded directed tree-width. We start the construction by setting $\SSS_0 := \emptyset$.
Now, given $\SSS_i$, let $G_i$ be the $\SSS_i$-reduction of $G$.  If $\dtw(G_i)
\leq \dtwbound$, then the process stops.
Otherwise, by~\cref{thm:directed-grid}, there is a wall $W$ of order
at least $\wallbound$ in $G_i$. By \cref{lem:s-reduction}, $G$
contains a subdivision $W'$ of $W$. We now apply \cref{thm:routing-wall} to $(G, W', \bar s, \bar t)$.
If this yields a solution, we are done. Otherwise, we obtain a
separation $S_{i+1} = (A, B)$ of order $2$ with $\bar s \cup \bar t \subseteq A \setminus B$ and the
majority of $W'$ in $B$.  

Let $\SSS := \SSS_i \cup \{S_{i+1}\}$. $\SSS$ satisfies the first condition we
require of the set $\SSS_{i+1}$, but it may not yet satisfy Condition 2. 
Inductively we construct sets $\SSS^j$ as follows. We will maintain the invariant that there is a separation $S \in \SSS^j$ such that  $\SSS^j \setminus \{ S \}$ does not contain a pair of separations that are crossed. Furthermore, $S$ is  the only separation which separates the majority of $W'$ from $\bar{s} \cup \bar{t}$. We start by setting $\SSS^1 := \SSS$. Now
suppose $\SSS^j$ has already been constructed. If $\SSS^j$ contains $S_1 = (A_1, B_1)$
and $S_2 = (A_2, B_2)$ with $B_1 \cap B_2 \not\subseteq A_1 \cap A_2$, then we proceed as
follows. By the invariant above, $W'$ is separated from $\bar{s} \cup \bar{t}$ by exactly one of $S_1$ and $S_2$, say by $S_1$. Thus we can apply \cref{lem:uncross} to $S_1$ and $S_2$. If the outcome
is a single separation $S = (A, B)$ then we set $\SSS^{j+1} := \SSS^j \setminus \{
S_1, S_2 \} \cup \{ S \}$. In this case, \cref{lem:uncross} guarantees that
$B_1 \setminus A_1 \cup B_2 \setminus A_2 \subseteq B$  and thus the majority of $W'$ and of $W_2$ are in $B$, where $W_2$ is the
wall whose majority is contained in $B_2$.  If
\cref{lem:uncross} yields two separations $S_1', S_2'$, then we set
$\SSS^{j+1} := \SSS^j \setminus \{ S_1, S_2 \} \cup \{ S_1', S_2' \}$. The lemma
guarantees that $S'_1$ and $S'_2$ are uncrossed and that $S_i'$ separates $\bar s, \bar t$ from the  majority of the wall $W_i$, for $i=1,2$, where we set $W_1 = W'$. Furthermore, either $S_1 = S_1'$ or $S_2 = S_2'$. In both cases at most one of $S_1', S_2'$ can cross any other separation in  $\SSS^{j+1}$ and therefore we maintain the invariant above.

In each iteration we either reduce the number of separations in $\SSS^j$
or reduce the number of separations that are crossed. Thus
the process stops after a linear number of steps with a set $\SSS^\ell$
of separations. We define $\SSS_{i+1} := \SSS^\ell$. 

\smallskip

Observe that in each step from $\SSS_i$ to $\SSS_{i+1}$  we find a new wall $W_i$
  in the $\SSS_i$-reduction $G_i$ of $G$ and add a separation to $\SSS_i$ that
  separates $W_i$ from $\bar s \cup \bar t$. Therefore, after at most
  $\ell \leq |V(G)|$ iterations, we arrive at an $\SSS_\ell$-reduction $G_\ell$ of
  directed tree-width at most $\dtwbound$.

  This completes the first step of our algorithm. Let $\SSS := \SSS_\ell = \{
  S_1, \dots, S_r \}$ and let $G_\SSS$ be the $\SSS$-reduction of $G$. 
  
\paragraph{Reducing the instance.} Our next goal is to
  apply \cref{thm:bounded-dtw} to $(G_\SSS, \bar s, \bar t, U)$, where $U$ is the set of
  fresh vertices $c_1, \dots, c_r$ in the $\SSS$-reduction of $G$. 
  To apply \cref{thm:bounded-dtw} we need to compute the sets $F(c_i)$
  of feasible routings. 

  For each $1 \leq i \leq r$ we compute $F(c_i)$ as follows. Let $S_i = (A_i,
  B_i)$ be the associated separation in $\SSS$ and let $D := G[B_i]$.  Suppose $c_i$
  only has $2$ outgoing edges to vertices $u, u'$ where $\{ u, u'\} = A_i \cap B_i$. The case where $c_i$ only has incoming edges from $u, u'$ is symmetric. 
  Let $C \subseteq E(G)$ be the set of edges with tail in $A_i$ and head in $B_i \setminus A_i$ and let $C' \subseteq E(G)$ be the set of edges with head $u$ or $u'$ and tail in $B_i
  \setminus A_i$. We define a function $\lambda$ such that for all $e = (v, w) \in C$ we set $\lambda(e) := (v, c_i)$ and for all $f = (w, v) \in C'$ we set $\lambda(e) := (c_i, v)$.
  For every choice of an edge $e \in C$ and $f$ in $C'$ add the pair
  $(\lambda(e), \lambda(f))$ to the multiset $F(c_i)$. Similarly, for every choice of edges $e, e' \in
  C$ and $f, f' \in C'$ add $\{(\lambda(e), \lambda(f)), (\lambda(e'), \lambda(f'))\}$ to $F(c_i)$.

  For every choice of edges 
  $e_j = (v_j, w_j) \in C$ and $f_j = (x_j, u_j) \in C'$, for $1 \leq j \leq 3$,
  we apply \cref{cor:two-source-4flow} as follows. By construction, every edge in $C'$ has either $u$ or $u'$ as head. 
  We now apply Part 1 of \cref{cor:two-source-4flow} to the instance $(D, w_1, w_2, w_3, u, u', \sigma, \tau)$, where $\sigma(i) := w_i$ and $\tau(i) := u_i$, for $1 \leq i \leq 3$ (or to the corresponding instance in case the $w_i$ are not distinct).  
  If the outcome of the lemma is a solution, then we add
  $\{(\lambda(e_1), \lambda(f_1)), (\lambda(e_2), \lambda(f_2)), (\lambda(e_3), \lambda(f_3))\}$ to $F(c_i)$. 
  
  Finally, for every choice of edges 
  $e_j = (v_j, w_j) \in C$ and $f_j = (x_j, u_j) \in C'$, for $1 \leq j \leq 4$,
  we apply Part 2 of \cref{cor:two-source-4flow} to the instance $(D, w_1, \dots, w_4, u, u', \sigma, \tau)$ where $\sigma(i) = w_i$ and $\tau(i) = u_i$, for all $1 \leq i \leq 4$. (Again if some of the $w_i$ are not distinct we use the appropriate instance.) If the outcome is a solution, then we add
  $\{(\lambda(e_1), \lambda(f_1)), (\lambda(e_2), \lambda(f_2)), (\lambda(e_3), \lambda(f_3)), (\lambda(e_4), \lambda(f_4))\}$ to $F(c_i)$.
   
    Observe that a solution can contain at most $4$ (maximal) subpaths starting or ending at $A_i \cap B_i$ with internal vertices in $B_i$. Thus we do not need to consider combinations of more than $4$ edges from $C$ and $C´$, respectively.
  Let $\FFF = \{ F(c_i) \sth 1 \leq i \leq r \}$.

  \begin{lemma}
    The instance $(G_\SSS, \bar s, \bar t, U, \FFF)$ of \textsc{$3$-Disjoint Directed Paths with Congestion $2$ and Bound $4$} has a solution if, and
    only if, the \textsc{$3$-Half-Integral Directed Paths} instance $(G, \bar s, \bar t)$ has a solution. 
  \end{lemma}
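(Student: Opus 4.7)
The plan is to establish the two directions of the equivalence separately, with the feasibility sets $F(c_i)$ serving as the dictionary translating between half-integral sub-linkages inside each $B_i$ and single traversals through the contracted vertex $c_i$.

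\textbf{Forward direction.} Assume $\LLL = \{P_1,P_2,P_3\}$ is a half-integral solution in $G$. Fix a separation $S_i = (A_i,B_i) \in \SSS$, and look at $\lsplit_G(\LLL, B_i \setminus A_i)$. Because $|A_i \cap B_i| = 2$ and $\LLL$ has congestion $2$, at most $4$ maximal sub-paths of $\LLL$ live inside $B_i \setminus A_i$, each entering along some edge $e$ from $A_i$ to $B_i \setminus A_i$ and exiting along some edge $f$ from $B_i\setminus A_i$ to $A_i \cap B_i$. By our construction of $F(c_i)$ — which enumerates \emph{all} feasible routings obtainable from half-integral $(u,u')$-linkages in $G[B_i]$ of order up to $4$ using \cref{cor:two-source-4flow} (and its degenerate cases with $\leq 2$ paths handled by the first clauses of the $F(c_i)$-construction) — the boundary induced by $\LLL$ on $B_i$ belongs to $F(c_i)$. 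Now obtain paths $P_1^\SSS, P_2^\SSS, P_3^\SSS$ in $G_\SSS$ from $P_1, P_2, P_3$ by replacing every maximal sub-path through $B_i \setminus A_i$ with a single traversal $(v, c_i, w)$ corresponding to the appropriate $(\lambda(e),\lambda(f))$ pair; perform this replacement independently for each $i$, which is legitimate because the sets $B_i \setminus A_i$ are pairwise disjoint (second condition on $\SSS_i$). The result is admissible: at vertices outside $U$ the congestion is inherited from $\LLL$, and at every $c_i$ the induced multiset of $(e,f)$-pairs is exactly the boundary we just verified lies in $F(c_i)$.

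\textbf{Backward direction.} Conversely, suppose $\{P_1^\SSS, P_2^\SSS, P_3^\SSS\}$ is an admissible linkage in $(G_\SSS,\bar s,\bar t,U,\FFF)$ in the sense of \cref{thm:bounded-dtw}. For each $c_i \in U$ actually used by the linkage, the set $\LLL^\SSS(c_i)$ of used $(e,f)$-pairs lies in $F(c_i)$. By the construction of $F(c_i)$, this pair-set was certified by an explicit half-integral $(W_i^{\text{in}}, \{u,u'\})$-linkage $\MMM_i$ in $G[B_i]$ of order $|\LLL^\SSS(c_i)| \leq 4$ obtained via \cref{cor:two-source-4flow}. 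To reconstruct a solution in $G$, it suffices, for each $i$, to substitute each transit $(v,c_i,w)$ occurring in the $P_j^\SSS$'s by the corresponding sub-path of $\MMM_i$ realizing that $(e,f)$-pair; the edges $e$ and $f$ determine the substituted sub-path uniquely up to the choice made when $F(c_i)$ was populated, and we fix such a witness once and for all. Since the $\MMM_i$ are half-integral inside each $B_i \setminus A_i$ and the sets $B_i \setminus A_i$ are pairwise disjoint, the resulting paths form a half-integral linkage on $V(G) \setminus U_{\text{sep}}$, where $U_{\text{sep}} = \bigcup_i A_i \cap B_i$. At a separator vertex $u \in A_i \cap B_i$, the congestion contributed from \emph{inside} $B_i$ is at most $2$ by half-integrality of $\MMM_i$ and from \emph{outside} $B_i$ it already respects the bound $2$ on $G_\SSS \setminus U$; moreover the total is at most $2$ because the number of paths in $\LLL^\SSS$ incident to $c_i$ equals the number of paths of $\MMM_i$, whose endpoints at $u,u'$ correspond one-to-one with the boundary edges $f$ listed in $\LLL^\SSS(c_i)$, so usage of $u$ is accounted for exactly once per ambient path. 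Hence the substituted collection is a half-integral $(s_i,t_i)$-linkage in $G$.

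\textbf{Main obstacle.} The only subtle point in the bookkeeping is verifying that after the backward substitution, the congestion bound is preserved at the separator vertices $u,u' \in A_i \cap B_i$: these vertices are \emph{not} contracted, yet they can be traversed both by pieces inside $\MMM_i$ and by pieces of the ambient $P_j^\SSS$'s. The argument above resolves this by noting that usage of $u$ by $\MMM_i$ is exactly the usage of $u$ by $\LLL^\SSS$ at $c_i$ via edges with endpoint $u$, so no double-counting occurs. All other verifications — normal form of terminals, pairwise disjointness of the $B_i \setminus A_i$, and the correspondence between sub-paths of $\LLL$ and multiset boundaries — are routine given \cref{lem:s-reduction} and the construction of $F(c_i)$.
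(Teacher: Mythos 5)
Your proposal follows exactly the paper's two-direction template (project the solution to $G_\SSS$, then lift back via the $F(c_i)$-witnesses), and the forward direction is fine; your framing in terms of maximal sub-paths inside $B_i\setminus A_i$ in fact matches the domains of $C$, $C'$ and $\lambda$ more tightly than the paper's own wording. The problem is in the backward direction, precisely at the point you flag as the "main obstacle." You place the witnessing linkage $\MMM_i$ in $G[B_i]$, which is indeed the graph $D$ to which \cref{cor:two-source-4flow} is applied when $F(c_i)$ is built. But in $G[B_i]$ the separator vertices $u,u'$ still have out-edges back into $B_i\setminus A_i$, so a path of $\MMM_i$ from an entry vertex $w_j$ to its target $u_j\in\{u,u'\}$ may pass through the \emph{other} separator vertex internally. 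Your claim that ``usage of $u$ is accounted for exactly once per ambient path'' covers only the \emph{endpoint} occurrences of $u$ on $\MMM_i$ — those do coincide with the occurrences of $u$ on $P_j^\SSS$ immediately after a $c_i$-transit — but it says nothing about internal occurrences. With an internal visit, the congestion at $u$ after substitution is (uses carried over from $P_j^\SSS$, at most $2$) plus (internal $\MMM_i$-uses, up to $2$), so the bound of $2$ you assert does not follow from what you wrote.

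The paper's own proof sidesteps this by asserting that the witnessing sub-paths $P_{u,v}$ live entirely in $G[B_i\setminus A_i]$, which, if one grants it, rules out internal visits to $u,u'$ and makes the accounting immediate. To close your gap you would need to argue the same: e.g., apply \cref{cor:two-source-4flow} to $G[B_i]$ with all out-edges of $u$ and $u'$ removed (so they become sinks and can only appear as endpoints), or argue that the corollary's paths can be rerouted or shortcut so as to meet $\{u,u'\}$ only at their final vertex. Without such an argument, the congestion bound at the separator vertices — the very step you correctly identified as the crux — is not established.
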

  \begin{proof}
    For the forward direction, let $P_1, P_2, P_3$ be a set of paths in $G$ such that $P_i$
    is an $(s_i,t_i)$-path for~$i\in \{1,2,3\}$ and no vertex appears in all three paths. Let
    $S = (A, B) \in \SSS$ be a separation. By assumption, $\bar s \cup \bar t
    \subseteq A$. Thus the paths $P_i$ all start and end in $A$. For $1 \leq i \leq
    3$ let $\PPP$ be the set of maximal subpaths of the paths $P_i$ in $G[B]$ that contain a vertex of $B \setminus A$. For each $P \in \PPP$ with $P \subseteq P_i$, for some $1 \leq i \leq 3$, let $e(P)$
    be the edge in $E(P_i)$ whose head is the first vertex of $P$ and
    let $f(P)$ be the edge in $E(P_i)$ whose tail is the last vertex
    of $P$. Then the paths in $\PPP$ witness that the sequence $((\lambda(e(P)),
    \lambda(f(P)))_{P \in \PPP}$ appears in $F(c)$, where $c$ is the fresh vertex associated with the selected separation $S$ (recall the definition of $\lambda$ from above). Thus, $(G_\SSS, \bar s, \bar t, U,
    \FFF)$ has a solution which can be obtained from $P_1, P_2, P_3$ by
    replacing for each $P \in \PPP$ which is a subpath of $P_i$, for some
    $1 \leq i \leq 3$, the subpath $(e(P), P, f(P))$ by the subpath $(u, c,
    v)$, where $u$ is the tail of $e(P)$ and $v$ is the head of
    $t(P)$.

    Towards the other direction, let $P_1, P_2, P_3$ be a solution of $(G_\SSS, \bar s, \bar t, U,
    \FFF)$. For each $1 \leq i \leq \ell$ let $F_i$ be the set of triples $(u,
    c_i, v)$ such that for some $1 \leq j \leq 3$ the path $P_j$ contains the subpath $(u, c_i, v)$.
    By definition,  $\{ ((u, c_i), (c_i, v)) \sth (u, c_i, v) \in F_i \} \in
    F(c_i)$ for all $1 \leq i \leq \ell$. This implies that for each triple $(u, c_i, v) \in F_i$ there are
    vertices $u', v' \in B_i \setminus A_i$ and a path $P_{u,v} \subseteq G[B_i \setminus A_i]$
    linking $u'$  to $v'$ such that $(u, u'), (v', v) \in E(G)$ and the
    set $L_i := \{ P_{u,v} \sth (u, c_i, v) \in F_i \}$  is a half-integral linkage.
    Thus for all $1 \leq i \leq \ell$, all $1 \leq j \leq 3$, and all $(u, c_i, v)
    \in F_i$, if $P_j$ contains the subpath $(u, c_i, v)$ then we can
    substitute the subpath by $P_{u,v} \in L_i$.
    Let $P_1', P_2', P_3'$ be the resulting paths. Then  $P_i'$ links
    $s_i$ to $t_i$ and $L = \{ P_1', P_2', P_3' \}$ is a solution in $G$. 
  \end{proof}

  As the last step of our algorithm we apply \cref{thm:bounded-dtw}
  to the instance $(G_\SSS, \bar s, \bar t, U, \FFF)$. By the previous lemma,
  if the algorithm returns a solution then this yields a solution to
  $(G, \bar s, \bar t)$. Otherwise $(G, \bar s, \bar t)$ does not have
  a solution. This concludes the proof of \cref{thm:main}.

\section{NP-Hardness}
\label{sec:hardness}
In this section, we proof \cref{thm:nphard}, which we restate here for convenience.

\nphardness*

The proof generalizes the NP-hardness proof for \textsc{2 Disjoint Paths} in directed graphs by Fortune et al.~\cite{FHW80}.
The main difference is a more involved \emph{switch gadget}.
The gadget is depicted in \cref{fig:switch} and its main properties are as follows.
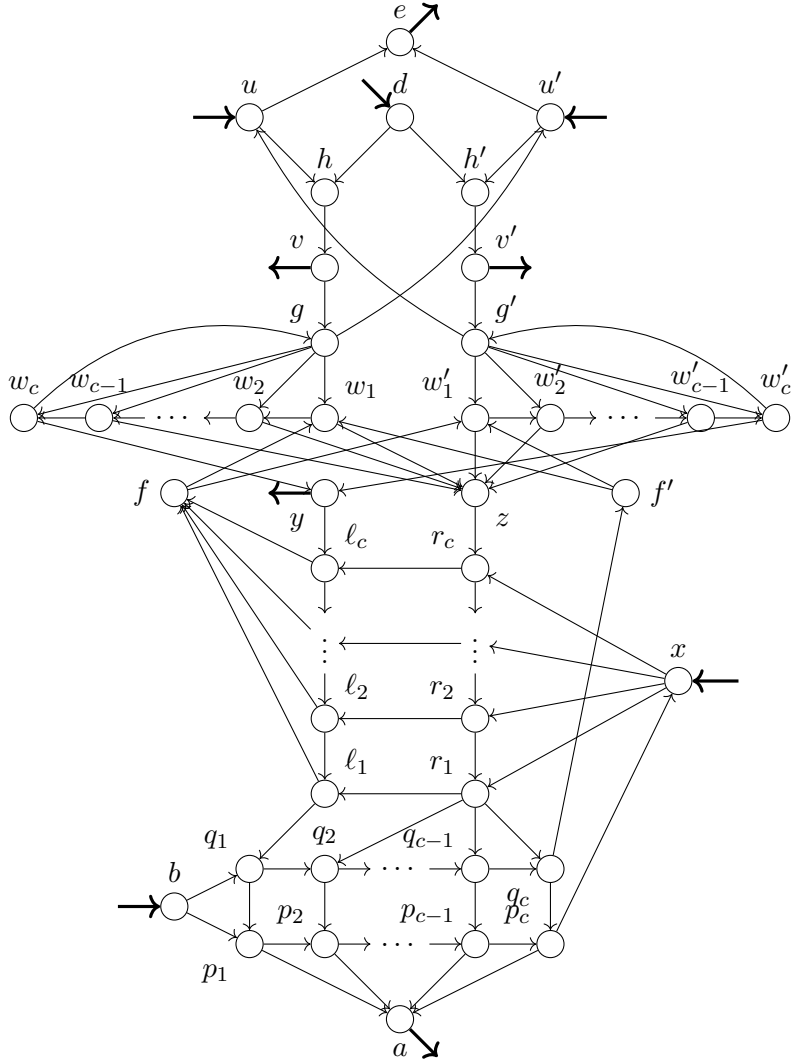
\begin{figure}[t!]
    \centering
    \begin{tikzpicture}
        \node[circle,draw,label=below:$a$] at (0,-2) (a) {};
        \node[circle,draw,label=$x$] at(3.7,2.5) (x) {};
        \node[circle,draw,label=$b$] at (-3,-.5) (b) {};
        \node[circle,draw,label=left:$f$] at(-3,5) (e) {};
        \node[circle,draw,label=right:$f'$] at(3,5) (f) {};
        \node[circle,draw,label=below left:$y$] at(-1,5) (y) {};
        \node[circle,draw,label=below left:$p_1$] at (-2,-1) (p1) {};
        \node[circle,draw,label=above left:$p_2$] at (-1,-1) (p2) {};
        \node at (0,-1) (p3) {$\dots$};
        \node[circle,draw,label=above left:$p_{c-1}$] at (1,-1) (p4) {};
        \node[circle,draw,label=above left:$p_{c}$] at (2,-1) (p5) {};
        \node[circle,draw,label=above left:$q_1$] at (-2,0) (q1) {};
        \node[circle,draw,label=above:$q_2$] at (-1,0) (q2) {};
        \node at (0,0) (q3) {$\dots$};
        \node[circle,draw,label=above left:$q_{c-1}$] at (1,0) (q4) {};
        \node[circle,draw,label=below left:$q_{c}$] at (2,0) (q5) {};
        \node[circle,draw,label=above right:$\ell_1$] at (-1,1) (r1) {};
        \node[circle,draw,label=above right:$\ell_2$] at (-1,2) (r2) {};
        \node at (-1,3) (r4) {$\vdots$};
        \node[circle,draw,label=above right:$\ell_{c}$] at (-1,4) (r5) {};
        \node[circle,draw,label=above left:$r_1$] at (1,1) (s1) {};
        \node[circle,draw,label=above left:$r_2$] at (1,2) (s2) {};
        \node at (1,3) (s4) {$\vdots$};
        \node[circle,draw,label=above left:$r_{c}$] at (1,4) (s5) {};
        \node[circle,draw,label=below right:$z$] at(1,5) (yy) {};
        \foreach \i in {1,2,4,5}{
            \draw[->] (p\i) to (a);
            \draw[->] (q\i) to (p\i);
            \draw[->] (x) to (s\i);
            \draw[->] (r\i) to (e);
            \draw[->] (s\i) to (r\i);
        }
        \draw[->] (r1) to (q1);
        \foreach \i in {2,4,5}{
            \draw[->] (s1) to (q\i);
        }
        \draw[->] (b) to (p1);
        \draw[->] (p1) to (p2);
        \draw[->] (p2) to (p3);
        \draw[->] (p3) to (p4);
        \draw[->] (p4) to (p5);
        \draw[->] (b) to (q1);
        \draw[->] (q1) to (q2);
        \draw[->] (q2) to (q3);
        \draw[->] (q3) to (q4);
        \draw[->] (q4) to (q5);
        \draw[->] (p5) to (x);
        \draw[->] (r5) to (r4);
        \draw[->] (r4) to (r2);
        \draw[->] (r2) to (r1);
        \draw[->] (s5) to (s4);
        \draw[->] (s4) to (s2);
        \draw[->] (s2) to (s1);
        \draw[->] (y) to (r5);
        \draw[->] (yy) to (s5);
        \draw[->] (q5) to (f);
        \node[circle,draw,label=above right:$w_1$] at(-1,6) (w1) {};
        \node[circle,draw,label=above left:$w'_1$] at(1,6) (ww1) {};
        \foreach \i/\n in {2/2,4/c-1,5/c}{
            \node[circle,draw,label=$w_{\n}$] at (-\i,6) (w\i) {};
            \node[circle,draw,label=$w'_{\n}$] at (\i,6) (ww\i) {};
        }
        \node at(-3,6) (w3) {$\dots$};
        \node at(3,6) (ww3) {$\dots$};
        \foreach \i in {1,2,4}{
            \draw[->] (w\i) to (yy);
            \draw[->] (ww\i) to (yy);
        }
        \draw[->] (w1) to (w2);
        \draw[->] (w2) to (w3);
        \draw[->] (w3) to (w4);
        \draw[->] (w4) to (w5);
        \draw[->] (ww1) to (ww2);
        \draw[->] (ww2) to (ww3);
        \draw[->] (ww3) to (ww4);
        \draw[->] (ww4) to (ww5);
        \draw[->] (e) to (w1);
        \draw[->] (f) to (w1);
        \draw[->] (e) to (ww1);
        \draw[->] (f) to (ww1);
        \draw[->] (w5) to (y);
        \draw[->] (ww5) to (y);
        \node[circle,draw,label=$d$] at(0,10) (c) {};
        \node[circle,draw,label=$e$] at(0,11) (d) {};
        \node[circle,draw,label=$u$] at(-2,10) (l10) {};
        \node[circle,draw,label=$u'$] at(2,10) (t10) {};
        \node[circle,draw,label=$h$] at(-1,9) (l9) {};
        \node[circle,draw,label=$h'$] at(1,9) (t9) {};
        \node[circle,draw,label=above left:$v$] at(-1,8) (l8) {};
        \node[circle,draw,label=above right:$v'$] at(1,8) (t8) {};
        \node[circle,draw,label=above left:$g$] at(-1,7) (l7) {};
        \node[circle,draw,label=above right:$g'$] at(1,7) (t7) {};
        \foreach \j in {1,2,4,5}{
            \draw[->] (l7) to (w\j);
            \draw[->] (t7) to (ww\j);
        }
        \draw[->,bend left=30] (w5) to (l7);
        \draw[->,bend right=30] (ww5) to (t7);
        \foreach \d in {l,t}{
            \draw[->] (\d8) to (\d7);
            \draw[->] (\d9) to (\d8);
            \draw[->] (\d10) to (\d9);
            \draw[->] (c) to (\d9);
            \draw[->] (\d10) to (d);
        }
        \draw[->,bend right=15] (l7) to (t10);
        \draw[->,bend left=15] (t7) to (l10);
        \draw[->,very thick] (a) to (.5,-2.5);
        \draw[->,very thick] (-3.75,-.5) to (b);
        \draw[->,very thick] (-.5,10.5) to (c);
        \draw[->,very thick] (d) to (.5,11.5);
        \draw[->,very thick] (4.5,2.5) to (x);
        \draw[->,very thick] (y) to (-1.75,5);
        \draw[->,very thick] (-2.75,10) to (l10);
        \draw[->,very thick] (2.75,10) to (t10);
        \draw[->,very thick] (l8) to (-1.75,8);
        \draw[->,very thick] (t8) to (1.75,8);
    \end{tikzpicture}
    \caption{A switch. The thick arcs enter or leave the gadget.}
    \label{fig:switch}
\end{figure}
\begin{lemma}
    \label{lem:switch}
    Let~$c \geq 2$ and consider the switch gadget in \cref{fig:switch}.
    If each vertex is part of at most~$c$ paths, $c$ paths leave the gadget at vertex~$a$, $c$ different paths enter it at vertex~$b$, yet another~$c-1$ paths enter the gadget at vertex~$x$, and none of these paths start or end within the gadget, then exactly~$c-1$ of the above paths leave the gadget at vertex~$y$, exactly~$c$ of them leave at vertex~$e$, and the $c$ paths leaving at~$a$ enter the gadget at vertex~$d$.
    Moreover, in any such routing exactly~$c$ paths pass through the vertices~$x$ and~$y$ and depending on the routing of the paths leaving at~$a$, exactly~$c$ paths pass through~$g$ and~$h$ or through~$g'$ and~$h'$.
\end{lemma}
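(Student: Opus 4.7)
The plan is to establish the claim through a layer-by-layer flow analysis of the gadget in Figure~\ref{fig:switch}, exploiting the at-most-$c$ congestion bound together with the fact that every path enters and exits the gadget at one of its external access points, since by hypothesis no path starts or ends inside. I would begin by enumerating the external access points (entry at $b, d, u, u', x$; exit at $a, e, v, v', y$) and recording the low in-/out-degrees of the key internal vertices, for example that $a$'s only in-neighbours are $p_1, \ldots, p_c$, that $d$'s only out-neighbours are $h, h'$, and that $q_c$'s only out-neighbour is $f'$. These degree constraints, together with the assumed vertex congestion bound, will be the workhorse of the whole argument.

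I would then trace flow through the gadget in stages. First, at the bottom sub-structure (the $p$- and $q$-chains together with $a$, $x$, and $f'$), using the congestion bound on each $p_i$ and $q_i$ and the hypotheses that $c$ paths leave at $a$ and only $c-1$ external paths enter at $x$, I would pin down how the $c$ paths entering at $b$ distribute between the two chains, showing that $c-1$ of them must leave via $q_c \to f'$ and exactly one of them reaches $x$ via $p_c$, thus accounting for the claimed $c$-th path through $x$. A parallel analysis for the middle sub-structure (the $r$- and $\ell$-chains together with $f$, $f'$, and $z$) shows that the $c-1$ paths entering at $x$ reach $f$ and $f'$ and feed into the $w$- and $w'$-chains at $w_1, w'_1$. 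Following the flow forward through the top sub-structure (the $w, w'$-chains together with $y, z$ and the switch piece on $\{g, g', h, h', v, v', u, u', d, e\}$), these paths then split between exits at $z$ (producing the extra path through $x$ via $z \to r_c$), exits at $y$ (giving the claimed $c-1$ paths out of $y$ plus one path continuing via $y \to \ell_c$ that accounts for the $c$-th path through $y$), and re-entries into the switch via $w_c \to g$ or $w'_c \to g'$. This pins down the $c-1$ and $c$ counts at $y$ and $e$ respectively and is in spirit an elaboration of the core argument sketched around Figure~\ref{fig:tinyswitch}: the $2c-1$ paths originating at $b$ and $x$ behave analogously to the $2c-1$ paths from $s_2, s_3$ there.

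Finally, to establish the switching conclusion, I would trace the $c$ paths leaving at $a$ backwards through the $p$-chain and then through the whole gadget, and argue that they must enter at $d$ (since the other entry points $b, x, u, u'$ have either been ruled out by the flow analysis above or have insufficient capacity), split at $d \to h$ or $d \to h'$, and then traverse exactly one of the two sides $\{g, h\}$ or $\{g', h'\}$ before routing back down to $a$. The technical heart of the proof, and the main obstacle, is the last step of ruling out mixed routings in which some of the $c$ $d$-to-$a$ paths take one side of the switch and others take the opposite side. I expect this to follow from a careful congestion count on the internal vertices $v, v', h, h'$, exploiting the cross edges $g \to u'$ and $g' \to u$ and the fact that $u, u'$ have only two out-edges each ($u \to h, e$ and $u' \to h', e$), so that committing one path to the unprimed side of the switch blocks part of the capacity needed on the primed side. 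Making the case analysis air-tight, together with verifying the simultaneous consistency of all path counts ($c-1$ at $y$, $c$ at $e$, $c$ through each of $x$ and $y$) in both switch states, is the central technical task.
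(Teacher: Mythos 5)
Your high-level plan — a layer-by-layer congestion analysis, with the $c$ paths leaving at $a$ traced backward — matches the structure of the paper's proof, and the bottom-layer accounting (one of the $c$ paths entering at $b$ must reach $x$ via $p_c$, the rest route $b \to q_1 \to \cdots \to q_c \to f'$, and one backward-traced path per $p_i, q_i$ exits at $a$) is essentially correct. However, you explicitly flag the unanimity step as the ``main obstacle'' you have not resolved, and the mechanism you propose for it is not the one that actually works. Unanimity is \emph{not} primarily forced by a congestion count on $v, v', h, h'$ and the cross edges $g \to u'$, $g' \to u$. The decisive observation is one level earlier: the $2c-1$ paths originating at $b$ and $x$ all eventually reach $f$ or $f'$, and since $f, f'$ feed only into $w_1$ and $w_1'$, exactly $c$ of them saturate one $w$-chain and $c-1$ use the other. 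Because $z$'s only out-edge goes to the already-saturated $r_c$, these $2c-1$ paths must traverse the chains end-to-end, so one chain is fully saturated. This forces \emph{all} $c$ backward-traced paths from $a$ — one arriving at $y$ (via $\ell_c \leftarrow y$) and $c-1$ arriving at $z$ — into the \emph{other} chain and hence onto the same vertex $g'$ (or $g$), giving unanimity \emph{before} $v', h', d$ are reached. The cross edges at $u, u'$ only enter afterward, to force the $c$ saturating-chain paths through $w_c \to g \to u' \to e$ and the red paths from $h'$ onward to $d$.

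A secondary issue: in your ``top sub-structure'' paragraph, you describe paths ``exiting at $z$,'' but $z$ is an internal vertex, not a gadget exit, and it is in fact the backward-traced paths from $a$ (not the paths coming from $x$ or $b$) that use $z \to r_c$. This accounting error is symptomatic of why the unanimity argument isn't coming together: without tracking precisely that $r_1, \ldots, r_c$ are each used once by a path from $x$ and $c-1$ times by the backward paths from $a$, the saturation cascade through $z$ and the $w$-chains does not close. So while the overall strategy is the same as the paper's, the key deduction is missing and the guessed replacement would not go through.
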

\begin{proof}
    We assume that each vertex is part of at most~$c$ paths and no path start or ends in the gadget.
    Then, we say that each vertex has a \emph{capacity} of~$c$ and each path passing through a vertex uses one unit of that capacity.
    We call a vertex that is already part of~$c$ paths \emph{saturated}.
    Note that no further path can go through a saturated vertex.

    For the sake of notational convenience, we will bundle all paths entering the gadget at~$b$ or~$x$ and call these paths the blue paths.
    We say that all paths leaving at~$a$ are red.
    It will be convenient to think about the reverse of red paths.
    We therefore say that red paths start in~$a$ and red paths use arcs in the opposite direction as shown in \cref{fig:switch}.
    We will show that if $c$ blue paths enter at~$b$, $c-1$ blue paths enter in~$x$, and~$c$ red paths enter in~$a$, then all red paths leave at~$d$, $c-1$ blue paths leave the gadget at~$y$ and~$c$ blue paths leave it at~$e$.
    Moreover, in any such routing vertices~$x$ and~$y$ are saturated and so are (i) vertices~$g$ and~$h$ or~(ii) vertices~$g'$ and~$h'$.
    To do so, we start with the assumed paths at their respective first vertices ($a,b,$ and~$x$) and over time show that certain paths have to continue in certain ways.

    First, consider the~$c$ blue paths entering at~$b$.
    If all these paths reach~$q_c$ via the paths~$(b,q_1,q_2,\ldots,q_c)$, then note that no red path can leave the gadget.
    Hence, at least one such blue path reaches the vertex~$p_c$ and then it has to go to vertex~$x$ as vertex~$a$ is saturated by the~$c$ red paths (recall that all red and blue paths are distinct).
    Since~$c-1$ other blue paths start in~$x$, this proves that~$x$ is saturated in any feasible solution and exactly one blue path entering at vertex~$b$ does not start with the path~$(b,q_1,q_2,\ldots,q_c,f')$.
    Hence, $c-1$ blue paths start with this path and in order for the~$c$ red paths to leave the gadget, exactly one has to start with the paths~$(a,p_i,q_i)$\footnote{Technically, the paths do not need to start with the claimed path but they have to contain all three vertices in the relative order. We assume without loss of generality that all solutions are minimal in the sense that if a path contains two vertices~$u$ and~$v$ and~$(u,v)$ is an arc (and the path visits~$u$ before~$v$), then it takes the arc directly and not any detour path. Note that replacing a detour by a direct arc only decreases the amount of units of capacity used by each vertex.} for each~$i \in [c]$.
    Moreover, the~$c-1$ red paths \emph{not} going through~$p_1$ and~$q_1$ have to continue with the path~$(r_1,r_2,\ldots,r_c,z)$ as~$x$ and~$q_1$ are saturated.

    This implies that the~$c$ blue paths currently in~$x$ ($c-1$ started there and one coming from~$p_c$) each have to continue with a path~$(x,r_i,\ell_i,f)$ for some distinct~$i \in [c]$.
    Now, the vertex~$r_i$ is saturated for each~$i \in [c]$ and the red path starting with~$(a,p_1,q_1)$ has to continue with the path~$(q_1,\ell_1,\ell_2,\ldots,\ell_c,y)$.
    We will call this red path~$R_1$ from now on.

    Note that~$R_1$ currently ends in~$y$, the other~$c-1$ red paths currently end in~$z$, and all~$2c-1$ blue paths currently end in either~$f$ or~$f'$.
    Since~$f$ and~$f'$ only have arcs towards~$w_1$ and~$w'_1$, exactly~$c$ blue paths pass through one and~$c-1$ paths pass through the other.
    As both cases are completely symmetric, we will only discuss the case where~$c$ paths pass through~$w_1$.
    Since each vertex~$w_i$ and~$w'_i$ with~$i \in [c-1]$ only has an outgoing arc towards~$z$ and~$w_{i+1}$ or~$w'_{i+1}$, respectively, and since vertex~$z$ only has an outgoing arc to~$r_c$ which is fully saturated, all blue paths continue with the subpath~$(w_1,w_2,\ldots,w_c)$ or~$(w'_1,w'_2,\ldots,w'_c)$.
    Hence, all vertices~$w_1,w_2,\ldots,w_c$ are saturated and~$R_1$ has to continue with the subpath~$(y,w'_c,g')$.
    Moreover, the~$c-1$ other red paths have to each continue with a distinct subpath~$(z,w'_i,g')$ for some~$i \in [c-1]$.
    Hence, $g'$ is saturated by the $c$ red paths and all~$c-1$ blue paths currently ending in~$w'_c$ have to continue with the subpath~$(w'_c,y)$.
    Since~$R_1$ also passed through~$y$, this shows that~$y$ is saturated.
    Moreover, as all vertices~$w_1,w'_1,w_2,w'_2,\ldots,w_c,w'_c$ are saturated, and without these vertices, the only way for paths passing through~$y$ to leave the gadget is at~$y$ or~$a$ (and~$a$ is also saturated), all~$c-1$ paths passing through~$w'_c$ and~$y$ leave the gadget at~$y$.
    For the red paths currently at~$g'$, note that as~$g'$ only has incoming arcs from~$v'$ and~$w'_c$ and~$w'_c$ is saturated by~$c-1$ blue paths and one red path, all red paths have to continue with the subpath~$(g',v',h')$.

    Next, the~$c$ blue paths currently ending in~$w_c$ cannot continue with~$y$ as~$y$ is already saturated.
    Hence, all of these paths continue with the paths~$(w_c,g,u')$.
    As~$h'$ is saturated by the~$c$ red paths, the~$c$ blue paths passing through~$u'$ continue to the vertex~$e$ where they leave the gadget.
    Since~$u'$ is saturated by~$c$ blue paths, the~$c$ red paths continue with the vertex~$d$ and have to leave the gadget there.
    Note that we have shown that vertices~$g',h',x,$ and~$y$ are saturated.
    If~$c$ paths pass through~$w'_1$, then the vertices~$g,h,x$ and~$y$ are saturated.
    In either case, all red paths leave the gadget at vertex~$d$, $c-1$ blue path leave the gadget at vertex~$y$ and the remaining~$c$ blue paths leave the gadget at vertex~$e$.
    This concludes the proof.
\end{proof}

With the above lemma at hand, the proof of \cref{thm:nphard} is now basically the same as in the proof by Fortune et al.~\cite{FHW80}.
We present it here for the sake of completeness and slightly change the presentation of the proof to increase readability.

\smallskip

\begin{proof}[Proof of \cref{thm:nphard}]
    First, observe that containment in NP is simple.
    The input consists of at least~$n+k$ bits and a solution consists of~$k$ paths, each consisting of at most~$n$ vertices.
    Hence, a solution can be guessed in $\OOO(nk\log(n))$ non-deterministic time, which is polynomial in the input size.
    Afterwards, it is trivial to check that guessed structure is a directed path from~$s_i$ to~$t_i$ and each vertex appears in at most~$c$ paths.

    The case for~$c=1$ is equivalent to \textsc{Disjoint Paths} for~$k \geq 2$ paths, which was shown to be NP-hard by Fortune et al.~\cite{FHW80}.
    We therefore assume that~$c \geq 2$ and show NP-hardness for~$k=3c-1$.
    Note that by adding additional vertices~$s_i$ and~$t_i$ and an arc~$(s_i,t_i)$ for any~$3c \leq i \leq k$ to the construction, this also shows NP-hardness for any~$k > 3c-1$.
    
    To show NP-hardness for any~$c \geq 2$ and~$k=3c-1$, we fix an arbitrary~$c \geq 2$ and present a reduction from \textsc{3-Sat}.
    To this end, let a formula~$\phi$ in 3-CNF with variables~$\mathcal{V}=\{x_1,x_2,\ldots,x_n\}$ and clauses~$\mathcal{C}=\{C_1,C_2,\ldots,C_m\}$ be given.
    We start with~$2k$ vertices~$s_1,s_2,\ldots,s_k$ and~$t_1,t_2,\ldots,t_k$.
    These will form the terminal pairs~$(s_i,t_i)$ for each~$i \in [k]$.
    Next, we add~$3m$ switch gadgets~$S_1,S_2,\ldots,S_{3m}$.
    We will only refer to the vertices~$a,b,d,e,u,u',v,v',x$ and~$y$ in these gadgets.
    For the sake of conciseness, we will refer to the vertex~$a$ in gadget~$S_i$ as~$a_i$ and the same for all other mentioned vertices.
    Finally, we add vertices~$j_1,j_2,\ldots,j_{n+1}$ and~$o_1,o_2,\ldots,o_{m+1}$ and connect the different parts as follows.
    We add the arc~$(a_1,t_i)$ for each~$i \in [c]$.
    Moreover, we add the arc~$(s_i,b_1)$ for each~$c+1 \leq i \leq 2c$ and the arc~$(s_i,x_1)$ for each~$2c+1 \leq i \leq 3c-1$.
    Next, for each~$i \in [3m-1]$, we connect the gadgets~$S_i$ and~$S_{i+1}$ by adding the arcs~$(a_{i+1},d_i),(e_i,b_{i+1})$, and~$(y_i,x_{i+1})$.
    We add the arcs~$(y_{3m},t_i)$ and~$(e_{3m},t_i)$ for each~$c+1 \leq i \leq 3c-1$.
    We also add the arcs~$(j_{n+1},o_1)$, $(o_{m+1},d_{3m})$, and~$(s_i,j_1)$ for each~$i \in [c]$.
    To conclude the construction, we add variable and clause gadgets.
    To this end, we enumerate the positions in~$\phi$, that is, for a clause~$C_i \in \mathcal{C}$, we say positions~$3(i-1)+1,3(i-1)+2,$ and~$3(i-1)+3$ correspond to it.
    For a variable~$x_i \in \mathcal{V}$, let~$\alpha_i$ and~$\beta_i$ be the number of times~$x_i$ appears positive and negated in~$\phi$, respectively. Additionally, let~$p_1 < p_2 < \ldots < p_{\alpha_i}$ and~$n_1 < n_2 < \ldots < n_{\beta_i}$ be the positions where~$x_i$ appears positive and negated in~$\phi$.
    We construct the following variable gadget for~$x_i$.
    First, we add the arcs~$(j_i,u_{p_1}), (j_i,u_{n_1}), (v_{p_{\alpha_i}},j_{i+1})$, and~$(v_{n_{\beta_i}},j_{i+1})$ (and assume without loss of generality that $x_i$ appears at least once positive and at least once negated).
    Then, for each~$i' \in [\alpha_i-1]$, we add the arc~$(v_{p_{i'}},u_{p_{i'+1}})$.
    Analogously, we add the arc~$(v_{n_{i'}},u_{n_{i'+1}})$ for each~$i' \in [\beta_i-1]$.
    See \cref{fig:reduction} (left side) for an example of variable gadgets.
    \begin{figure}[t!]
        \centering
        \begin{tikzpicture}
            \node[circle,draw,inner sep=3pt,label=below:$s_1$] (s1) at(-1,0) {};
            \node[circle,draw,inner sep=3pt,label=below:$s_2$] (s2) at(-.33,0) {};
            \node at(.33,0) {$\dots$};
            \node[circle,draw,inner sep=3pt,label=below:$s_c$] (sc) at(1,0) {};
            \node[circle,draw,inner sep=3pt,label=left:$j_1$] at(0,1) (a1) {};
            \node[circle,draw,inner sep=3pt,label=left:$u_1$] at(-1,3) (u1) {};
            \node[circle,draw,inner sep=3pt,label=left:$v_1$] at(-1,4) (v1) {};
            \node[circle,draw,inner sep=3pt,label=right:$u_4$] at(1,2) (u4) {};
            \node[circle,draw,inner sep=3pt,label=right:$v_4$] at(1,3) (v4) {};
            \node[circle,draw,inner sep=3pt,label=right:$u_7$] at(1,4) (u7) {};
            \node[circle,draw,inner sep=3pt,label=right:$v_7$] at(1,5) (v7) {};
            \node[circle,draw,inner sep=3pt,label=left:$j_2$] at(0,6) (a2) {};
            \node[circle,draw,inner sep=3pt,label=left:$u_5$] at(-1,8) (u5) {};
            \node[circle,draw,inner sep=3pt,label=left:$v_5$] at(-1,9) (v5) {};
            \node[circle,draw,inner sep=3pt,label=right:$u_2$] at(1,7) (u2) {};
            \node[circle,draw,inner sep=3pt,label=right:$v_2$] at(1,8) (v2) {};
            \node[circle,draw,inner sep=3pt,label=right:$u_8$] at(1,9) (u8) {};
            \node[circle,draw,inner sep=3pt,label=right:$v_8$] at(1,10) (v8) {};
            \node[circle,draw,inner sep=3pt,label=left:$j_3$] at(0,11) (a3) {};
            \node[circle,draw,inner sep=3pt,label=left:$u_3$] at(-1,12) (u3) {};
            \node[circle,draw,inner sep=3pt,label=left:$v_3$] at(-1,13) (v3) {};
            \node[circle,draw,inner sep=3pt,label=left:$u_6$] at(-1,14) (u6) {};
            \node[circle,draw,inner sep=3pt,label=left:$v_6$] at(-1,15) (v6) {};
            \node[circle,draw,inner sep=3pt,label=right:$u_9$] at(1,13) (u9) {};
            \node[circle,draw,inner sep=3pt,label=right:$v_9$] at(1,14) (v9) {};
            \node[circle,draw,inner sep=3pt,label=$j_4$] at(0,16) (a4) {};
            \node[circle,draw,inner sep=3pt,label=$o_1$] at(4,16) (b1) {};
            \node[circle,draw,inner sep=3pt,label=left:$u'_1$] at(3,14) (x1) {};
            \node[circle,draw,inner sep=3pt,label=left:$v'_1$] at(3,13) (y1) {};
            \node[circle,draw,inner sep=3pt,label=right:$u'_2$] at(4,14) (x2) {};
            \node[circle,draw,inner sep=3pt,label=right:$v'_2$] at(4,13) (y2) {};
            \node[circle,draw,inner sep=3pt,label=right:$u'_3$] at(5,14) (x3) {};
            \node[circle,draw,inner sep=3pt,label=right:$v'_3$] at(5,13) (y3) {};
            \node[circle,draw,inner sep=3pt,label=right:$o_2$] at(4,11) (b2) {};
            \node[circle,draw,inner sep=3pt,label=left:$u'_4$] at(3,9) (x4) {};
            \node[circle,draw,inner sep=3pt,label=left:$v'_4$] at(3,8) (y4) {};
            \node[circle,draw,inner sep=3pt,label=right:$u'_5$] at(4,9) (x5) {};
            \node[circle,draw,inner sep=3pt,label=right:$v'_5$] at(4,8) (y5) {};
            \node[circle,draw,inner sep=3pt,label=right:$u'_6$] at(5,9) (x6) {};
            \node[circle,draw,inner sep=3pt,label=right:$v'_6$] at(5,8) (y6) {};
            \node[circle,draw,inner sep=3pt,label=right:$o_3$] at(4,6) (b3) {};
            \node[circle,draw,inner sep=3pt,label=left:$u'_7$] at(3,4) (x7) {};
            \node[circle,draw,inner sep=3pt,label=left:$v'_7$] at(3,3) (y7) {};
            \node[circle,draw,inner sep=3pt,label=right:$u'_8$] at(4,4) (x8) {};
            \node[circle,draw,inner sep=3pt,label=right:$v'_8$] at(4,3) (y8) {};
            \node[circle,draw,inner sep=3pt,label=right:$u'_9$] at(5,4) (x9) {};
            \node[circle,draw,inner sep=3pt,label=right:$v'_9$] at(5,3) (y9) {};
            \node[circle,draw,inner sep=3pt,label=right:$o_4$] at(4,1) (b4) {};
            \node[circle,draw,inner sep=3pt,label=below:$d_9$] at(4,0) (d) {};
            \foreach \i/\j in {s1/a1,s2/a1,sc/a1,a4/b1,b4/d,a1/u1,a1/u4,v4/u7,v1/a2,v7/a2,a2/u5,a2/u2,v2/u8,v5/a3,v8/a3,a3/u3,a3/u9,v3/u6,v6/a4,v9/a4,a4/b1,b1/x1,b1/x2,b1/x3,y1/b2,y2/b2,y3/b2,b2/x4,b2/x5,b2/x6,y4/b3,y5/b3,y6/b3,b3/x7,b3/x8,b3/x9,y7/b4,y8/b4,y9/b4}{
                \draw[->] (\i) -- (\j);
            }
            \foreach \i in {1,2,...,9}{
                \draw[color=color\i,->,thick] (u\i) -- (v\i) node[midway,left] {\color{black} \i};
                \draw[color=color\i,->,thick] (x\i) -- (y\i) node[midway,left] {\color{black} \i};
            }
        \end{tikzpicture}
        \caption{An example of the reduction behind \cref{thm:nphard} for the formula \begin{equation*}
            \phi=(x_1 \lor \overline{x_2} \lor x_3) \land (\overline{x_1} \lor x_2 \lor x_3) \land (\overline{x_1} \lor \overline{x_2} \lor \overline{x_3}).
        \end{equation*}
        Switches are not fully shown and only indicated by two arcs of the same color (and with the same label). These arcs represent paths of length three in the switch from either~$u$ to~$v$ or from~$u'$ to~$v'$.} 
        \label{fig:reduction}
    \end{figure}
    To conclude the construction, we add the following clause gadget for each clause~${C_i \in \mathcal{C}}$.
    We add the six arcs~$(o_i,u'_{3(i-1)+1}),(o_i,u'_{3(i-1)+2}),(o_i,u'_{3(i-1)+3}),(v'_{3(i-1)+1},o_{i+1}),(v'_{3(i-1)+2},o_{i+1}),$ and~$(v'_{3(i-1)+3},o_{i+1})$.
    The right side of \cref{fig:reduction} shows an example of clause gadgets and this concludes the construction.

    Since the reduction can clearly be computed in polynomial time (in~$n+k$), it remains to prove correctness.
    To this end, first assume that the formula~$\phi$ is satisfiable and let~$\beta$ be a satisfying assignment.
    We show that the constructed instance of \cidp{} is a yes-instance.
    We will describe how to construct a path~$P_i$ for each~$i \in [3c-1]$ from~$s_i$ to~$t_i$ such that each vertex is part of at most~$c$ paths.
    
    We first describe how to construct~$P_{\alpha}$ for each~$\alpha \in [c]$.
    Each path~$P_{\alpha}$ starts with the two vertices~$s_{\alpha}$ and~$j_1$.
    Afterwards, they iteratively (in increasing order) pass for each~$i \in [n]$ through all vertices~$u_j$ and~$v_j$ for positions~$j$ where~$x_i$ appears negated if~$\beta$ sets~$x_i$ to true and through all vertices~$u_j$ and~$v_j$ for positions~$j$ where~$x_i$ appears positive if~$\beta$ sets~$x_i$ to false.
    From the last such vertex~$v_j$ they continue to~$j_{i+1}$.
    All~$c$ paths continue from~$j_{n+1}$ to~$o_1$.
    We then pick for each clause~$C_j \in \mathcal{C}$ a position~$r_j$ corresponding to~$C_j$ such that~$C_j$ is satisfied by the variable at position~$r_j$ under assignment~$\beta$.
    Note that such a position always exists as~$\beta$ is a satisfying assignment.
    The~$c$ paths~$P_{\alpha}$ for~$\alpha\in [c]$ then iteratively go through~$o_j$ and~$u'_{r_j}$ and~$v'_{r_j}$ for each~$j \in [m]$.
    From~$o_{m+1}$ they all continue to~$d_{3m}$.
    Next, for each switch gadget~$S_j$ (in decreasing order), each path~$P_{\alpha}$ with~$\alpha \in [c]$ continues with the path~$(d_j,h,v,g,w_\alpha)$ (we omit the index~$j$ from now on for the sake of readability as we will only consider a single switch for now) if the variable~$x_i$ in position~$j$ (i) appears positive in position~$j$ and is set to true by~$\beta$ or (ii) appears negated in position~$j$ and is set to false by~$\beta$.
    Otherwise (if it appears positive and is set to false or appears negated and is set to true), the paths~$P_{\alpha}$ with~$\alpha \in [c]$ continue with the path~$(d,h',v',g',w'_\alpha)$.
    The path~$P_c$ then goes to vertex~$y$ and continues with the path~$(y,\ell_c,\ell_{c-1},\ldots,\ell_1,q_1,p_1,a)$.
    The remaining~$c-1$ paths~$P_{\alpha}$ with~$\alpha \in [c-1]$ continue with the vertex~$z$ and then the path~$(z,r_c,r_{c-1},\ldots,r_1,q_{i+1},p_{i+1},a)$.
    After passing through all switch gadgets in this fashion, each path~$P_{\alpha}$ with~$\alpha \in [c]$ ends with the direct arc from~$a_1$ to~$t_i$.

    We next describe how to construct~$P_i$ for each~$c+1 \leq i \leq 3c-1$.
    We consider three cases, $i < 2c$, $i = 2c$ and~$i>2c$.
    For all three cases, the paths will pass through all switch gadgets in increasing order.
    In the first two cases, they will go from~$b$ to~$e$ and in the last case, they will go from~$x$ to~$y$.
    We will also consider two cases for each switch~$S_j$ based on whether the variable~$x_i$ appearing at position~$j$ is set by~$\beta$ in such a way that it satisfies the clause to which position~$j$ corresponds or not.
    If it does (because~$x_i$ appears positive and is set to true by~$\beta$ or because it appears negated and is set to false by~$\beta$), then we say that switch~$S_j$ is \emph{agreeing} with~$\beta$ and otherwise it is \emph{disagreeing} with~$\beta$.
    For each~$c+1 \leq i < 2c$, the path~$P_i$ consists of the arc~$(s_i,b_1)$.
    Next, for each switch~$S_j$ in increasing order, the path continues with the subpath~$(b,q_1,q_2,\ldots,q_c,f',w'_1,w'_2,\ldots,w'_c,g',u,e)$ if~$S_j$ agrees with~$\beta$ and with the path~$(b,q_1,q_2,\ldots,q_c,f',w_1,w_2,\ldots,w_c,g,u',e)$ if~$S_j$ disagrees with~$\beta$.
    In either case, it goes from~$e_j$ to~$b_{j+1}$ for each~$j \in [3m-1]$ and from~$e_{3m}$ to~$t_i$.
    The path~$P_{2c}$ starts with the arc~$(s_{2c},b_1)$.
    Next for each switch~$S_j$, $P_{2c}$ passes through~$S_j$ from~$b_j$ to~$e_j$ as shown next and then to~$b_{j+1}$ (or to~$t_{2c}$ if~$j=3m$).
    Within~$S_j$, the path is~$(b,p_1,p_2,\ldots,p_c,x,r_c,\ell_c,f,w'_1,w'_2,\ldots,w'_c,g',u,e)$ if~$S_j$ agrees with~$\beta$ and it is~$(b,p_1,p_2,\ldots,p_c,x,r_c,\ell_c,f,w_1,w_2,\ldots,w_c,g,u',e)$ if~$S_j$ disagrees with~$\beta$.
    Finally, for each~$2c+1 \leq i \leq 3c-1$, the path~$P_i$ starts with the arc~$(s_i,x_1)$.
    It then continues through each switch gadget~$S_j$ from~$x_j$ to~$y_j$ as shown next.
    From there it goes to~$x_{j+1}$ (or to~$t_i$ if~$j = 3m$).
    Let~$i' = i - 2c$ for each~$2c+1 \leq i \leq 3c-1$.
    If~$S_j$ agrees with~$\beta$, then~$P_i$ takes the path~$(x,r_{i'},\ell_{i'},f,w_1,w_2,\ldots,w_c,y)$ through~$S_j$.
    If~$S_j$ disagrees with~$\beta$, then~$P_i$ takes the path~$(x,r_{i'},\ell_{i'},f,w'_1,w'_2,\ldots,w'_c,y)$ through~$S_j$.

    Since the path~$P_i$ goes from~$s_i$ to~$t_i$ for each~$i \in [3c-1]$, it only remains to prove that each vertex is part of at most~$c$ of the constructed paths.
    Note that no terminal vertex~$s_i$ or~$t_i$ is used by any constructed path~$P_j$ for any~$i \neq j$.
    Moreover, the paths~$P_i$ with~$c+1 \leq i \leq 3c-1$ only contain the terminal vertices and vertices in switch gadgets.
    Hence, any vertex outside a switch gadget is part of at most~$c$ solution paths.
    It remains to show that each vertex in a switch gadget is part of at most~$c$ constructed paths.
    Consider an arbitrary switch~$S_q$.
    We will show that each vertex in the switch is part of at most~$c$ constructed paths.
    Since the switch~$S_q$ is chosen arbitrarily, this will show that each vertex in the constructed graph is part of at most~$c$ paths and hence conclude the forward direction.
    The vertices~$a,z,$ and~$d$ are only part of paths~$P_i$ with~$i \in [c]$ by construction.
    The vertices~$b,f'$, and~$e$ are only part of paths~$P_i$ with~$c+1 \leq i \leq 2c$.
    The vertices~$x$ and~$f$ are only part of paths~$P_i$ with~$2c \leq i \leq 3c-1$.
    Each vertex~$p_i$ with~$i \geq 2$ is only part of paths~$P_{i-1}$ and~$P_{2c}$ and~$p_1$ is only part of~$P_c$ and~$P_{2c}$.
    Each vertex~$q_i$ with~$q \geq 2$ is only part of paths~$P_{i-2}$ and~$P_j$ with~$c+1 \leq j < 2c$.
    The vertex~$q_1$ is only part of paths~$P_j$ with~$c \leq j < 2c$.
    Each vertex~$\ell_i$ with~$i < c$ is only part of~$P_c$ and~$P_{2c+i}$ and~$\ell_c$ is only part of~$P_c$ and~$P_{2c}$.
    Each vertex~$r_i$ with~$i < c$ is only part of~$P_{2c+i}$ and~$P_j$ with~$1 \leq j < c$.
 Vertex~$r_c$ is only part of~$P_{2c}$ and~$P_j$ with~$1 \leq j < c$.
    Vertex~$y$ is only part of paths~$P_1$ and~$P_i$ with~$2c+1 \leq i \leq 3c-1$.
    For the remaining analysis, we make a case distinction whether the chosen switch agrees with~$\beta$ or not.
    We start with the case where $S_q$ agrees with~$\beta$.
    Then, each vertex~$w_i$ is only part of paths~$P_i$ and~$P_j$ for~$2c+1 \leq j \leq 3c-1$.
    Each vertex~$w'_i$ is only part of paths~$P_j$ with~$c+1 \leq j \leq 2c$.
    Since~$S_q$ agrees with~$\beta$, note that by construction, no solution path passes through~$u$ and~$v$ in the variable gadget of the variable appearing at position~$q$.
    Hence, vertices~$g'$ and~$u$ are only part of paths~$P_i$ with~$c+1 \leq i \leq 2c$.
    Vertices~$g,v,h,u',h',$ and~$v'$ are only part of paths~$P_i$ with~$i \in [c]$ (the former three when initially going through the switch and the latter three potentially when passing through a clause gadget).
    
    We next analyze the case where~$S_q$ disagrees with~$\beta$.
    Then, each vertex~$w_i$ is only part of paths~$P_j$ for~$c+1 \leq j \leq 2c$.
    Each vertex~$w'_i$ is only part of paths~$P_i$ and~$P_j$ with~$2c+1 \leq j \leq 3c-1$.
    Since~$S_q$ disagrees with~$\beta$ and by construction, no solution path passes through~$u'$ and~$v'$ in the clause gadget to which~$q$ corresponds to (as we chose for each clause gadget to pass through a position such that the variable at that position satisfies the clause under assignment~$\beta$).
    Hence, vertices~$g$ and~$u'$ are only part of paths~$P_i$ with~$c+1 \leq i \leq 2c$.
    Vertices~$g',v',h',u,h,$ and~$v$ are only part of paths~$P_i$ with~$i \in [c]$ (the latter three when passing through the variable gadget appearing at position~$q$).
    We have shown for all vertices in all switch gadgets that they are part of at most~$c$ of the constructed solution paths.   
    This concludes the proof of the forward direction.
    
    In the other direction, assume that the constructed instance of \cidp{} has a solution.
    As in the proof of \cref{lem:switch}, it will be convenient to bundle certain paths.
    We say that the paths from~$s_i$ to~$t_i$ are red for all~$i \leq c$ and blue for all~$i > c$.
    It will again be convenient to speak about the reverse of the red paths and hence we will say that red paths start in~$t_i$, end in~$s_i$ and use arcs in the ``wrong'' direction.
    Note that by construction, the second vertex in all red paths is~$a_1$.
    Moreover, the~$c$ (blue) paths from~$s_i$ to~$t_i$ for~$c+1 \leq i \leq 2c$ have~$b_1$ as their second vertex and the remaining~$c-1$ (also blue) paths have~$x_1$ as their second vertex as each~$s_i$ only has one incident arc.
    Hence all requirements of \cref{lem:switch} are satisfied and it implies that all red paths pass through~$d_1$, $c-1$ blue paths leave the first switch~$S_1$ at~$y_1$ and the remaining~$c$ blue paths leave~$S_1$ at~$e_1$.
    Since~$d_1$ only has one incoming arc (from vertex~$a_2$) and~$y_1$ and~$e_1$ each only have one outgoing arc (to~$x_2$ and~$b_2$, respectively), this shows that all requirements of \cref{lem:switch} are also satisfied for~$S_2$.
    The same argument applies inductively for each switch gadget~$S_i$.
    We next focus on the path~$R_1$ from~$t_1$ to~$s_1$ in the assumed solution.
    As shown above, the path has to go through all switch gadgets and pass through~$d_{3m}$.
    The only incoming arc of~$d_{3m}$ comes from~$o_{m+1}$ (note that red paths use arcs in the opposite direction).
    By construction, the path now has to go through all clause and variable gadgets.
    To conclude the proof, we will show that this path encodes a satisfying assignment.

    By construction, for each clause~$C_j$, the path~$R_1$ has to pass through~$v'_{r_j}$ and~$u'_{r_j}$ for some position~$r_j$ corresponding to~$C_j$.
    For each variable~$x_i$, the path~$R_1$ has to pass through either all vertices~$u_{p_j}$ and~$v_{p_j}$ for positions~$p_j$ where~$x_i$ appears positive in~$\phi$ or all vertices~$u_{n_j}$ and~$v_{n_j}$ for positions~$n_j$ where~$x_i$ appears negated in~$\phi$.
    In the former case, we set~$x_i$ to false and in the latter case, we set~$x_i$ to true.
    We will show that this assignment satisfies all clauses.
    To this end, consider any clause~$C_j$ and the variable~$x_i$ appearing at the position~$r_j$ taken by~$R_1$.
    Since \cref{lem:switch} applies to all switch gadgets (in particular to~$S_{r_j}$), the path~$R_1$ cannot simultaneously pass through~$u_{r_j}$ and~$u'_{r_j}$.
    Since it passes through~$u'_{r_j}$, it does not pass through~$u_{r_j}$.
    If~$x_i$ appears positive in position~$r_j$, then~$R_1$ passes through all vertices~$u_{j'}$ for positions~$j'$ where~$x_i$ appears negated and hence we set~$x_i$ to true and~$C_j$ is satisfied by our constructed assignment.
    If~$x_i$ appears negated in position~$r_j$, then~$R_1$ passes through all vertices~$u_{j'}$ for positions~$j'$ where~$x_i$ appears positive in~$\phi$ and hence we set~$x_i$ to false and~$C_j$ is again satisfied by our constructed assignment.
    Since the clause~$C_j$ was chosen arbitrarily, this shows that all clauses are satisfied by the constructed assignment.
    Thus, $\phi$ is satisfiable and this concludes the proof.
\end{proof}

We mention in passing that since \textsc{3-Sat} cannot be solved in~$2^{o(m)}$ time unless the exponential time hypothesis fails \cite{IP01,IPZ01} and the size of the construction in the proof of \cref{thm:nphard} is linear in the number of clauses for any constant~$c$, this also shows that \cidp{} cannot be solved in~$2^{o(n+m)}$ time unless the exponential time hypothesis fails.

\begin{corollary}
    Assuming the ETH, \cidp{} cannot be solved in~$2^{o(n+m)}$~time for any constant~$c \geq 1$.
\end{corollary}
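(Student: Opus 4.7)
The plan is to invoke the Exponential Time Hypothesis together with the Sparsification Lemma and then observe that the reduction in the proof of \cref{thm:nphard} is \emph{linear} in the size of the input formula, so that it transfers subexponential algorithms in the target problem back to \textsc{3-Sat}.

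First, I would recall the precise form of the lower bound. By the Sparsification Lemma of Impagliazzo, Paturi, and Zane \cite{IPZ01} combined with the ETH \cite{IP01}, \textsc{3-Sat} on an instance with $n_0$ variables and $m_0$ clauses cannot be decided in time $2^{o(n_0 + m_0)}$. This is the only complexity-theoretic input needed.

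Second, I would carry out a linear-size analysis of the reduction in the proof of \cref{thm:nphard}. Fix a constant $c \geq 1$ and let $\phi$ be an input 3-CNF formula with $n_0$ variables and $m_0$ clauses. Inspecting the construction, the output digraph $G$ consists of: (i) $2(3c-1) = O_c(1)$ terminals; (ii) $3m_0$ switch gadgets, each of size bounded by a function of $c$ alone; (iii) the $n_0+1$ vertices $j_i$ and the $m_0+1$ vertices $o_i$; (iv) variable gadgets whose total contribution is $O(m_0)$ vertices and arcs, because each of the $3m_0$ positions in $\phi$ contributes exactly the two vertices $u_j, v_j$; and (v) clause gadgets contributing $O(m_0)$ vertices and arcs in total. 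Summing up and accounting for the connecting arcs yields $|V(G)| + |E(G)| = O_c(n_0 + m_0)$ with the hidden constant depending only on $c$.

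Third, I would conclude by a standard contrapositive. Suppose, for contradiction, that for some fixed $c \geq 1$ the problem \cidp{} admitted an algorithm running in time $2^{o(n+m)}$, where $n$ and $m$ denote the number of vertices and edges of the input digraph. Pipelining the polynomial-time reduction of \cref{thm:nphard} (whose running time is dominated by the $O(n_0+m_0)$ output size) with such an algorithm would decide \textsc{3-Sat} in time $2^{o(n_0+m_0)}$, contradicting the ETH. The main thing to get right is step two, the bookkeeping in the size analysis, but since each gadget in the construction has size bounded by a constant depending only on $c$ and the gadgets are attached in a purely sequential fashion, this is a direct inspection rather than a genuine obstacle.
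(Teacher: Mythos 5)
Your proposal is correct and follows the same route as the paper: invoke ETH together with the Sparsification Lemma to get a $2^{o(n_0+m_0)}$ lower bound for \textsc{3-Sat}, observe that the reduction of \cref{thm:nphard} produces an instance of size $O_c(n_0+m_0)$, and conclude by contrapositive. The only difference is that you spell out the size bookkeeping explicitly (with a harmless double-count of the $u_j,v_j$ vertices, which already belong to the switch gadgets), whereas the paper states the linearity as a one-line remark.
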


\bibliographystyle{alphaurl} 
\bibliography{bibliography}

\newcommand{\etalchar}[1]{$^{#1}$}
\begin{thebibliography}{HKMM24}

\bibitem[AKMR19]{AmiriKMR19}
Saeed~Akhoondian Amiri, Stephan Kreutzer, D{\'{a}}niel Marx, and Roman
  Rabinovich.
\newblock Routing with congestion in acyclic digraphs.
\newblock {\em Inf. Process. Lett.}, 151, 2019.
\newblock URL: \url{https://doi.org/10.1016/j.ipl.2019.105836}, \href
  {https://doi.org/10.1016/J.IPL.2019.105836}
  {\path{doi:10.1016/J.IPL.2019.105836}}.

\bibitem[BJG18]{bang2018classes}
J{\o}rgen Bang-Jensen and Gregory Gutin.
\newblock {\em Classes of directed graphs}, volume~11.
\newblock Springer, 2018.

\bibitem[BM08]{bondy2008graph}
John~Adrian Bondy and Uppaluri Siva~Ramachandra Murty.
\newblock {\em Graph theory}.
\newblock Springer Publishing Company, Incorporated, 2008.

\bibitem[CCLS23]{CamposCLS2023}
Victor Campos, Jonas Costa, Raul Lopes, and Ignasi Sau.
\newblock New {{Menger-Like Dualities}} in {{Digraphs}} and {{Applications}} to
  {{Half-Integral Linkages}}.
\newblock {\em LIPIcs, Volume 274, ESA 2023}, 274:30:1--30:18, 2023.
\newblock \href {https://doi.org/10.4230/LIPICS.ESA.2023.30}
  {\path{doi:10.4230/LIPICS.ESA.2023.30}}.

\bibitem[CE15]{ChekuriE2015}
Chandra Chekuri and Alina Ene.
\newblock The all-or-nothing flow problem in directed graphs with symmetric
  demand pairs.
\newblock {\em Mathematical Programming}, 154(1-2):249--272, December 2015.
\newblock \href {https://doi.org/10.1007/s10107-014-0856-z}
  {\path{doi:10.1007/s10107-014-0856-z}}.

\bibitem[CEP16]{ChekuriEP2016}
Chandra Chekuri, Alina Ene, and Marcin Pilipczuk.
\newblock Constant {{Congestion Routing}} of {{Symmetric Demands}} in {{Planar
  Directed Graphs}}.
\newblock In {\em 43rd {{International Colloquium}} on {{Automata}},
  {{Languages}}, and {{Programming}}, {{ICALP}} 2016, {{July}} 11-15, 2016,
  {{Rome}}, {{Italy}}}, pages 7:1--7:14, 2016.
\newblock \href {https://doi.org/10.4230/LIPIcs.ICALP.2016.7}
  {\path{doi:10.4230/LIPIcs.ICALP.2016.7}}.

\bibitem[CFK{\etalchar{+}}15]{cygan2015parameterized}
Marek Cygan, Fedor~V Fomin, {\L}ukasz Kowalik, Daniel Lokshtanov, D{\'a}niel
  Marx, Marcin Pilipczuk, Micha{\l} Pilipczuk, and Saket Saurabh.
\newblock {\em Parameterized algorithms}, volume~5.
\newblock Springer, 2015.

\bibitem[CLMS19]{CamposLMS2019}
Victor Campos, Raul Lopes, Ana~Karolinna Maia, and Ignasi Sau.
\newblock Adapting {{The Directed Grid Theorem}} into an {{FPT Algorithm}}.
\newblock {\em Electronic Notes in Theoretical Computer Science}, 346:229--240,
  August 2019.
\newblock \href {https://doi.org/10.1016/j.entcs.2019.08.021}
  {\path{doi:10.1016/j.entcs.2019.08.021}}.

\bibitem[CMPP13]{planardp}
Marek Cygan, Daniel Marx, Marcin Pilipczuk, and Michal Pilipczuk.
\newblock The planar directed k-vertex-disjoint paths problem is
  fixed-parameter tractable.
\newblock In {\em 2013 IEEE 54th Annual Symposium on Foundations of Computer
  Science}, pages 197--206, 2013.
\newblock \href {https://doi.org/10.1109/FOCS.2013.29}
  {\path{doi:10.1109/FOCS.2013.29}}.

\bibitem[EMW17]{EdwarsMW2017}
Katherine Edwards, Irene Muzi, and Paul Wollan.
\newblock {Half-Integral Linkages in Highly Connected Directed Graphs}.
\newblock In Kirk Pruhs and Christian Sohler, editors, {\em 25th Annual
  European Symposium on Algorithms (ESA 2017)}, volume~87 of {\em Leibniz
  International Proceedings in Informatics (LIPIcs)}, pages 36:1--36:12,
  Dagstuhl, Germany, 2017. Schloss Dagstuhl -- Leibniz-Zentrum f{\"u}r
  Informatik.
\newblock URL:
  \url{https://drops.dagstuhl.de/entities/document/10.4230/LIPIcs.ESA.2017.36},
  \href {https://doi.org/10.4230/LIPIcs.ESA.2017.36}
  {\path{doi:10.4230/LIPIcs.ESA.2017.36}}.

\bibitem[FHW80]{FHW80}
Steven Fortune, John~E. Hopcroft, and James Wyllie.
\newblock The directed subgraph homeomorphism problem.
\newblock {\em Theoretical Computer Science}, 10:111--121, 1980.

\bibitem[Fra90]{frank1990packing}
Andr{\'a}s Frank.
\newblock Packing paths, cuts, and circuits-a survey.
\newblock {\em Paths, Flows and VLSI-Layout}, 49:100, 1990.

\bibitem[GKKK22]{GiannopoulouKKK2022}
Archontia~C. Giannopoulou, Ken-ichi Kawarabayashi, O-joung Kwon, and Stephan
  Kreutzer.
\newblock Directed {{Tangle Tree-Decompositions}} and {{Applications}}.
\newblock In {\em Proceedings of the of the 33rd {{Annual ACM-SIAM Symposium}}
  on {{Discrete Algorithms}} ({{SODA}} 2022)}, 2022.

\bibitem[HKMM24]{HatzelKMM2024a}
Meike Hatzel, Stephan Kreutzer, Marcelo~Garlet Milani, and Irene Muzi.
\newblock Cycles of {{Well-Linked Sets}} and an {{Elementary Bound}} for the
  {{Directed Grid Theorem}}.
\newblock In {\em 2024 {{IEEE}} 65th {{Annual Symposium}} on {{Foundations}} of
  {{Computer Science}} ({{FOCS}})}, pages 1--20, Chicago, IL, USA, October
  2024. IEEE.
\newblock \href {https://doi.org/10.1109/FOCS61266.2024.00011}
  {\path{doi:10.1109/FOCS61266.2024.00011}}.

\bibitem[IP01]{IP01}
Russell Impagliazzo and Ramamohan Paturi.
\newblock On the complexity of $k$-{SAT}.
\newblock {\em Journal of the {ACM}}, 62(2):367--375, 2001.

\bibitem[IPZ01]{IPZ01}
Russell Impagliazzo, Ramamohan Paturi, and Francis Zane.
\newblock Which problems have strongly exponential complexity.
\newblock {\em Journal of Computer and System Sciences}, 63(4):512--530, 2001.

\bibitem[JRST01]{JohnsonRST2001}
Thor Johnson, Neil Robertson, Paul~D. Seymour, and Robin Thomas.
\newblock Directed {{Tree-Width}}.
\newblock {\em J. Comb. Theory, Ser. B}, 82(1):138--154, 2001.

\bibitem[Kar75]{Karp75}
Richard~M. Karp.
\newblock On the computational complexity of combinatorial problems.
\newblock {\em Networks}, 5(4):45--68, 1975.
\newblock \href {https://doi.org/10.1002/NET.1975.5.1.45}
  {\path{doi:10.1002/NET.1975.5.1.45}}.

\bibitem[KK15]{KawarabayashiK2015}
Ken-ichi Kawarabayashi and Stephan Kreutzer.
\newblock The {{Directed Grid Theorem}}.
\newblock In Rocco~A. Servedio and Ronitt Rubinfeld, editors, {\em Proceedings
  of the {{Forty-Seventh Annual ACM}} on {{Symposium}} on {{Theory}} of
  {{Computing}}, {{STOC}} 2015, {{Portland}}, {{OR}}, {{USA}}, {{June}} 14-17,
  2015}, pages 655--664. ACM, 2015.
\newblock \href {https://doi.org/10.1145/2746539.2746586}
  {\path{doi:10.1145/2746539.2746586}}.

\bibitem[KK18]{KreutzerK2018}
Stephan Kreutzer and O.-joung Kwon.
\newblock Digraphs of {{Bounded Width}}.
\newblock In J{\o}rgen {Bang-Jensen} and Gregory~Z. Gutin, editors, {\em
  Classes of {{Directed Graphs}}}, Springer {{Monographs}} in {{Mathematics}},
  pages 405--466. Springer, 2018.
\newblock \href {https://doi.org/10.1007/978-3-319-71840-8_9}
  {\path{doi:10.1007/978-3-319-71840-8_9}}.

\bibitem[KKR12]{kkr2012}
K~Kawarabayashi, Y~Kobayashi, and B~Reed.
\newblock The disjoint paths problem in quadratic time.
\newblock {\em Journal of Combinatorial Theory, Series B}, 102:424--435, 2012.

\bibitem[Kle98]{kleinberg}
Jon~M. Kleinberg.
\newblock Decision algorithms for unsplittable flow and the half-disjoint paths
  problem.
\newblock In {\em Proceedings of the Thirtieth Annual ACM Symposium on Theory
  of Computing}, STOC '98, page 530–539, New York, NY, USA, 1998. Association
  for Computing Machinery.
\newblock \href {https://doi.org/10.1145/276698.276867}
  {\path{doi:10.1145/276698.276867}}.

\bibitem[KO14]{KreutzerO2014}
Stephan Kreutzer and Sebastian Ordyniak.
\newblock Width-{{Measures}} for {{Directed Graphs}} and {{Algorithmic
  Applications}}.
\newblock In {\em Quantitative {{Graph Theory}}}. CRC Press, 2014.

\bibitem[KPS24]{2024minor}
Tuukka Korhonen, Micha{\l} Pilipczuk, and Giannos Stamoulis.
\newblock Minor containment and disjoint paths in almost-linear time.
\newblock In {\em 2024 IEEE 65th Annual Symposium on Foundations of Computer
  Science (FOCS)}, pages 53--61. IEEE, 2024.

\bibitem[KR09]{kawarabayashi2009nearly}
Ken-ichi Kawarabayashi and Bruce Reed.
\newblock A nearly linear time algorithm for the half integral parity disjoint
  paths packing problem.
\newblock In {\em Proceedings of the Twentieth Annual ACM-SIAM Symposium on
  Discrete Algorithms}, pages 1183--1192. SIAM, 2009.

\bibitem[KW10]{unique}
Ken{-}ichi Kawarabayashi and Paul Wollan.
\newblock A shorter proof of the graph minor algorithm: the unique linkage
  theorem.
\newblock In Leonard~J. Schulman, editor, {\em Proceedings of the 42nd {ACM}
  Symposium on Theory of Computing, {STOC} 2010, Cambridge, Massachusetts, USA,
  5-8 June 2010}, pages 687--694. {ACM}, 2010.
\newblock \href {https://doi.org/10.1145/1806689.1806784}
  {\path{doi:10.1145/1806689.1806784}}.

\bibitem[LS22]{lopes2022relaxation}
Raul Lopes and Ignasi Sau.
\newblock A relaxation of the directed disjoint paths problem: A global
  congestion metric helps.
\newblock {\em Theoretical Computer Science}, 898:75--91, 2022.

\bibitem[LS24]{lopes2024constantcongestionlinkagespolynomially}
Raul Lopes and Ignasi Sau.
\newblock Constant congestion linkages in polynomially strong digraphs in
  polynomial time, 2024.
\newblock URL: \url{https://arxiv.org/abs/2409.03873}, \href
  {https://arxiv.org/abs/2409.03873} {\path{arXiv:2409.03873}}.

\bibitem[Mil24]{milanidigraph}
Marcelo~Garlet Milani.
\newblock {\em Digraph tango: from structure to algorithms and back again}.
\newblock Technische Universitaet Berlin (Germany), 2024.

\bibitem[MP93]{middendorf1993complexity}
Matthias Middendorf and Frank Pfeiffer.
\newblock On the complexity of the disjoint paths problem.
\newblock {\em Combinatorica}, 13(1):97--107, 1993.

\bibitem[MPRS22]{MasarikPRS2022}
Tom{\'a}{\v s} Masa{\v r}{\'i}k, Marcin Pilipczuk, Pawe{\l} Rza{\.z}ewski, and
  Manuel Sorge.
\newblock Constant {{Congestion Brambles}} in {{Directed Graphs}}.
\newblock {\em SIAM Journal on Discrete Mathematics}, 36(2):922--938, June
  2022.
\newblock \href {https://doi.org/10.1137/21M1417661}
  {\path{doi:10.1137/21M1417661}}.

\bibitem[RS95]{GMXIII}
N.~Robertson and P.D. Seymour.
\newblock Graph {{Minors}}. {{XIII}}. {{The Disjoint Paths Problem}}.
\newblock {\em Journal of Combinatorial Theory, Series B}, 63(1):65--110,
  January 1995.
\newblock \href {https://doi.org/10.1006/jctb.1995.1006}
  {\path{doi:10.1006/jctb.1995.1006}}.

\bibitem[RS09]{GMXXI}
Neil Robertson and Paul Seymour.
\newblock Graph minors. {{XXI}}. {{Graphs}} with unique linkages.
\newblock {\em Journal of Combinatorial Theory, Series B}, 99(3):583--616, May
  2009.
\newblock \href {https://doi.org/10.1016/j.jctb.2008.08.003}
  {\path{doi:10.1016/j.jctb.2008.08.003}}.

\bibitem[Sch94]{lex}
Alexander Schrijver.
\newblock Finding $k$ disjoint paths in a directed planar graph.
\newblock {\em SIAM J. Comput.}, 23(4):780–788, August 1994.
\newblock \href {https://doi.org/10.1137/S0097539792224061}
  {\path{doi:10.1137/S0097539792224061}}.

\end{thebibliography}

\end{document}